\newtheorem{DE}{Definition}[section]
\newcommand {\sm} {\setminus}
\newcommand{\qed}{\relax\ifmmode\hskip2em\Box\else\unskip\nobreak\hfill$\Box$\fi}
\newcommand{\tp}{\!-\!}
\newtheorem{theorem}[DE]{Theorem}
\newtheorem{lemma}[DE]{Lemma}
\theoremstyle{break}\theorembodyfont{\rmfamily}}
\theoremstyle{break}\theorembodyfont{\rmfamily}}
\newcounter{claim}
\newenvironment{proof}[1][]%
	{\noindent {\setcounter{claim}{0}\sc proof --- }{#1}{}}{\qed\vspace{2ex}}
\newenvironment{claim}[1][]%
	{\refstepcounter{claim}\vspace{1ex}\noindent {(\it\arabic{claim}) {#1}{}}\it}{\vspace{1ex}}
\newenvironment{proofclaim}[1][]%
	{\noindent {}{#1}{}}{ This proves~(\arabic{claim}).\vspace{1ex}}
\newcommand{\C}[2]{{\cal C}^{\text{\scriptsize\sc #1}}_{\text{\scriptsize\sc #2}}}
\newcommand{\R}{\ensuremath{\mathbb{R}}}
\newcommand{\N}{\ensuremath{\mathbb{N}}}
\begin{document}

\title{Combinatorial optimization with 2-joins}

\date{July 4, 2011
}
\author{Nicolas Trotignon\thanks{CNRS, LIP -- ENS Lyon (France),\
    email: nicolas.trotignon@ens-lyon.fr.  Partially supported by the
    French \emph{Agence Nationale de la Recherche} under reference
    \textsc{anr Heredia 10 jcjc 0204 01}.}~~and Kristina
  Vu\v{s}kovi\'c\thanks{School of Computing, University of Leeds,
    Leeds LS2 9JT, UK and Faculty of Computer Science, Union
    University, Knez Mihailova 6/VI, 11000 Belgrade, Serbia. E-mail:
    k.vuskovic@leeds.ac.uk.  Partially supported by Serbian Ministry
    of Education and Science grants III44006 and OI174033 and EPSRC
    grant EP/H021426/1.  \newline The two authors are also supported
    by PHC Pavle Savi\'c grant jointly awarded by EGIDE, an agency of
    the French Minist\`ere des Affaires \'etrang\`eres et
    europ\'eennes, and Serbian Ministry for Science and Technological
    Development. }}

\maketitle

\begin{abstract}
  A 2-join is an edge cutset that naturally appears in decomposition of
  several classes of graphs closed under taking induced subgraphs,
  such as perfect graphs and claw-free graphs.  In this paper we
  construct combinatorial polynomial time algorithms for finding a
  maximum weighted clique, a maximum weighted stable set and an
  optimal coloring for a class of perfect graphs decomposable by
  2-joins: the class of perfect graphs that do not have a balanced
  skew partition, a 2-join in the complement, nor a homogeneous
  pair. The techniques we develop are general enough to be easily
  applied to finding a maximum weighted stable set for another class
  of graphs known to be decomposable by 2-joins, namely the class of
  even-hole-free graphs that do not have a star cutset.

  We also give a simple class of graphs decomposable by 2-joins into
  bipartite graphs and line graphs, and for which finding a maximum
  stable set is NP-hard.  This shows that having holes all of the same
  parity gives essential properties for the use of 2-joins in
  computing stable sets.
\end{abstract}

\noindent AMS Mathematics Subject Classification: 05C17, 05C75, 05C85, 68R10

\noindent Key words: combinatorial optimization, maximum clique,
minimum stable set, coloring, decomposition, structure, 2-join,
perfect graphs, Berge graphs, even-hole-free graphs.

\maketitle

\section{Introduction}

In this paper all graphs are simple and finite.  We say that a graph
$G$ {\em contains} a graph $F$ if $F$ is isomorphic to an induced
subgraph of $G$, and it is $F$-free if it does not contain $F$.  A
\emph{hole} in a graph is an induced cycle of length at least~4.  An
\emph{antihole} is the complement of a hole.  A graph $G$ is said to
be \emph{perfect} if for every induced subgraph $G'$ of $G$, the
chromatic number of $G'$ is equal to the maximum size of a clique
of~$G'$.  A graph is said to be \emph{Berge} if it does not contain an
odd hole nor an odd antihole.  In 1961, Berge~\cite{berge:61}
conjectured that every Berge graph is perfect.  This was known as the
\emph{Strong Perfect Graph Conjecture} (SPGC), it was an object of
much research until it was finally proved by Chudnovsky, Robertson,
Seymour and Thomas in 2002~\cite{chudnovsky.r.s.t:spgt}.  So Berge
graphs and perfect graphs are the same class of graphs, but we prefer
to write ``Berge'' for results which rely on the structure of the
graphs, and ``perfect'' for results which rely on the properties of
their colorings.  We now explain the motivation for this paper and
describe informally the results.  We use several technical notions
that will be defined precisely later.

\subsection{Optimization with decomposition}

In the 1980's, Gr\"ostchel, Lov\'asz and Schrijver~\cite{gls,
  gls:color} devised a polynomial time algorithm that optimally colors
any perfect graph.  This algorithm relies on the ellipsoid method and
consequently is impractical.  Finding a purely combinatorial
polynomial time algorithm is still an open question.  In fact, after
the resolution of the SPGC and the construction of polynomial time
recognition algorithm for Berge graphs \cite{cclsv}, this is the key
open problem in the area.

The proof of the SPGC in~\cite{chudnovsky.r.s.t:spgt} was obtained
through a \emph{decomposition theorem} for Berge graphs.  So, it is a
natural question to ask whether this decomposition theorem can be used
for coloring and other combinatorial optimization problems.  Up to
now, it seems that the decomposition theorem is very difficult to use.
Let us explain why.  In a connected graph $G$, a subset of vertices
and edges is a {\em cutset} if its removal disconnects $G$.  A
Decomposition Theorem for a class of graphs ${\cal C}$ is of the
following form.

\vspace{2ex}

\noindent
{\bf Decomposition Theorem:} {\em If $G$ belongs to ${\cal C}$
then $G$ is either ``basic'' or $G$ has some particular cutset.}

\vspace{2ex}

Decomposition Theorems can be used for proving theorems. For example,
the SPGC was proved using the decomposition theorem for Berge
graphs~\cite{chudnovsky.r.s.t:spgt}, by ensuring that ``basic'' graphs
are simple in the sense that they are easily proved to be perfect
directly, and the cutsets used have the property that they
cannot occur in a minimum counter-example to the SPGC.

Decomposition theorems can be used also for algorithms.  For instance,
they yielded many recognition algorithms.  To recognize a class ${\cal
  C}$ with a decomposition theorem, ``basic'' graphs need to be simple
in the sense that they can easily be recognized, and the cutsets used
need to have the following property.  The removal of a cutset  from
a graph $G$ disconnects $G$ into two or more connected components.
From these components {\em blocks of decomposition} are constructed by
adding some more vertices and edges. A decomposition is {\em ${\cal
    C}$-preserving} if it satisfies the following: $G$~belongs to
${\cal C}$ if and only if all the blocks of decomposition belong to
${\cal C}$. A recognition algorithm takes a graph $G$ as input and
decomposes it using ${\cal C}$-preserving decompositions into a
polynomial number of basic blocks, which are then checked, in
polynomial time, whether they belong to ${\cal C}$.  This is an ideal
scenario, and it worked for example for obtaining recognition
algorithms for regular matroids (using $k$-separations, $k=1,2,3$)
\cite{sey}, max-flow min-cut matroids (using 2-sums and $\Delta$-sums)
\cite{tr1}, and graphs that do not contain a cycle with a unique chord
(using 1-joins and vertex cutsets of size 1 or 2) \cite{tv}.

But several classes of graphs are too complex for allowing such a
direct approach.  The main problem is what we call \emph{strong
  cutsets}.  The typical example of a strong cutset is the
Chv{\'a}tal's \emph{star cutset}~\cite{chvatal:starcutset}: a cutset
that contains one vertex and a subset of its neighbors.  The problem
with such a cutset is that it can be very big, for instance, it can
be the whole vertex-set except two vertices.  And since in the cutset
itself, edges are quite unconstrained, knowing that the graph has a
star cutset tells little about its structure.  From this discussion,
it could even be thought that star cutsets are just useless, but this
is not the case: deep theorems use strong cutsets.  The first one is
the Hayward's decomposition theorem of weakly triangulated
graphs~\cite{hayward:wt}, a simple class of graphs that captures
ideas that were used later for all Berge graphs.

More generally, using strong cutsets is essential for proving theorems
about many complex classes of graphs closed under taking induced
subgraphs, the most famous example being the proof of the SPGC that
uses a generalization of star cutsets: the \emph{balanced skew
  partition}.  Robertson and Seymour have obtained
results about minor-closed families of stunning generality, see
\cite{lo2} for a survey.  The Robertson-Seymour Theorem \cite{rs2}
states that every minor-closed class of graphs can be characterized by
a finite family of excluded minors.  Furthermore, every minor-closed
property of graphs can be tested in polynomial time \cite{rs1}.  The
fact that a unified theory with deep algorithmic consequences exists
for classes closed under taking minors and that no such theory
exists up to now for the induced subgraph containment relation has
perhaps something to do with these strong cutsets.

Yet, for recognition algorithms, strong cutsets can be used.  Examples
are balanced matrices (using 2-joins and double star cutsets)
\cite{ccr}, balanced $0, \pm 1$ matrices (using 2-joins, 6-joins and
double star cutsets) \cite{cckv-bal}, even-hole-free graphs (using
2-joins and star cutsets) \cite{dsv-ehf}, and Berge graphs (using
2-joins and double star cutsets from the decomposition theorem in
\cite{ccv-ohf}) \cite{cclsv}.  This is accomplished by a powerful
tool: the \emph{cleaning}, that is a preprocessing of graphs not worth
describing here.  For combinatorial optimization algorithms (maximum
clique, coloring, \dots), it seems that the cleaning is useless and no
one knows how strong cutsets could be used.

\subsection{Our results}

What we are interested in is whether the known decomposition theorems
for perfect graphs \cite{chudnovsky.r.s.t:spgt, chudnovsky:trigraphs,
  nicolas:bsp} and even-hole-free graphs \cite{cckv-ehf1,dsv-ehf} can
be used to construct combinatorial polynomial time optimization
algorithms.  But as we explained above, we do not know how to handle
the strong cutsets (namely star cutsets and their generalizations,
balanced skew partitions and double star cutsets).  So we take the
\emph{bottom-up approach}.  Let us explain this.  In all classes
similar to Berge graphs (in the sense that strong cutsets are needed
for their decomposition), it can be proved that a decomposition tree
can be built by using in a first step only the strong cutsets, and in
a second step only the other cutsets (this is not at all obvious for
Berge graphs, see \cite{nicolas:bsp}).  So it is natural to ask
whether we can optimize on classes of graphs decomposable by cutsets
that are not strong.

For Berge graphs and even-hole-free graphs, if we assume that no
strong cutset is needed, we obtain a class of graphs decomposable
along \emph{2-joins}, a decomposition that was introduced by
Cornu\'ejols and Cunningham in \cite{cornuejols.cunningham:2join}
where they prove that no minimum counter-example to the SPGC can admit
a 2-join.  2-Joins proved to be of great use in decomposition
theorems, they were also used in several recognition algorithms
mentioned above, but never yet have they been used in construction of
optimization algorithms.  Proving that a minimally imperfect graph
admits no 2-join is done by building blocks of decomposition w.r.t.\ a
2-join that are smaller graphs with the same clique number and the
same chromatic number as the original graph.  But as we will see, it
is not at all straightforward to transform these ideas into
optimization algorithms for our classes.

Our main results are Theorem~\ref{l:CS} and~\ref{th:color}.  They say
that for Berge graphs with no balanced skew partition, no 2-join in
the complement and no homogeneous pair, the following problems can be
solved combinatorially in polynomial time: maximum weighted clique,
maximum weighted stable set and optimal coloring.  The homogeneous
pair and the 2-join in the complement are not really strong cutsets.
Excluding homogeneous pairs was suggested to us by Celina de
Figueiredo~\cite{celina:PC} and is very helpful for several technical
reasons, see below.  In this bottom-up approach, the next step would
be to analyze how homogeneous pairs could be used in optimization
algorithms.  This step might be doable because some classes of Berge
graphs are optimized with homogeneous pairs,
see~\cite{figuereido.m:bullfreeopt}.  This might finally lead to a
coloring algorithm for Berge graphs with no balanced skew partitions.

Our approach is general enough to give results about even-hole-free
graphs that are structurally quite similar to Berge graphs. Their
structure was first studied by Conforti, Cornu\'ejols, Kapoor and
Vu\v{s}kovi\'c in \cite{cckv-ehf1} and \cite{cckv-ehf2}. They were
focused on showing that even-hole-free graphs can be recognized in
polynomial time (a problem that at that time was not even known to be
in NP), and their primary motivation was to develop techniques which
can then be used in the study of Berge graphs. In \cite{cckv-ehf1} a
decomposition theorem was obtained using 2-joins, star, double star
and triple star cutsets, and in \cite{cckv-ehf2} a polynomial time
decomposition based recognition algorithm was constructed. Later da
Silva and Vu\v{s}kovi\'c \cite{dsv-ehf} significantly strengthened the
decomposition theorem for even-hole-free graphs by using just 2-joins
and star cutsets, which significantly improved the running time of the
recognition algorithm for even-hole-free graphs.  It is this
strengthening that we use in this paper.  One can find a maximum
clique in an even-hole-free graph in polynomial time, since as
observed by Farber \cite{farber} 4-hole-free graphs have ${\cal O}
(n^2)$ maximal cliques and hence one can list them all in polynomial
time (in all complexity analysis, $n$ stands for the number of
vertices of the input graph and $m$ for the number of its edges).  In
\cite{dsv} da Silva and Vu\v{s}kovi\'c show that every even-hole-free
graph has a vertex whose neighborhood is hole-free, which leads to a
faster algorithm for finding a maximum clique in an even-hole-free
graph. The complexities of finding a maximum stable set and an optimal
coloring are not known for even-hole-free graphs.

\subsection{Outline of the paper}

In Section~\ref{sec:decomp} we precisely describe all the
decomposition theorems we will be working with.  For even-hole-graphs,
we rely on the theorem of da Silva and Vu\v skovi\'c~\cite{dsv}.  For
the decomposition of Berge graphs we rely on an improvement due to
Trotignon~\cite{nicolas:bsp} of the decomposition theorems of
Chudnovsky, Robertson, Seymour and Thomas
\cite{chudnovsky.r.s.t:spgt}, and Chudnovsky
\cite{chudnovsky:trigraphs}.  We need this improvement because we use
the so called non-path 2-joins in the algorithms, and not simply the
2-joins as defined in \cite{chudnovsky.r.s.t:spgt}.  For the same
reason, we need to exclude the homogeneous pair because some Berge
graphs are decomposable only along path 2-join or homogeneous pair (an
example is represented Figure~\ref{fig:contrex3}).

In Section~\ref{sec:blocks} we show how to construct blocks of
decomposition w.r.t.\ 2-joins that will be class-preserving. This
allows us to recursively decompose along 2-joins down to basic graphs.

Using 2-joins in combinatorial optimization algorithms requires
building blocks of decomposition and asking at least two questions for
at least one block (while for recognition algorithms, one question is
enough).  When this process is recursively applied it can potentially
lead to an exponential blow-up even when the decomposition tree is
linear in the size of the input graph.  This problem is bypassed by
using what we call \emph{extreme 2-joins}, that is 2-joins whose one
block of decomposition is basic.  In Section~\ref{sec:extrem} we prove
that non-basic graphs in our classes actually have extreme 2-joins.
Interestingly, we give an example showing that Berge graphs in general
do not necessarily have extreme 2-joins, their existence is a special
property of graphs with no star cutset.  This allows us to build a
decomposition tree in which every internal node has two children, at
least one of which is a leaf, and hence corresponds to a basic graph.

In Section~\ref{sec:clique}, we show how to put weights on vertices of
the block of decomposition w.r.t.\ an extreme 2-join in order to
compute maximum cliques.  In fact the approach used here could solve
the maximum weighted clique problem for any class with a decomposition
theorem along extreme 2-joins down to basic graphs for which the
problem can be solved.

For stable sets, the problem is more complicated.  As an evidence, in
Section~\ref{sec:npc}, we show a simple class of graphs decomposable
along extreme 2-joins into bipartite graphs and line graphs of cycles
with one chord.  This class has a structure close to Berge graphs and
in fact much simpler in many respects.  Yet, we prove that computing
maximum stable sets for this class is NP-hard.  So, in
Section~\ref{sec:alphaTrack}, devoted to stable sets, we need to
somehow take advantage of the parity of the cycles.  To do so, in
Subsection~\ref{sec:ineq}, we prove a couple of lemmas showing that a
maximum weighted stable set and a 2-join overlap in a very special way
for graphs where cycles are all of the same parity.  These lemmas
allow an unusual construction for blocks that preserve simultaneously
the weight of a maximal weighted stable set and being Berge.

Our unusual blocks raise some problems.  First, if we use them to
fully decompose a graph from our class, what we obtain in the leaves
of the decomposition tree are not basic graphs, but what we call 
\emph{extensions} of  basic graphs.  In Section~\ref{sec:extension},
we show how to solve optimization problems for extensions of basic
graphs.

Another problem (that is in fact the source of the previous one) is
that our blocks are not class-preserving.  They do preserve being
Berge, but they introduce balanced skew partitions.  To bypass this
problem, we construct our decomposition tree in Section~\ref{sec:tree}
in two steps.  First, we use classical class-preserving blocks.  In
the second step, we reprocess the tree to use the unusual blocks.   

In Section~\ref{sec:color} we give the algorithms for solving the
clique and stable set problems.  We also recall a classical method to
color a perfect graph assuming that subroutines exist for cliques and
stable sets.  We show that this method can be used for our class.

Section~\ref{sec:npc} is devoted to the NP-hardness result mentioned
above.

\section{Decomposition theorems} 
\label{sec:decomp}

In this section we introduce all the decomposition theorems
we will use in this paper.
But before we continue, for the convenience we first establish the following 
notation for the classes of graphs we will be working with.


We denote by $\cal C$ the class of all graphs.  We use the superscript
{\sc parity} to mean that all holes have the same parity.  So,
$\C{parity}{}$ can be defined equivalently as the union of the
odd-hole-free graphs and the even-hole-free graphs.  Note that every
Berge graph is in $\C{parity}{}$.  We will use the superscript {\sc
  ehf} to restrict the class to even-hole-free graphs and {\sc Berge}
to restrict the class to Berge graphs.  So for instance, $\C{Berge}{}$
denotes the class of Berge graphs.  We use the subscript {\sc no
  cutset} to restrict the class to those graphs that do not have a
balanced skew partition, a connected non-path 2-join in the
complement, nor a homogeneous pair.  For technical reasons, mainly to
avoid reproving results from~\cite{nicolas:bsp}, we also need the
subscript {\sc no bsp} to restrict a class to graphs with no balanced
skew partition.  We use the subscript {\sc no sc} to restrict the
class to graphs with no star cutset.  We use the subscript {\sc basic}
to restrict the class to the relevant basic graphs.
Table~\ref{tab:class} sums up all the classes used in this paper.  The
classes are defined more formally in the remainder of this section.

\begin{table}[h]
\begin{center}
\begin{tabular}{ll}
Class          & Definition                                 \\\hline
\rule{0em}{3ex}$\C{parity}{}$         
& {\parbox[t]{9cm}{Graphs where all holes have same parity}}\\
\rule{0em}{3ex}$\C{Berge}{}$          
& {\parbox[t]{9cm}{Berge graphs}}                           \\
\rule{0em}{3ex}$\C{ehf}{}$            
& {\parbox[t]{9cm}{Graphs that do not contain even holes}}              \\
\rule{0em}{3ex}$\C{Berge}{no cutset}$ 
& {\parbox[t]{9cm}{Berge graphs with no balanced skew                                         
partition, no connected non-path 2-join in the complement and no homogeneous pair}}            \\
\rule{0em}{3ex}$\C{Berge}{no bsp}$ 
& {\parbox[t]{9cm}{Berge graphs with no balanced skew
                                              partition}}            \\
\rule{0em}{3ex}$\C{Berge}{basic}$     
& {\parbox[t]{9cm}{Bipartite,  
    line graphs of bipartite, path-cobipartite and
           path-double split graphs;
          complements of all these graphs}}           \\
\rule{0em}{3ex}$\C{ehf}{basic}$      
 & {\parbox[t]{9cm}{Even-hole-free graphs that can be 
obtained from the line graph of a tree by adding at most two vertices}}        \\
\rule{0em}{3ex}$\C{}{no sc}$          
& {\parbox[t]{9cm}{Graphs that have no star cutset}}
\\
\rule{0em}{3ex}$\C{parity}{no sc}$          
& {\parbox[t]{9cm}{Graphs of $\C{parity}{}$ that have no star cutset}}
\\
\rule{0em}{3ex}$\C{ehf}{no sc}$          
& {\parbox[t]{9cm}{Even-hole-free graphs that have no star cutset}}
\end{tabular}
\caption{Classes of graphs\label{tab:class}}
\end{center}
\end{table}

We call \emph{path} any connected graph with at least one vertex of
degree at most~1 and no vertex of degree greater than~2. A path has at
most two vertices of degree~1, which are the \emph{ends} of the
path. If $a, b$ are the ends of a path $P$ we say that $P$ is
\emph{from $a$ to~$b$}. The other vertices are the \emph{interior}
vertices of the path. We denote by $v_1 \tp \cdots \tp v_n$ the path
whose edge set is $\{v_1v_2, \dots, v_{n-1}v_n\}$.  When $P$ is a
path, we say that $P$ is \emph{a path of $G$} if $P$ is an induced
subgraph of $G$. If $P$ is a path and if $a, b$ are two vertices of
$P$ then we denote by $a \tp P \tp b$ the only induced subgraph of $P$
that is path from $a$ to $b$.  The \emph{length} of a path is the
number of its edges. An \emph{antipath} is the complement of a path.
Let $G$ be a graph and let $A$ and $B$ be two subsets of $V(G)$. A
path of $G$ is said to be \emph{outgoing from $A$ to $B$} if it has an
end in $A$, an end in $B$, length at least~2, and no interior vertex
in $A\cup B$.

The 2-join was first defined by Cornu\'ejols and
Cunningham~\cite{cornuejols.cunningham:2join}.  A partition $(X_1,
X_2)$ of the vertex-set is a \emph{2-join} if for $i=1,2$, there exist
disjoint non-empty $A_i, B_i \subseteq X_i$ satisfying the following:

\begin{itemize} 
\item
  every vertex of $A_1$ is adjacent to every vertex of $A_2$ and every
  vertex of $B_1$ is adjacent to every vertex of $B_2$;
\item
  there are no other edges between $X_1$ and $X_2$;
\item 
  for $i=1,2$, $|X_i| \geq 3$; 
\item 
  for $i=1,2$, $X_i$ is not a path of length~2 with an end in $A_i$,
  an end in $B_i$ and its unique interior vertex in $C_i = X_i\sm (A_i
  \cup B_i)$.
\end{itemize}

The sets $X_1, X_2$ are the two \emph{sides} of the 2-join.  When sets
$A_i$'s and $B_i$'s are like in the definition we say that $(X_1, X_2,
A_1, B_1, A_2, B_2)$ is a \emph{split} of $(X_1, X_2)$.  Implicitly,
for $i= 1, 2$, we will denote by $C_i$ the set $X_i \setminus (A_i
\cup B_i)$.

A 2-join $(X_1, X_2)$ in a graph $G$ is said to be \emph{connected} if
for $i= 1, 2$, there exists a path from $A_i$ to $B_i$ with interior in $C_i$.

A 2-join is said to be a \emph{path 2-join} if it has a split $(X_1,
X_2, A_1, B_1, A_2, B_2)$ such that for some $i\in \{1, 2\}$, $G[X_i]$
is a path with an end in $A_i$, an end in $B_i$ and interior in $C_i$.
Implicitly we will then denote by $a_i$ the unique vertex in $A_i$ and
by $b_i$ the unique vertex in $B_i$.  We say that $X_i$ is the
\emph{path-side} of the 2-join.  Note that when $G$ is not a hole then
at most one of $X_1, X_2$ is a path side of $(X_1, X_2)$. A
\emph{non-path 2-join} is a 2-join that is not a path 2-join.
Note that all the 2-joins used in \cite{cckv-bal}, \cite{cckv-ehf1}, 
\cite{cckv-ehf2}, \cite{ccr} \cite{conforti.c.v:square} and \cite{ccv-ohf}
are in fact non-path 2-joins.

\subsection{Decomposition of even-hole-free graphs}

A \emph{vertex cutset} in a graph $G$ is a set $S\subset V(G)$ such
that $G\setminus S$ is disconnected ($G\setminus S$ means $G[V(G)
\setminus S]$).  By $N[x]$ we mean $N(x) \cup \{x\}$.  A {\em star
  cutset} in a graph $G$ is a vertex cutset $S$ such that for some $x
\in S$, $S \subseteq N[x]$. Such a vertex $x$ is called a {\em center}
of the star, and we say that $S$ is {\em centered} at $x$.

A graph is in $\C{ehf}{basic}$ if it is even-hole-free and one can
obtain the line graph of a tree by deleting at most two of its
vertices.

Building on the work in \cite{kmv}, da Silva and Vu\v{s}kovi\'c
establish the following strengthening of the original decomposition
theorem for even-hole-free graphs \cite{cckv-ehf1}.

\begin{theorem}
[da Silva and Vu\v{s}kovi\'c \cite{dsv-ehf}]
\label{ehf}
If $G\in \C{ehf}{}$ then either $G\in \C{ehf}{basic}$ or $G$ has a
star cutset or a connected non-path 2-join.
\end{theorem}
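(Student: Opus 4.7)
The plan is to build on the earlier decomposition theorem of Conforti, Cornu\'ejols, Kapoor and Vu\v skovi\'c \cite{cckv-ehf1}, which states that every even-hole-free graph is either a basic graph (in a somewhat weaker sense than $\C{ehf}{basic}$) or admits a 2-join, a star cutset, a double star cutset, or a triple star cutset. The goal here is twofold: to eliminate double and triple star cutsets from the list of outcomes by absorbing them into a star cutset or a 2-join, and, whenever the outcome is a 2-join, to force it to be connected and non-path. I would proceed by contradiction, assuming $G\in \C{ehf}{}$ is not in $\C{ehf}{basic}$, has no star cutset, and has no connected non-path 2-join; then \cite{cckv-ehf1} yields a cutset, and each variant must lead to a contradiction.

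Disconnected or path 2-joins are the easier cases. If a 2-join $(X_1,X_2)$ with split $(A_1,B_1,A_2,B_2)$ is disconnected on side $X_1$, say some connected component $K$ of $G[X_1]$ meets exactly one of $A_1, B_1$, then for any $u\in A_2\cup B_2$ the closed neighborhood $N[u]$ contains a star cutset isolating $K$ from the rest of $G$, contradicting the no-star-cutset hypothesis. For a path 2-join with $G[X_1]$ an induced $a_1 b_1$-path through $C_1$, either the attachment pattern on $X_2$ combined with the parity of the path's length is rigid enough to embed $G$ into $\C{ehf}{basic}$, or one can shift vertices between $C_i$ and $A_i\cup B_i$ to reshape the split into a non-path 2-join.

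The technical heart of the proof is converting a double or triple star cutset $S$ into either a star cutset or a connected non-path 2-join. Here I would lean on the skeleton techniques of \cite{kmv}, which analyze the attachments of the connected components of $G\setminus S$ onto $S$ via non-dominated vertices and shortest paths. Even-hole-freeness severely restricts these attachments, since long detours between two attachments would generate even holes. One argues that either a single center $x\in S$ already satisfies that $N[x]$ separates $G$, yielding a star cutset and hence a contradiction, or the attachments partition naturally into sets $A_1, B_1$ on one side and $A_2, B_2$ on the other with the complete bipartite adjacency pattern defining a 2-join. Connectedness of the resulting 2-join follows from the absence of star cutsets on each side, and the non-path property follows from $G\notin \C{ehf}{basic}$.

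The main obstacle I anticipate is precisely this last conversion step: running through the variants uniformly (double star with two adjacent centers, triple star with three) while simultaneously guaranteeing that the produced 2-join is connected, non-path, and respects the size constraint $|X_i|\geq 3$ together with the non-degeneracy clause about length-$2$ paths. This requires a careful extension of the component-and-attachment arguments of \cite{cckv-ehf1} and \cite{kmv}, and is essentially where the strengthening of the original theorem resides.
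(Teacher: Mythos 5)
This statement is not proved in the paper at all: it is imported verbatim from da Silva and Vu\v{s}kovi\'c \cite{dsv-ehf}, with the paper noting only that the result builds on \cite{kmv} and strengthens the original decomposition theorem of \cite{cckv-ehf1}. So there is no in-paper argument to compare yours against, and the real question is whether your proposal constitutes a proof on its own. It does not. Your outline correctly identifies the provenance and the overall shape of the task (start from the \cite{cckv-ehf1} theorem, eliminate double and triple star cutsets, upgrade the 2-join to a connected non-path one), and your handling of the disconnected case is essentially sound --- it matches the argument the paper itself gives later in Lemma~\ref{k1}\ref{i:OldConn}, modulo the subcase where the offending component is contained in $A_1$. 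But the entire technical content of the theorem lies in the step you defer: converting a double or triple star cutset into a star cutset or a connected non-path 2-join. That conversion is the subject of a full paper (\cite{dsv-ehf}, itself relying on the structural machinery of \cite{kmv}), and your paragraph about ``attachments of connected components via non-dominated vertices and shortest paths'' names the toolbox without performing any of the work. You acknowledge this yourself in your final paragraph, which is an honest admission that the proof is not there.

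Two specific assertions are also unsupported and, as stated, likely false or at least in need of real argument. First, you claim that the non-path property of the produced 2-join ``follows from $G\notin \C{ehf}{basic}$.'' There is no reason a non-basic graph cannot admit only path 2-joins; indeed the present paper exhibits exactly this phenomenon for Berge graphs (Figure~\ref{fig:contrex3}), which is why it must exclude homogeneous pairs to get Theorem~\ref{th.3}. For even-hole-free graphs the elimination of path 2-joins requires its own structural argument, not a one-line deduction from non-basicness. Second, in the path 2-join case you offer a dichotomy (``either the attachment pattern \dots is rigid enough to embed $G$ into $\C{ehf}{basic}$, or one can shift vertices \dots to reshape the split into a non-path 2-join'') with no mechanism for either branch; shifting vertices between $C_i$ and $A_i\cup B_i$ must preserve the complete-bipartite adjacency condition and the nondegeneracy clauses of the 2-join definition, and nothing in your sketch guarantees this is possible. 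As it stands the proposal is a reasonable research plan for reproving the theorem of \cite{dsv-ehf}, not a proof of it.
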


Actually in the decomposition theorem of~\cite{dsv-ehf}, the basic graphs
are defined in a more specific way, but for the purposes of the algorithms
the statement of Theorem~\ref{ehf} suffices.

\subsection{Decomposition of Berge graphs}

If $X, Y \subseteq V(G)$ are disjoint, we say that $X$ is
\emph{complete} to $Y$ if every vertex in $X$ is adjacent to every
vertex in $Y$. We also say that $(X, Y)$ is a \emph{complete pair}. We
say that $X$ is \emph{anticomplete} to $Y$ if there are no edges
between $X$ and $Y$. We also say that $(X, Y)$ is an
\emph{anticomplete pair}.  We say that a graph $G$ is
\emph{anticonnected} if its complement $\overline{G}$ is connected.

Skew partitions were first introduced by
Chv\'atal~\cite{chvatal:starcutset}. A \emph{skew partition} of a
graph $G = (V,E)$ is a partition of $V$ into two sets $A$ and $B$ such
that $A$ induces a graph that is not connected, and $B$ induces a
graph that is not anticonnected. When $A_1, A_2, B_1, B_2$ are
non-empty sets such that $(A_1, A_2)$ partitions $A$, $(A_1, A_2)$ is
an anticomplete pair, $(B_1, B_2)$ partitions $B$, and ($B_1, B_2$) is
a complete pair, we say that $(A_1, A_2, B_1, B_2)$ is a \emph{split}
of the skew partition $(A, B)$. A \emph{balanced skew partition}
(first defined in~\cite{chudnovsky.r.s.t:spgt}) is a skew partition
$(A, B)$ with the additional property that every induced path of
length at least~2 with ends in $B$, interior in $A$ has even length,
and every antipath of length at least~2 with ends in $A$, interior in
$B$ has even length. If $(A, B)$ is a skew partition, we say that $B$
is a \emph{skew cutset}. If $(A, B)$ is balanced we say that the skew
cutset $B$ is \emph{balanced}. Note that Chudnovsky et
al.~\cite{chudnovsky.r.s.t:spgt} proved that no minimum
counter-example to the strong perfect graph conjecture admits
a balanced skew partition.

Call \emph{double split graph} (first defined
in~\cite{chudnovsky.r.s.t:spgt}) any graph $G$ that may be constructed
as follows.  Let $k,l \geq 2$ be integers. Let $A = \{a_1, \dots,
a_k\}$, $B= \{b_1, \dots, b_k\}$, $C= \{c_1, \dots, c_l\}$, $D= \{d_1,
\dots, d_l\}$ be four disjoint sets. Let $G$ have vertex-set $A\cup B
\cup C \cup D$ and edges in such a way that:

\begin{itemize}
\item 
  $a_i$ is adjacent to $b_i$ for $1 \leq i \leq k$.  There are no
  edges between $\{a_i, b_i\}$ and $\{a_{i'}, b_{i'}\}$ for $1\leq i <
  i' \leq k$;
\item 
  $c_j$ is non-adjacent to $d_j$ for $1 \leq j \leq l$. There are all
  four edges between $\{c_j, d_j\}$ and $\{c_{j'}, d_{j'}\}$ for
  $1\leq j < j' \leq l$;
\item
  there are exactly two edges between $\{a_i, b_i\}$ and $\{c_j,
  d_j\}$ for $1\leq i \leq k$, $1 \leq j \leq l$ and these two
  edges are disjoint.
\end{itemize}

The homogeneous pair was first defined by Chv\'atal and
Sbihi~\cite{chvatal.sbihi:bullfree}.  The definition that we give here
is a slight variation.  A \emph{homogeneous pair} is a partition of
$V(G)$ into six sets $(A, B, C, D, E, F)$ such that:

\begin{itemize} 
\item $A$, $B$, $C$, $D$ and $F$ are non-empty (but $E$ is possibly empty);
\item 
  every vertex in $A$ has a neighbor in $B$ and a non-neighbor in $B$,
  and vice versa (note that this implies that $A$ and $B$ both contain
  at least 2 vertices); 
\item the pairs $(C,A)$, $(A,F)$, $(F,B)$, $(B,D)$ are complete; 
\item the pairs $(D,A)$, $(A,E)$, $(E,B)$, $(B,C)$ are anticomplete. 
\end{itemize}

All the decomposition theorems for Berge graphs that we mention now
are published in papers that have a definition of a connected 2-join
and a homogeneous pair slightly more restrictive than ours.  So, the
statements that we give here follow directly from the original
statements.

The following theorem was first conjectured in a slightly different
form by Conforti, Cornu\'ejols and Vu\v skovi\'c, who proved it in the
particular case of square-free graphs~\cite{conforti.c.v:square}.  A
corollary of it is the Strong Perfect Graph Theorem.

\begin{theorem}[Chudnovsky, Robertson, Seymour and Thomas, \cite{chudnovsky.r.s.t:spgt}]
  \label{th.0}
  Let $G$ be a Berge graph. Then either $G$ is bipartite, line graph
  of bipartite, complement of bipartite, complement of line graph of
  bipartite or double split, or $G$ has a homogeneous pair, or $G$ has
  a balanced skew partition or one of $G, \overline{G}$ has a
  connected 2-join.
\end{theorem}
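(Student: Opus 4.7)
The plan is to argue by a very long structural case analysis, driven by which ``rulers'' (carefully chosen induced subgraphs) the Berge graph $G$ contains. In each case, the ruler would be used to locate either a known cutset or to show that $G$ falls into one of the basic classes. I would work with both $G$ and $\overline{G}$ simultaneously, since the conclusion is self-complementary up to the list of basic classes (bipartite and its complement, line graph of bipartite and its complement, double split is self-complementary), up to the observation that a 2-join in $G$ or $\overline{G}$ are both permitted outcomes.

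First I would handle the easy reductions. If $G$ (or $\overline{G}$) is bipartite or a line graph of a bipartite graph we are in a basic case. Otherwise $G$ contains an odd configuration that I would try to extend into one of several key ``rulers'': a long prism (three vertex-disjoint induced paths, at least one of length $\geq 2$, joining two triangles, with no other edges between the paths), a long antiprism, a proper wheel, or an even hole carrying enough structure. The core of the argument is that a long prism naturally splits the vertex set into pieces that share exactly two complete bipartite links with the rest, so by carefully tracking how other vertices attach to the three paths one expects to recover a 2-join; dually, a long antiprism should give a 2-join in $\overline{G}$. The bulk of the work is showing that any attachment that does \emph{not} immediately produce a 2-join either contradicts the Berge property (by yielding an odd hole or odd antihole through the prism) or forces a very constrained structure that creates a balanced skew partition.

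If $G$ contains neither a long prism nor a long antiprism, the remaining cases should be the ``trigraph-like'' situation where $G$ is built around short prisms and wheels. Here I would use the presence of a proper wheel, or of an even pair-like configuration, to find either a homogeneous pair (two sets each fully linked or fully nonlinked to four outer parts) or a skew partition. The critical technical point is then checking that every skew partition produced this way can be taken to be \emph{balanced}, that is, that every odd path through the ``$A$'' side and every odd antipath through the ``$B$'' side can be avoided. When no such non-basic configuration can be found, one has to show that $G$ is forced into the double split class; this is a delicate combinatorial argument extracting the ``matching + antimatching'' structure directly from the constraints imposed by the absence of odd holes, odd antiholes, and the other rulers.

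The main obstacle is unmistakably the sheer size and delicacy of this case analysis: the theorem is deep enough that its original proof runs to well over a hundred pages, and the key difficulty is not to find \emph{some} cutset in a non-basic graph but to always be able to produce a cutset \emph{of one of the four very restricted types} listed. In particular, turning a generic skew partition into a balanced one, and distinguishing configurations that give a 2-join from those that only give a homogeneous pair, requires many auxiliary lemmas about how odd holes can or cannot be routed through attachments of the chosen ruler. I would expect to need, at minimum, a separate preparatory lemma for each type of ruler (prism, antiprism, wheel, even prism, stretched prism, reflection), together with a general lemma asserting that absence of all rulers forces $G$ into $\C{Berge}{basic}$.
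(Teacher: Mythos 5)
This statement is not proved in the paper at all: it is imported verbatim as a known result and attributed to \cite{chudnovsky.r.s.t:spgt}, so there is no in-paper argument to compare yours against. What you have written is a plausible summary of the architecture of the original Chudnovsky--Robertson--Seymour--Thomas proof (reduce to the non-basic case, locate a ruler such as a long prism, antiprism or wheel, and convert its attachments into a 2-join, a homogeneous pair, or a balanced skew partition, with double split graphs as the residual basic class), and you correctly note that the conclusion is closed under complementation, which the real proof exploits. But as a proof it has a gap that is essentially its entire content: every load-bearing step is asserted rather than argued. The claim that a long prism ``naturally'' yields a 2-join is precisely the hard part --- the two triangles are only candidate sides, and showing that every vertex outside the prism attaches in one of finitely many ways, each leading to a listed outcome or to an odd hole/antihole, occupies a large fraction of the original 150-plus pages. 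Likewise, ``checking that every skew partition produced this way can be taken to be balanced'' and ``extracting the matching plus antimatching structure'' of double split graphs are each deep theorems in their own right, and you give no indication of how either would go beyond naming them as obstacles.

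Given that the theorem is cited here as a black box, the appropriate treatment is exactly what the paper does: state it with the reference. If you intend your text as a genuine proof attempt rather than a literature summary, you would need to supply, at minimum, the precise definitions of your rulers, the attachment lemmas for each, the argument that a non-basic Berge graph contains at least one ruler, and the balancing argument for skew partitions --- none of which is present or even reduced to a verifiable sub-claim in your write-up.
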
 

The theorem that we state now is due to Chudnovsky who proved it from
scratch, that is without assuming Theorem~\ref{th.0}.  Her proof uses
the notion of \emph{trigraph}.  The theorem shows that homogeneous
pairs are not necessary to decompose Berge graphs.  Thus it is a
result stronger than Theorem~\ref{th.0}.

\begin{theorem}[Chudnovsky, \cite{chudnovsky:trigraphs,chudnovsky:these}]
\label{th.1}
Let $G$ be a Berge graph. Then either $G$ is bipartite, line graph of
bipartite, complement of bipartite, complement of line graph of
bipartite or double split, or one of $G, \overline{G}$ has a
connected 2-join or $G$ has a balanced skew partition.
\end{theorem}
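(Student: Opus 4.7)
The plan is to lift the framework to \emph{trigraphs} and prove the theorem in that wider setting. A trigraph on $V$ labels each pair of distinct vertices as a strong edge, a strong non-edge, or a \emph{semi-edge}; a \emph{realization} replaces each semi-edge with an edge or a non-edge, and the trigraph is \emph{Berge} if every realization is Berge. Holes and antiholes use only strong edges and non-edges; 2-joins and balanced skew partitions are defined as in the graph case but require the relevant complete/anticomplete pairs to consist of strong edges or strong non-edges; and the basic trigraph classes (bipartite, line graph of bipartite, double split, and their complements) are defined so that semi-edges appear only in very restricted positions, and every realization of such a trigraph lies in the corresponding basic graph class. A homogeneous pair in a trigraph is defined exactly as in the graph case, with the complete and anticomplete pairs understood in the strong sense.

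I would then prove by induction on $|V(T)|$ that every Berge trigraph $T$ is either basic, has a balanced skew partition, or has a connected 2-join in $T$ or $\overline{T}$. Specialized to a graph $G$, for which there are no semi-edges, this gives the theorem. The induction step splits on whether $T$ contains a homogeneous pair $(A, B, C, D, E, F)$. If it does, shrink $A \cup B$ to a template of at most four vertices $\{a_1, a_2, b_1, b_2\}$ connected to $C, D, E, F$ by the same complete/anticomplete pattern as $A$ and $B$, carrying semi-edges between the template vertices to encode the remaining flexibility inside $T[A \cup B]$. Call the resulting trigraph $T'$. One checks that $T'$ is Berge: any odd hole or antihole in a realization of $T'$ lifts to one in $T$ by routing the template vertices through suitable paths or antipaths inside $T[A \cup B]$, whose existence is guaranteed by the homogeneous-pair conditions that every vertex of $A$ has both a neighbor and a non-neighbor in $B$, and vice versa. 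Since $|V(T')| < |V(T)|$, induction applied to $T'$ produces a basic structure, a balanced skew partition, or a connected 2-join in $T'$ or $\overline{T'}$; each of these lifts back to $T$ because $C, D, E, F$ are untouched and the template mimics the external adjacency of $A \cup B$.

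The main obstacle is the remaining case, in which $T$ contains no homogeneous pair. Here one cannot simply invoke Theorem~\ref{th.0} on a realization and lift, because the basic trigraph classes are strictly richer than the classes of trigraphs whose every realization is basic, and because a 2-join or balanced skew partition in a single realization need not be one in the trigraph: a semi-edge may need to be an edge in some realizations and a non-edge in others. One must instead rerun the entire structural analysis of~\cite{chudnovsky.r.s.t:spgt} in the trigraph setting, keeping careful track at every step of which adjacencies are strong and which are semi, and adapting the analyses of even prisms, long prisms, and small configurations. The payoff is that the homogeneous-pair outcome of Theorem~\ref{th.0} is absorbed into the other outcomes once semi-edges are available: the undetermined interactions inside a homogeneous pair become semi-edges and fit into the extended basic classes or the extended decompositions. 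This absorption, together with the reduction above, yields the theorem.
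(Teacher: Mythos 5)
The paper does not contain a proof of this statement: Theorem~\ref{th.1} is imported from Chudnovsky's work \cite{chudnovsky:trigraphs,chudnovsky:these}, and the only commentary the paper offers is that her proof is ``from scratch'' (not derived from Theorem~\ref{th.0}) and uses trigraphs. Your outline correctly reconstructs the strategy of that cited proof---introduce semi-edges so that the indeterminate adjacencies inside a homogeneous pair can be packed into a smaller trigraph, and prove a decomposition theorem for Berge trigraphs whose specialization to graphs (which have no semi-edges) eliminates the homogeneous-pair outcome. So the route you chose is the right one, and indeed the one actually used.

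But what you have written is a proof plan, not a proof. The entire mathematical content is concentrated in the step you explicitly defer: ``rerun the entire structural analysis of \cite{chudnovsky.r.s.t:spgt} in the trigraph setting.'' That adaptation is not routine bookkeeping; it is precisely the body of Chudnovsky's thesis, and nothing in your proposal supplies it or explains why each case of the graph argument survives the passage to trigraphs. Two further points in the part you do spell out need real arguments rather than assertions. First, when the contracted trigraph $T'$ is basic it does not follow that $T$ is basic or admits one of the decompositions: re-expanding the template into $A\cup B$ can take you out of every basic class, and this case must be analyzed separately (this is one reason the basic trigraph classes have to be defined with carefully restricted semi-edges in the first place). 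Second, the claim that $T'$ is Berge requires lifting odd holes \emph{and antiholes} of an arbitrary realization of $T'$---in which different semi-edges of the template may be resolved in ways inconsistent with any single choice of representatives in $A\cup B$---back to an odd hole or antihole of a realization of $T$; the hypothesis that every vertex of $A$ has both a neighbor and a non-neighbor in $B$ is the right ingredient, but the verification is not immediate, especially for antiholes passing through several template vertices. As it stands, the proposal identifies the correct approach but does not constitute a proof of the theorem.
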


\subsection{Avoiding path 2-joins in Berge graphs}

Theorem~\ref{th.1} allows path 2-joins and they are not easy to
handle.  Because their path side is sometimes not substantial enough
to allow building a block of decomposition that carries sufficiently
information.  But there are other reasons that we explain now.  Let us
first remind the starting point of this work: we do not know how to
handle skew partitions in algorithms.  So, things should be easier for
a Berge graph with no skew partition.  Such a graph is likely to have
a 2-join, but when decomposing along this 2-join, we may create a skew
partition again.  Thus, it seems impossible to devise a recursive
algorithm that decomposes graphs with no balanced skew partitions
along 2-joins.  A careful study of this phenomenon, done in
\cite{nicolas:bsp}, shows that path 2-joins, more precisely certain
kinds of path 2-join, are responsible for this and can be avoided.
Let us state this more precisely.

The following theorem shows that path 2-joins are not necessary to
decompose Berge graphs, but at the expense of extending balanced skew
partitions to general skew partitions and introducing a new basic
class. So, this theorem is useless for us (at least, we do not know
how to use it).  Before stating the theorem, we need to define the new
basic class.

A graph $G$ is path-cobipartite if it is a Berge graph obtained by
subdividing an edge between the two cliques that partitions the
complement of a bipartite graph.  More precisely, a graph is
\emph{path-cobipartite} if its vertex-set can be partitioned into
three sets $A, B, P$ where $A$ and $B$ are non-empty cliques and $P$
consist of vertices of degree~2, each of which belongs to the interior
of a unique path of odd length with one end $a$ in $A$, the other one
$b$ in $B$. Moreover, $a$ has neighbors only in $A \cup P$ and $b$ has
neighbors only in $B \cup P$. Note that a path-cobipartite graph such
that $P$ is empty is the complement of bipartite graph. Note that our
path-cobipartite graphs are simply the complement of the
\emph{path-bipartite} graphs defined by Chudnovsky
in~\cite{chudnovsky:these}.  For convenience, we prefer to think about
them in the complement as we do.

\begin{theorem}[Chudnovsky, \cite{chudnovsky:these}]
\label{th.2}
Let $G$ be a Berge graph. Then either $G$ is bipartite, line graph of
bipartite, complement of bipartite, complement of line graph of
bipartite, double split, path-bipartite, complement of path-bipartite,
or $G$ has a connected non-path 2-join, or $\overline{G}$ has a
connected 2-join, or $G$ has a homogeneous pair or $G$ has a skew
partition.
\end{theorem}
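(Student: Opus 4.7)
The plan is to bootstrap from Theorem~\ref{th.1}, showing that the only way path 2-joins can remain after excluding all the other decompositions is for $G$ to fit the path-(co)bipartite template. Assume for contradiction that $G$ is Berge, lies in none of the seven basic classes of Theorem~\ref{th.2}, has no connected non-path 2-join, $\overline{G}$ has no connected 2-join, and $G$ has neither a homogeneous pair nor a skew partition. The five basic classes of Theorem~\ref{th.1} are contained in the seven basic classes here, and a balanced skew partition is in particular a skew partition; hence Theorem~\ref{th.1} forces either $G$ or $\overline{G}$ to have a connected 2-join, so $G$ has a connected 2-join which is necessarily a path 2-join.

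First I would pick a connected path 2-join $(X_1,X_2,A_1,B_1,A_2,B_2)$ in $G$ with the interior $C_1$ of the path side as long as possible. Writing $X_1$ as $a_1$-$c_1$-$\cdots$-$c_k$-$b_1$ with $a_1\in A_1$, $b_1\in B_1$, the target path-cobipartite decomposition would take the cliques to be $A:=A_2\cup\{a_1\}$ and $B:=B_2\cup\{b_1\}$, and the paths-of-degree-two set to be $P:=C_1\cup C_2$. Hence the bulk of the work is to verify that $A_2$ and $B_2$ are cliques and that every connected component of $G[C_2]$ induces an odd-length path with one end having its unique outside neighbor in $A_2$ and the other end in $B_2$, all its interior vertices of degree two in $G$. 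Each of these claims is of the same flavor: a forbidden local configuration in $X_2$ would either extend or duplicate the path $X_1$ (contradicting maximality), produce a new connected 2-join in $\overline{G}$ (excluded), give rise to a homogeneous pair inside $X_2$ (excluded), or produce a skew cutset inside $N[a_1]\cup N[b_1]$ (excluded).

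The main obstacle is the last of these, namely ruling out skew partitions. Because the hypothesis is only that $G$ has no skew partition, not that every skew partition is balanced, the Berge-parity trick used in Theorem~\ref{th.1} to upgrade an arbitrary skew partition into a balanced one is not available; instead one must locate, from each local defect in $X_2$, either a direct non-path 2-join in $G$ or a non-path 2-join in $\overline{G}$. This is why Chudnovsky's original argument in \cite{chudnovsky:these} proceeds via trigraphs: trigraphs let her treat the path 2-join and a candidate skew partition uniformly, so that the Berge induction can close up before the weaker skew-partition hypothesis becomes visible. A complete proof along the lines sketched above would have to replay that trigraph bookkeeping inside $G[X_2]$; I expect this to be the delicate step, whereas the clique/path-structure claims about $A_2$, $B_2$, and $C_2$ should then follow from standard Berge parity arguments combined with the maximality of $X_1$, yielding that $G$ is path-cobipartite, contradicting the opening assumption.
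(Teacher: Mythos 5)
First, a point of comparison: the paper does not prove Theorem~\ref{th.2} at all. It is quoted from Chudnovsky's thesis \cite{chudnovsky:these} and explicitly set aside as not usable, so there is no internal proof to measure your argument against. Your opening reduction is correct and is the easy part: the five basic classes of Theorem~\ref{th.1} sit inside the seven of Theorem~\ref{th.2}, a balanced skew partition is in particular a skew partition, and so under your contradiction hypothesis Theorem~\ref{th.1} leaves only the possibility that $G$ has a connected 2-join, which must be a path 2-join.

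The gap is everything after that. The entire content of the theorem is the claim that a Berge graph admitting only path 2-joins, with no connected 2-join in the complement, no homogeneous pair and no skew partition, must be path-(co)bipartite; your proposal does not prove this. It names the statements to be verified (that $A_2$ and $B_2$ are cliques, that the components of $G[C_2]$ are odd flat paths from $A_2$ to $B_2$) and asserts that any ``forbidden local configuration'' would yield one of the excluded outcomes, but no such configuration is exhibited and no case analysis is carried out. You then concede that the delicate step --- extracting from each defect a genuine non-path 2-join of $G$ or a 2-join of $\overline{G}$, rather than merely a skew cutset whose existence is exactly what is excluded --- requires replaying the trigraph bookkeeping of \cite{chudnovsky:these}. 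That bookkeeping \emph{is} the proof, so deferring to it leaves the proposal as a plan rather than a proof. For scale, the analogous refinement Theorem~\ref{th.33}, which the paper actually uses, is derived from Theorem~\ref{th.1} in \cite{nicolas:bsp} by precisely this kind of analysis of path 2-joins and occupies a substantial part of that paper; note also that such an analysis necessarily encounters path-double split graphs and homogeneous 2-joins as additional outcomes, which your sketch would have to identify and then discharge against the skew-partition and homogeneous-pair hypotheses.
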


A \emph{path-double split graph} is any graph $H$ that may be
constructed as follows.  Let $k,l \geq 2$ be integers. Let $A = \{a_1,
\dots, a_k\}$, $B= \{b_1, \dots, b_k\}$, $C= \{c_1, \dots, c_l\}$, $D=
\{d_1, \dots, d_l\}$ be four disjoint sets. Let $E$ be another
possibly empty set disjoint from $A$, $B$, $C$, $D$. Let $H$ have
vertex-set $A\cup B \cup C \cup D \cup E$ and edges in such a way
that:

\begin{itemize}
\item for every  vertex $v$ in $E$, $v$ has degree~2 and there exists
  $i \in \{1, \dots, k\}$ such that $v$ lies on a
  path of odd length from $a_i$ to $b_i$; 
\item 
  for $1 \leq i \leq k$, there is a unique path of odd length
  (possibly~1) between $a_i$ and $b_i$ whose interior is in $E$.
  There are no edges between $\{a_i, b_i\}$ and $\{a_{i'}, b_{i'}\}$
  for $1\leq i < i' \leq k$;
\item 
  $c_j$ is non-adjacent to $d_j$ for $1 \leq j \leq l$. There are all
  four edges between $\{c_j, d_j\}$ and $\{c_{j'}, d_{j'}\}$ for
  $1\leq j < j' \leq l$;
\item
  there are exactly two edges between $\{a_i, b_i\}$ and $\{c_j,
  d_j\}$ for $1\leq i \leq k$, $1 \leq j \leq l$ and these two
  edges are disjoint.
\end{itemize}

Note that a path-double split graph $G$ has an obvious skew partition
that is not balanced: $(A\cup B\cup E, C\cup D)$.  In fact, it is
proved in~\cite{nicolas:bsp}, Lemma~4.5, that this is the unique skew
partition of $G$.  Also, either $E$ is empty and the graph is a double
split graph or $E$ is not empty and the graph has a path 2-join.
Path-double split graphs are the reason why in Theorem~\ref{th.2}, one
needs to add non-balanced skew partitions in the list of
decompositions.

We call \emph{flat path of a graph $G$} any path of length at least~2,
whose interior vertices all have degree~2 in $G$ and whose ends have
no common neighbors outside the path.  A \emph{homogeneous 2-join} is
a partition of $V(G)$ into six non-empty sets $(A,$ $B,$ $C,$ $D,$
$E,$ $F)$ such that:

\begin{itemize} 
\item 
  $(A, B, C, D, E, F)$ is a homogeneous pair such that $E$ is not empty;
\item 
  every vertex in $E$ has degree~2 and belongs to a flat path of odd
  length with an end in $C$, an end in $D$ and whose interior is in
  $E$;
 \item 
   every flat path outgoing from $C$ to $D$ and whose interior is in
   $E$ is the path-side of a non-cutting connected 2-join of $G$.
\end{itemize} 

Note we have not defined \emph{cutting} and \emph{non-cutting}
2-joins.  The definition is long (see~\cite{nicolas:bsp}) and we do
not need it here because the only property of homogeneous 2-joins that we
are going to use is that they imply the existence of a homogeneous
pair.  Homogeneous 2-joins are the reason why in Theorem~\ref{th.2},
one needs to add homogeneous pairs in the list of decompositions.

The following theorem generalizes the previously known decomposition
theorems for Berge graphs.  So it implies the Strong Perfect Graph
Theorem, but its proof relies heavily on Theorem~\ref{th.1}.  Hence it
does not give a new proof of the Strong Perfect Graph Theorem.

\begin{theorem}[Trotignon, \cite{nicolas:bsp}]
  \label{th.33}
  Let $G$ be a Berge graph. Then either $G$ is bipartite, line graph
  of bipartite, complement of bipartite, complement of line graph of
  bipartite or double split, or one of $G, \overline{G}$ is a
  path-cobipartite graph, or one of $G, \overline{G}$ is a path-double
  split graph, or one of $G, \overline{G}$ has a homogeneous 2-join,
  or one of $G, \overline{G}$ has a connected non-path 2-join, or $G$
  has a balanced skew partition.
\end{theorem}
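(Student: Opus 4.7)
The strategy is to bootstrap from Theorem~\ref{th.1}. Suppose for contradiction that $G$ is a Berge graph satisfying none of the conclusions of Theorem~\ref{th.33}: $G$ is not in any of the listed basic classes (bipartite, line graph of bipartite, their complements, double split, path-cobipartite, path-double split, or the complements of the last two), $G$ has no balanced skew partition, and neither $G$ nor $\overline{G}$ has a connected non-path 2-join or a homogeneous 2-join. The goal is to derive a contradiction.

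First I would apply Theorem~\ref{th.1} to $G$. Since $G$ is not among the five basic classes of that theorem and has no balanced skew partition, one of $G, \overline{G}$ must have a connected 2-join. Our assumptions forbid connected non-path 2-joins, so this 2-join is a path 2-join. By symmetry assume $G$ has a connected path 2-join with split $(X_1, X_2, A_1, B_1, A_2, B_2)$, where $G[X_1]$ is a flat path $P$ from $a_1\in A_1$ to $b_1\in B_1$. A standard parity argument, using that $G$ is Berge and that $A_2, B_2$ are joined through $X_2$ by paths of two distinct parities (or a single path and a chord), forces $P$ to have odd length.

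The core of the proof is a structural analysis of the entire family $\mathcal{F}$ of flat odd paths of $G$, classified by their pairs of endpoints. Using that $G$ has no non-path 2-join and no balanced skew partition, one shows that for each flat odd path the ``non-path side'' $X_2$ is highly constrained: in particular, contracting all flat odd paths to single edges yields a smaller Berge graph $G'$ that again has no balanced skew partition and (by a careful case distinction) no connected non-path 2-join. Applying Theorem~\ref{th.1} recursively, the process terminates at a \emph{core} graph belonging to one of the five basic classes of Theorem~\ref{th.1}. A case analysis on which basic class the core belongs to, combined with how the flat paths reattach, then shows that $G$ must be path-cobipartite (core is a complement of bipartite), path-double split (core is double split), or must have a homogeneous 2-join (flat paths group along a pair of endpoint classes that form a homogeneous pair), contradicting our standing assumption. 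The remaining core classes (bipartite and line graph of bipartite) are ruled out because a flat odd path cannot be attached to them without creating an odd hole, using perfection.

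The main obstacle is this last case analysis: controlling how an arbitrary family of flat odd paths can attach to each basic core without creating odd holes or one of the forbidden cutsets, and in particular recognizing the homogeneous 2-join configuration among all possible patterns of shared endpoints. A subtle point is that the hypothesis forbids only \emph{non-path} 2-joins, not all 2-joins, so when contracting one must verify that the contraction does not itself create a new non-path 2-join or a new balanced skew partition; this forces delicate choices about which flat paths to contract first and how to group them, and is the source of the distinction between cutting and non-cutting 2-joins alluded to in the definition of homogeneous 2-join.
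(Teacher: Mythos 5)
The paper does not actually prove Theorem~\ref{th.33}: it is imported verbatim from~\cite{nicolas:bsp}, and the only in-paper commentary is that ``its proof relies heavily on Theorem~\ref{th.1}.'' Your starting move --- assume all conclusions fail, invoke Theorem~\ref{th.1}, and conclude that the only surviving decomposition is a connected \emph{path} 2-join, which must then be analysed --- is therefore consistent with the known architecture of the real proof. But everything after that first step is asserted rather than proved, and the assertions hide (or get wrong) the entire content of the theorem.

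Concretely: (a) your parity claim is unjustified --- connectivity of the 2-join only forces the path side $P$ and the $A_2$--$B_2$ paths through $C_2$ to have the \emph{same} parity, not for $P$ to be odd; oddness of $P$ only comes out later, once the ends of $P$ are known to sit in a clique or to be joined by an edge in some completion. (b) The central step, ``contracting all flat odd paths to single edges yields a smaller Berge graph with no balanced skew partition and no connected non-path 2-join,'' is precisely the hard part of~\cite{nicolas:bsp} and is false without heavy additional hypotheses: making the two ends of a flat path adjacent can create odd antiholes, and it can create skew partitions --- this is exactly the phenomenon (blocks of 2-join decomposition reintroducing balanced skew partitions) that the paper says motivated the cutting/non-cutting distinction and the notion of homogeneous 2-join. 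Your closing remark acknowledges this subtlety but does not resolve it, so the recursion has no base on which to stand. (c) The final case analysis is also wrong as stated: a flat odd path attached between the two sides of a bipartite core does \emph{not} create an odd hole --- it yields another bipartite graph --- so the bipartite and line-graph cases are not ``ruled out by perfection''; they have to be absorbed into the basic classes, which requires tracking exactly where the path ends attach. In short, the proposal is a plausible table of contents for the proof in~\cite{nicolas:bsp}, but each of its three main steps is either unproved or incorrect as written, so it does not constitute a proof.
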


Here, we will only use the obvious following corollary:

\begin{theorem}
  \label{th.3}
  If $G$ is in $\C{Berge}{no cutset}$, then either $G$ is in
  $\C{Berge}{basic}$ or $G$ has a connected non-path 2-join.
\end{theorem}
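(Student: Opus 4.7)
The plan is to derive Theorem~\ref{th.3} as an immediate corollary of Theorem~\ref{th.33} by walking through its six disjunctive outcomes and ruling out those incompatible with $G\in\C{Berge}{no cutset}$. Since membership in $\C{Berge}{no cutset}$ forces $G$ to be Berge, Theorem~\ref{th.33} applies, and I would argue separately for each case.

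For the six outcomes of Theorem~\ref{th.33}, the dispatch is as follows. The balanced skew partition case is directly ruled out by hypothesis. The ``one of $G,\overline G$ has a connected non-path 2-join'' case either gives exactly the second conclusion of Theorem~\ref{th.3} (when it is $G$ that carries the 2-join) or contradicts the hypothesis (when it is $\overline G$). The basic cases --- $G$ bipartite, line graph of bipartite or their complements, double split, and $G$ or $\overline G$ path-cobipartite or path-double split --- all place $G$ in $\C{Berge}{basic}$, since that class is explicitly closed under complementation. The only minor bookkeeping point is that double split graphs are not listed separately in $\C{Berge}{basic}$, but are path-double split graphs with the vertex set $E$ of the definition taken empty, hence already covered.

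The one outcome that takes a brief extra argument is ``one of $G,\overline G$ has a homogeneous 2-join''. Here I would use only the first clause of the definition of a homogeneous 2-join, which states that its partition $(A,B,C,D,E,F)$ is itself a homogeneous pair (with $E$ additionally required to be non-empty, and $F$ non-empty since this is a homogeneous pair in the paper's sense). I would then check that the property of having a homogeneous pair is self-complementary: passing from $G$ to $\overline G$ turns the required complete pairs $(C,A),(A,F),(F,B),(B,D)$ into anticomplete pairs and vice versa, so after an appropriate relabeling of the six parts one obtains a homogeneous pair in the complement. Consequently a homogeneous 2-join in either $G$ or $\overline G$ forces a homogeneous pair in $G$, contradicting the hypothesis.

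There is no serious obstacle; the statement is genuinely a packaging corollary. The only points needing a second look are the two complementation checks (homogeneous pair self-complementary, $\C{Berge}{basic}$ closed under complements) and the observation that reading off the homogeneous pair from the first line of the definition of a homogeneous 2-join lets us avoid the long definition of cutting/non-cutting 2-joins that the paper deliberately leaves unstated.
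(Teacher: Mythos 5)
Your proof is correct and takes essentially the same route as the paper's: the published proof is a one-line appeal to Theorem~\ref{th.33} together with the observation that a homogeneous 2-join in $G$ or in $\overline{G}$ yields a homogeneous pair of $G$. Your write-up merely makes explicit the case dispatch and the two complementation checks (self-complementarity of homogeneous pairs, closure of $\C{Berge}{basic}$ under complements, and double split graphs being path-double split graphs with $E=\emptyset$) that the paper leaves implicit.
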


\begin{proof}
  Follows directly from Theorem~\ref{th.33} and the fact that a graph
  with a homogeneous 2-join, or whose complement has a homogeneous
  2-join, admits a homogeneous pair.
\end{proof}

Note that since we need to use non-path 2-joins, we really need to
exclude homogeneous pairs (or to find a different approach).  Indeed,
there exist Berge graphs that are decomposable only with path 2-joins
and homogeneous pairs.  An example from~\cite{nicolas:bsp} is shown
Figure~\ref{fig:contrex3}.

\begin{figure}
  \begin{center}
    \includegraphics{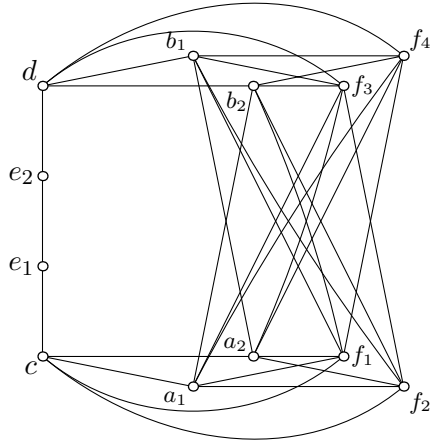}
    \caption{A graph that has a homogeneous 2-join $(\{a_1, a_2\},$
      $\{b_1, b_2\},$ $\{c\}, \{d\},$ $\{e_1, e_2\},$ $\{f_1, f_2,$ $f_3,
      f_4\})$\label{fig:contrex3}}
  \end{center}
\end{figure}

\section{Blocks of decomposition with respect to a 2-join}
\label{sec:blocks}

Blocks of decomposition with respect to a 2-join are built by
replacing each side of the 2-join by a path and the lemma below shows
that for graphs in $\C{parity}{}$ there exists a unique way to choose
the parity of that path.

\begin{lemma}
  \label{l.2jAiBi}
  Let $G$ be a graph in $\C{parity}{}$ and $(X_1,X_2,A_1,B_1,A_2,B_2)$
  be a split of a connected 2-join of $G$.  Then for $i=1,2$, all the
  paths with an end in $A_i$, an end in $B_i$ and interior in $C_i$
  have the same parity.
\end{lemma}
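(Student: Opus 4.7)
The plan is the classical construction of a hole by gluing two paths across a 2-join, and then using the ``same parity'' hypothesis.

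Fix $i \in \{1,2\}$, say $i=1$, and let $P, P'$ be two paths with an end in $A_1$, an end in $B_1$ and interior in $C_1$. The aim is to show that the lengths $|P|$ and $|P'|$ have the same parity. To do so, I will produce two holes whose lengths are $|P| + k$ and $|P'| + k$ for the same constant $k$, so that the hypothesis $G \in \C{parity}{}$ forces $|P|$ and $|P'|$ to have the same parity.

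Since the 2-join $(X_1, X_2)$ is connected, there exists a path $Q$ of $G[X_2]$ with an end $a_2 \in A_2$, an end $b_2 \in B_2$ and interior in $C_2$. Denote by $a_1 \in A_1$, $b_1 \in B_1$ the ends of $P$. Consider the cycle $H$ obtained by concatenating $P$, the edge $b_1 b_2$, the reverse of $Q$, and the edge $a_2 a_1$; note that $a_1 a_2$ and $b_1 b_2$ are edges since $(A_1, A_2)$ and $(B_1, B_2)$ are complete pairs. Its length is $|P| + |Q| + 2 \geq 1 + 1 + 2 = 4$.

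The main thing to check is that $H$ is an induced cycle, that is, a hole. Inside $V(P)$ and inside $V(Q)$ there are no chords because $P$ and $Q$ are induced paths of $G$. For an edge between $V(P)$ and $V(Q)$: the interior of $P$ lies in $C_1$, hence has no neighbor in $X_2$, and symmetrically the interior of $Q$ has no neighbor in $X_1$; so such an edge would have one end in $\{a_1, b_1\}$ and the other in $\{a_2, b_2\}$. The only such edges allowed by the 2-join definition are $a_1 a_2$ and $b_1 b_2$ (the pairs $(A_1, B_2)$ and $(B_1, A_2)$ carry no edges), and these are exactly the two edges already used to close $H$. Hence $H$ is indeed a hole of length $|P| + |Q| + 2$.

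Applying the same construction to $P'$ and $Q$ yields a hole $H'$ of length $|P'| + |Q| + 2$. Since $G \in \C{parity}{}$, the holes $H$ and $H'$ have the same parity, and therefore $|P|$ and $|P'|$ have the same parity, as required. The symmetric argument handles $i = 2$. I do not anticipate a real obstacle here; the only subtlety is verifying that no unwanted chord appears in the glued cycle, and this is exactly the content of the ``no other edges between $X_1$ and $X_2$'' clause in the definition of a 2-join combined with the fact that interiors of $P, P', Q$ lie in the $C_i$'s.
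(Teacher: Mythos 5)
Your proof is correct and follows essentially the same route as the paper's: the paper likewise picks one path $Q$ on the opposite side (guaranteed by connectedness) and observes that gluing it to each of two paths on side $i$ produces two holes whose lengths differ by the same constant, so the $\C{parity}{}$ hypothesis forces the two paths to have equal parity. Your write-up merely makes explicit the verification that the glued cycle is induced, which the paper leaves implicit.
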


\begin{proof}
  Since $(X_1, X_2)$ is connected there exits a path $P$ with one end
  in $A_{3-i}$, one end in $B_{3-i}$ and interior in $C_{3-i}$.  If
  two paths $Q, R$ from $A_i$ to $B_i$ with interior in $C_i$ are of
  different parity then the holes $P\cup Q$ and $P\cup R$ are of
  different parity, a contradiction to $G\in \C{parity}{}$. 
\end{proof}

\label{sec.defpiece}
Let $G$ be a graph and $(X_1, X_2, A_1, B_1, A_2, B_2)$ be a split of
a connected 2-join of $G$.  Let $k_1, k_2 \geq 1$ be integers.  The
\emph{blocks of decomposition} of $G$ with respect to $(X_1, X_2)$ are
the two graphs $G^{k_1}_1, G^{k_2}_2$ that we describe now.  We obtain
$G^{k_1}_1$ by replacing $X_2$ by a \emph{marker path} $P_2$, of
length $k_1$, from a vertex $a_2$ complete to $A_1$, to a vertex $b_2$
complete to $B_1$ (the interior of $P_2$ has no neighbor in $X_1$).
The block $G_2^{k_2}$ is obtained similarly by replacing $X_1$ by a
marker path $P_1$ of length $k_2$.  We say that $G^{k_1}_1$ and
$G^{k_2}_2$ are {\em parity-preserving} if $G$ is in $\C{parity}{}$, for
$i=1,2$ and for a path $Q_i$ from $A_i$ to $B_i$ whose intermediate
vertices are in $C_i$ (and such a path exists since $(X_1,X_2)$ is
connected), the marker path $P_i$ has the same parity as $Q_i$.  Note
that by Lemma~\ref{l.2jAiBi}, our definition does not depend on the
choice of a particular $Q_i$.

\subsection{Interaction between 2-joins and cutsets}
\label{sec:inter}

Here we show that assuming that a graph does not admit the cutsets we
consider gives several interesting properties for its 2-joins.

\begin{lemma}\label{k1}
  Let $G$ be in $\C{}{no sc}$ and let $(X_1, X_2, A_1, B_1, A_2, B_2)$
  be a split of a 2-join of $G$.  Then the following hold:
  \begin{enumerate}
  \item\label{i:OldConn}
    every component of $G[X_i]$ meets both $A_i$ and $B_i$, $i=1,2$;
  \item
    \label{i:con} 
    $(X_1,X_2)$ is connected;
  \item
    \label{i:neiXi}  
    every $u \in X_i$ has a neighbor in $X_i$, $i=1,2$;
  \item
    \label{i:BiComp1}
    every vertex of $A_i$ has a non-neighbor in $B_i$, $i=1,2$;
  \item
    \label{i:AiComp1}
    every vertex of $B_i$ has a non-neighbor in $A_i$, $i=1,2$;
  \item
    \label{i:geq4} 
    $|X_i|\geq 4$,  $i=1,2$.
\end{enumerate}
\end{lemma}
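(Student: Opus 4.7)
My plan is to prove (i)--(vi) in order, each by contradiction: assuming the stated property fails, I would exhibit a star cutset of~$G$, contradicting $G\in\C{}{no sc}$. The star cutset will always be of the form $S\subseteq N[x]$ for a carefully chosen $x\in S$.

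For (i), suppose some component $K$ of $G[X_1]$ misses $B_1$, so $K\subseteq A_1\cup C_1$. If $K\subseteq C_1$, then $K$ would have no edges to $X_1\setminus K$ (as $K$ is a component) nor to $X_2$ (as $C_1$ is anticomplete to $X_2$), making $K$ a whole connected component of~$G$ --- impossible. If $K\cap C_1\neq\emptyset$, I fix any $u\in A_2$: since $N[u]\cap X_1=A_1$, in $G\setminus N[u]$ the set $K\cap C_1$ is a union of whole components (its $X_1$-neighbours all lie in $K\cap A_1\subseteq N[u]$, and $C_1$ has no $X_2$-neighbours), while $B_1$ sits in a different component, so $N[u]$ is a star cutset. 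The remaining case is $K\subseteq A_1$: for any $v\in K$, analogous reasoning forces $K\setminus N[v]=\emptyset$, so $K$ is a clique. If $|K|\geq 2$, then for distinct $v,u\in K$ the set $S=(K\setminus\{v\})\cup A_2\subseteq N[u]$ is a star cutset isolating $v$. The delicate sub-case $|K|=1$ is a singleton $\{v\}\subseteq A_1$ with no $X_1$-neighbour: if $|A_1|\geq 2$, any other $v'\in A_1$ yields $N[v']$ as a star cutset isolating $v$ (because $v\notin N[v']$ and $A_2\subseteq N[v']$, so $v$ has no surviving neighbour); if $A_1=\{v\}$, I apply the symmetric argument to the components of $G[B_1\cup C_1]$ (which all miss~$A_1$), using $|X_1|\geq 3$ to ensure a usable cutset. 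I expect this singleton sub-case to be the main technical obstacle.

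Parts (ii) and (iii) follow quickly from (i): a shortest path from $A_i$ to $B_i$ inside any component of $G[X_i]$ (which exists by~(i)) must have interior in~$C_i$, since an interior vertex in $A_i\cup B_i$ would allow a shortcut; and if $u\in X_i$ had no neighbour in $X_i$, then $\{u\}$ would be a singleton component of $G[X_i]$ meeting at most one of $A_i,B_i$, violating~(i). For (iv), suppose $a\in A_1$ is complete to $B_1$: then $A_2\cup B_1\subseteq N[a]$, and since the only inter-side edges of the 2-join run between $A_1$ and $A_2$ or between $B_1$ and $B_2$, every inter-side edge is destroyed in $G\setminus N[a]$. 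When $X_1\setminus N[a]$ is non-empty, $N[a]$ itself separates it from $X_2\setminus N[a]\supseteq B_2$ and is the desired star cutset; the sub-case where $a$ is universal in $X_1$ requires choosing a different center (typically some $b\in B_1$, using $|B_1|\geq 2$, or a vertex of $A_1\setminus\{a\}$ or $C_1$). Part (v) is symmetric.

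Finally for (vi), I rule out $|X_1|=3$. The 2-join axiom forbids $(|A_1|,|B_1|,|C_1|)=(1,1,1)$ arranged as the degenerate length-two path with interior in~$C_1$, and every other possibility --- $(2,1,0)$, $(1,2,0)$, or the remaining arrangements of $(1,1,1)$ --- contradicts some combination of (iii) and (iv)/(v). For example, in $(2,1,0)$, property~(iv) forces both $A_1$-vertices to be non-adjacent to the unique $B_1$-vertex, leaving it with no $X_1$-neighbour and contradicting~(iii).
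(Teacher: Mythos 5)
Your proposal is correct and follows essentially the same strategy as the paper's proof: each item is obtained by contradiction through an explicit star cutset, with the same case structure (components of $G[X_i]$ missing $A_i$ or $B_i$ for (i), items (ii)--(iii) derived from (i), a vertex of $A_1$ complete to $B_1$ for (iv)--(v), and the size profiles of a three-vertex side for (vi)). The only differences are cosmetic --- you tend to use the full star $N[x]$ where the paper uses a trimmed subset such as $\{u\}\cup A_1$ or $B_1\cup A_2\cup\{a\}$, which is why your degenerate sub-cases are delimited slightly differently --- and those sub-cases do close exactly with the ingredients you name.
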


\begin{proof}
  To prove~\ref{i:OldConn}, suppose for a contradiction that some
  connected component $C$ of $G[X_1]$ does not intersect $B_1$ (the
  other cases are symmetric).  If there is a vertex $c\in C\sm A_1$
  then for any vertex $u\in A_2$, we have that $\{u\}\cup A_1$ is a
  star cutset that separates $c$ from $B_1$.  So, $C\subseteq A_1$.
  If $|A_1| \geq 2$ then pick any vertex $c\in C$ and a vertex $c'\neq
  c$ in $A_1$.  Then $\{c'\} \cup A_2$ is a star cutset that separates
  $c$ from $B_1$.  So, $C= A_1 = \{c\}$.  Hence, there exists some
  component of $G[X_1]$ that does not intersect $A_1$, so by the same
  argument as above we deduce $|B_1| = 1$ and the unique vertex of
  $B_1$ has no neighbor in $X_1$.  Since $|X_1|\geq 3$, there is a
  vertex $u$ in $C_1$.  For any vertex $v$ in $X_2$, $\{v\}$ is a star
  cutset of $G$ that separates $u$ from $A_1$, a contradiction.

  Item \ref{i:con} follows directly from~\ref{i:OldConn}.

  To prove~\ref{i:neiXi}, just notice that if some vertex in $X_i$ has
  no neighbor in $X_i$, then it forms a component of $G[X_i]$ that
  does not intersect one of $A_i, B_i$.  This is a contradiction
  to~\ref{i:OldConn}.

  To prove~\ref{i:BiComp1} and~\ref{i:AiComp1}, consider a vertex
  $a\in A_1$ complete to $B_1$ (the others cases are symmetric).  If
  $A_1\cup C_1 \neq \{a\}$ then $B_1\cup A_2 \cup \{a\}$ is a star
  cutset that separates $(A_1\cup C_1) \sm \{a\}$ from $B_2$, a
  contradiction.  So, $A_1\cup C_1 = \{a\}$ and $|B_1| \geq 2$ because
  $X_1\geq 3$.  Let $b\neq b' \in B_1$.  So, $\{b, a\} \cup B_2$ is a
  star cutset that separates $b'$ from $A_2$, a contradiction.

  To prove~\ref{i:geq4}, suppose for a contradiction that $|X_1| = 3$.
  Up to symmetry we assume $|A_1| = 1$, and let $a_1$ be the unique
  vertex in $A_1$.  As we just proved, every vertex of $B_1$ has a
  non-neighbor in $A_1$.  Since $A_1 = \{a_1\}$, this means that $a_1$
  has no neighbor in $B_1$.  Since $(X_1, X_2)$ is a connected 2-join
  (because of~\ref{i:con}), $G[X_1]$ must be path of length 2 whose
  interior is in $C_1$.  This contradicts the definition of a 2-join.
\end{proof}

Note that a star cutset of size at least~2 is a skew cutset.  The
following was noticed by Zambelli~\cite{zambelli:these} and is
sometimes very useful.  A proof can be found in~\cite{nicolas:bsp},
Lemma~4.3.

\begin{lemma}
  \label{l.starcutset}
  Let $G$ be a Berge graph of order at least~5 with at least one
  edge. If $G$ has a star cutset then $G$ has a balanced skew
  partition.
\end{lemma}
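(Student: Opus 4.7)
The plan is to start with the given star cutset $S$ centered at $x$, first ensure $|S|\ge 2$, then exhibit a skew partition $(A,B)$ with $B=S$, and finally refine $S$ within $N[x]$ so that the parity conditions for balance follow from Bergeness via closing paths and antipaths through $x$.

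First, if $|S|=1$ then $S=\{x\}$ and $x$ is a cut vertex. Using that $|V(G)|\geq 5$ and $G$ has an edge (so WLOG $G$ is connected, otherwise a skew partition is immediate from the non-edge between components of size $\ge 2$ plus an edge elsewhere), pick components $C_1,C_2$ of $G\sm\{x\}$; some component, say $C_2$, has $|C_2|\ge 3$, and $x$ has a neighbor $z\in C_2$ by connectivity. Then $\{x,z\}\subseteq N[x]$ is still a cutset, since $C_1$ remains separated from $C_2\sm\{z\}\neq\emptyset$. So we may assume $|S|\ge 2$.

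Next, set $A=V(G)\sm S$ and $B=S$. Then $G[A]$ is disconnected by the cutset property; and $x$ is adjacent in $G$ to every vertex of $S\sm\{x\}$, hence $x$ is isolated in $\overline{G[S]}$, so $\overline{G[B]}$ is disconnected. Therefore $(A,B)$ is a skew partition.

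The main step is ensuring balance, and this is the main obstacle because the interior of an arbitrary path with ends in $B$ may contain further neighbors of $x$, wrecking the naive closure argument. The fix is to enlarge $S$ within $N[x]$: fix a bipartition $(A_1,A_2)$ of $A$ with no edges between, and replace $S$ by $S'=S\cup(N(x)\cap A_1)$ provided $A_1\sm N(x)$ and $A_2$ are both non-empty; the remaining degenerate configurations (in particular, $x$ universal, or $A_i\subseteq N(x)$ for both $i$) are handled separately, typically by swapping $A_1$ and $A_2$ or observing that $V(G)\subseteq N[x]$ together with $|V(G)|\ge 5$ forces a different skew partition to appear in $\overline{G}$. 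After the refinement, every interior vertex of an induced path $P$ of length $\ge 2$ with ends in $B=S'$ and interior in $A\sm N(x)$ is a non-neighbor of $x$. If the two ends $b_1,b_2$ of $P$ are distinct from $x$, then either $b_1\sim b_2$ and $P\cup\{b_1b_2\}$ is a hole of length $|P|$, or $b_1\not\sim b_2$ and $P\cup\{b_1x,xb_2\}$ is a hole of length $|P|+2$ (chordless because interior vertices miss $x$); in either case Bergeness forces $|P|$ to be even. The case $x\in\{b_1,b_2\}$ is easier: the closing vertex is already present. Dually, for an induced antipath $Q$ of length $\ge 2$ with ends in $A$ and interior in $B$, the interior cannot contain $x$ (since then the two $\overline{G}$-neighbors of $x$ on $Q$ would lie in $B\subseteq N[x]$, contradicting non-adjacency in $\overline{G}$), so we may close $Q$ through $x$ in $\overline{G}$ to obtain a hole of $\overline{G}$ whose length forces $|Q|$ to be even. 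This verifies the two balance conditions and completes the proof.
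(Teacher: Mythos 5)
The paper does not actually prove this lemma: it attributes the observation to Zambelli and points to Lemma~4.3 of~\cite{nicolas:bsp} for a proof, which in turn rests on the Chudnovsky--Robertson--Seymour--Thomas analysis of ``loose'' skew partitions. That analysis is genuinely nontrivial, and the reason is precisely the case your argument gets wrong: the case where the two ends of the path are adjacent. If $P$ is an induced path of length $\ell\geq 3$ with ends $b_1,b_2\in B$, $b_1b_2\in E(G)$, and interior in $A$, then $V(P)$ induces a hole of length $\ell+1$, so Bergeness forces $\ell+1$ to be even, i.e.\ it forces $\ell$ to be \emph{odd} --- the opposite of what balance requires. (Your ``hole of length $|P|$'' is off by one, but no bookkeeping fixes this: the parity comes out wrong.) Such configurations really occur. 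Take the six-hole $b_1\tp a_1\tp a_2\tp a_3\tp a_4\tp b_2\tp b_1$, add $x$ adjacent exactly to $b_1,b_2$, and add $c$ adjacent exactly to $x$. This graph is Berge (its only hole is the six-hole, and its maximum degree is~$3$, so it contains no antihole on at least~$7$ vertices), has~$8$ vertices and an edge, and $S=\{x,b_1,b_2\}$ is a star cutset separating $c$ from $\{a_1,\dots,a_4\}$. Yet the path $b_1\tp a_1\tp a_2\tp a_3\tp a_4\tp b_2$ has ends in $S$, interior in $A\sm N(x)$, and odd length~$5$, so the skew partition $(V(G)\sm S, S)$ is \emph{not} balanced; your refinement leaves $S$ unchanged here (no neighbor of $x$ lies in the component $\{a_1,\dots,a_4\}$), so it does not help. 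A balanced skew partition of this graph exists (e.g.\ $B=\{x,b_1\}$), but it is not obtained by enlarging $S$ inside $N[x]$: one must choose a different cutset altogether, and that is where the real work of the lemma lies.

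Two further, secondary problems. First, you verify the path condition only for paths ``with interior in $A\sm N(x)$,'' but after your refinement $A$ still contains $N(x)\cap A_2$, and balance requires \emph{every} path with interior in $A$ to be even; paths whose interior meets $N(x)\cap A_2$ cannot be closed through $x$. Second, the antipath direction has the symmetric defect: when the two ends $a,a'\in A$ of the antipath are non-adjacent in $G$ (which necessarily happens when they lie in different components of $G[A]$), the set $V(Q)$ already induces an antihole of length $|E(Q)|+1$, and Bergeness again forces the wrong parity. So the proposal is not a proof; a correct argument must select a new, carefully minimized skew cutset rather than work with (a superset of) the given star cutset, as is done in~\cite{nicolas:bsp} and~\cite{chudnovsky.r.s.t:spgt}.
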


\begin{lemma}
  \label{l:nostar}
  If a graph $G$ is in $\C{Berge}{no bsp}$ and admits a 2-join or a
  homogeneous pair, then neither $G$ nor $\overline{G}$ has a star
  cutset.
\end{lemma}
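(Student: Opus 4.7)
The plan is to reduce everything to Lemma~\ref{l.starcutset} (the Zambelli observation), which converts any star cutset in a sufficiently rich Berge graph into a balanced skew partition. Since $G \in \C{Berge}{no bsp}$ forbids the latter, a star cutset in $G$ would be an immediate contradiction; the same trick will apply to $\overline{G}$ once I check that the hypotheses of the lemma and the hypothesis of having no balanced skew partition both pass to the complement.

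First I would verify the two numerical hypotheses of Lemma~\ref{l.starcutset} for $G$. For the order, a 2-join forces $|X_i|\ge 3$ on each side and hence $|V(G)|\ge 6$, while a homogeneous pair has $|A|,|B|\ge 2$ and $C,D,F$ all nonempty, giving $|V(G)|\ge 7$. For the existence of an edge: the 2-join supplies the complete pair $(A_1,A_2)$ of nonempty sets, and the homogeneous pair supplies the complete pair $(C,A)$; both force $E(G)\ne\emptyset$. The same data also produce a non-edge in $G$ (e.g.\ between $A_1$ and $B_2$ in the 2-join case, or between $B$ and $C$ which is anticomplete in the homogeneous-pair case), so $E(\overline{G})\ne\emptyset$ as well. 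Both $G$ and $\overline{G}$ are Berge, since the class is trivially closed under complementation.

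Next I would observe that the property ``$G$ has a balanced skew partition'' is itself self-complementary: given a split $(A_1,A_2,B_1,B_2)$ of a skew partition $(A,B)$ of $G$, the tuple $(B_1,B_2,A_1,A_2)$ is a split of a skew partition of $\overline{G}$, and the two parity conditions in the definition of ``balanced'' (even induced paths with ends in $B$ interior in $A$, and even induced antipaths with ends in $A$ interior in $B$) simply swap roles when one passes to $\overline{G}$. In particular $\overline{G}$ has no balanced skew partition either.

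Now I can conclude. If $G$ had a star cutset, then by Lemma~\ref{l.starcutset} (applied to $G$, which is a Berge graph of order $\ge 5$ with at least one edge) $G$ would admit a balanced skew partition, contradicting $G\in \C{Berge}{no bsp}$. If $\overline{G}$ had a star cutset, the same lemma applied to $\overline{G}$ would produce a balanced skew partition of $\overline{G}$, which by the self-complementarity noted above is a balanced skew partition of $G$, again a contradiction. The only real content of the argument is these two hypothesis checks together with the self-complementarity of balanced skew partitions; there is no substantive obstacle beyond the bookkeeping, and the proof is essentially a one-paragraph invocation of Lemma~\ref{l.starcutset} on $G$ and on $\overline{G}$.
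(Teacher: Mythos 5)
Your proposal is correct and follows exactly the paper's own argument: check that the 2-join or homogeneous pair forces order at least $5$ and an edge in both $G$ and $\overline{G}$, then invoke Lemma~\ref{l.starcutset} on $G$ and on $\overline{G}$, using that a balanced skew partition of $\overline{G}$ yields one of $G$. Your version merely spells out the hypothesis checks and the self-complementarity in more detail than the paper does.
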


\begin{proof}
  Assume the hypothesis.  Since $G$ has 2-join or a homogeneous pair,
  it is of order at least~5 and both $G$ and $\overline{G}$ have at
  least one edge.  So, by Lemma~\ref{l.starcutset}, if $G$ has a star
  cutset, then $G$ has a balanced skew partition, a contradiction.
  Similarly, by Lemma~\ref{l.starcutset}, if $\overline{G}$ has a star
  cutset, then $\overline{G}$ has a balanced skew partition, and hence
  so does $G$, a contradiction.
\end{proof}

\begin{lemma}
  \label{l:degenerate}
  Let $G \in \C{ehf}{no sc} \cup \C{Berge}{no bsp}$.  If $(X_1, X_2,
  A_1, B_1, A_2, B_2)$ is a split of a 2-join of $G$ then every vertex
  of $A_i$ has a neighbor in $X_i\sm A_i$, $i=1,2$; and every vertex
  of $B_i$ has a neighbor in $X_i\sm B_i$, $i=1,2$.
\end{lemma}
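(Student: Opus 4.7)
The statement is symmetric under interchanging $A$ with $B$ and the two sides of the 2-join, so it suffices to prove that every $a \in A_1$ has a neighbor in $X_1 \setminus A_1 = B_1 \cup C_1$. We argue by contradiction: suppose $a \in A_1$ has all of its neighbors in $A_1 \cup A_2$. Our goal is to produce a star cutset in $G$, which will be our contradiction.

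First, note that in both cases of the hypothesis, $G$ has no star cutset: by definition when $G \in \C{ehf}{no sc}$, and by Lemma~\ref{l:nostar} when $G \in \C{Berge}{no bsp}$ (using that $G$ admits a 2-join). In particular, Lemma~\ref{k1} applies. By Lemma~\ref{k1}(iii), $a$ has a neighbor in $X_1$, necessarily in $A_1 \setminus \{a\}$; hence $|A_1| \geq 2$. A similar Lemma~\ref{k1}-style argument shows that every component of $G[X_j \setminus A_j]$ meets $B_j$ for $j = 1, 2$: otherwise, the set $A_j \cup \{v\}$ for any $v \in A_{3-j}$ would be a star cutset centered at $v$, contradicting the no-star-cutset condition.

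Fix $a_2 \in A_2$ and set $S = (A_1 \setminus \{a\}) \cup \{a_2\}$. Since $a_2 \in A_2$ is complete to $A_1$, in particular to $A_1 \setminus \{a\}$, the set $S$ is a star centered at $a_2$. In $G \setminus S$, the vertex $a$ retains as neighbors exactly $A_2 \setminus \{a_2\}$: its hypothesized absent neighbors in $B_1 \cup C_1$ contribute nothing, the 2-join structure forbids neighbors in $B_2 \cup C_2$, and $A_1 \setminus \{a\}$ has been deleted. If $A_2 = \{a_2\}$ then $a$ is isolated in $G \setminus S$, so $S$ is a star cutset and we are done. Otherwise $|A_2| \geq 2$, and the component of $a$ in $G \setminus S$ lies entirely in $\{a\} \cup (A_2 \setminus \{a_2\}) \cup (B_2 \cup C_2)$; it reaches $B_1$ (via the complete pair $(B_1, B_2)$) only if some vertex of $A_2 \setminus \{a_2\}$ reaches $B_2$ in $G[X_2 \setminus \{a_2\}]$. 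A case analysis on the structure of $G[X_2]$ then lets us either choose $a_2$ so that $S$ is a cutset (e.g., when some vertex of $A_2$ is itself ``bad'' on side 2), or, failing that, replace $S$ by a different star cutset centered at a suitable vertex of $A_2$, using the forbidden-parity constraint on induced cycles in $G$ (no 4-hole in the $\C{ehf}{}$ case, no odd hole in the Berge case) to rule out the problematic connectivity through $B_2$.

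\textbf{Main obstacle.} The difficulty lies in the subcase $|A_2| \geq 2$ where every vertex of $A_2$ has a neighbor in $X_2 \setminus A_2$: there $S$ need not be a cutset for any $a_2$, and one must find a different star cutset by following induced paths through the $(B_1, B_2)$ complete pair and closing them with a vertex of $A_2$ to produce an induced cycle of forbidden parity. The $B_i$ half of the lemma is handled identically after interchanging the roles of $A$ and $B$ throughout the argument.
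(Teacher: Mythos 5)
There is a genuine gap, and you point at it yourself in your ``Main obstacle'' paragraph: in the case $|A_2|\geq 2$ where every vertex of $A_2$ has a neighbor in $X_2\sm A_2$, your set $S=(A_1\sm\{a\})\cup\{a_2\}$ is simply not a cutset (indeed, by Lemma~\ref{k1}~\ref{i:OldConn} every component of $G[X_2]$ meets both $A_2$ and $B_2$, so the connectivity through $B_2$ that you hope to ``rule out'' by a parity argument is in general really there), and the promised case analysis is never carried out. The fix is not a finer case analysis but a different separator: since all neighbors of $a$ lie in $A_1\cup A_2$, the set $Z=(A_1\cup A_2)\sm\{a\}$ contains $N(a)$ and hence separates $a$ from the rest of the graph (which is nonempty, and $A_1\neq\{a\}$ because the 2-join is connected). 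Taking \emph{all} of $A_2$ rather than a single vertex removes the obstruction entirely; the only remaining issue is whether $Z$ is a star, and this is where the two class hypotheses enter in genuinely different ways.

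For $G\in\C{ehf}{no sc}$, the absence of a $4$-hole forces at least one of $A_1,A_2$ to be a clique, and since $(A_1,A_2)$ is a complete pair, $Z$ is then a star cutset of $G$, a contradiction. For $G\in\C{Berge}{no bsp}$ this argument is unavailable ($4$-holes are allowed, so $Z$ need not be a star in $G$), and your plan of producing a star cutset \emph{in $G$} cannot be expected to succeed; the correct move is to pass to the complement: $V(G)\sm Z$ is contained in $N_{\overline{G}}[a]$, and $\overline{G}[Z]$ is disconnected because $A_1\sm\{a\}$ and $A_2$ are nonempty and anticomplete in $\overline{G}$, so $V(G)\sm Z$ is a star cutset of $\overline{G}$ centered at $a$ --- contradicting Lemma~\ref{l:nostar}, which forbids star cutsets in $\overline{G}$ as well as in $G$. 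Both the choice of the full separator $Z$ and the complementation step for the Berge case are missing from your proposal, and without them the argument does not close.
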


\begin{proof}
  By Lemma \ref{l:nostar}, $G$ has no star cutset, so by Lemma
  \ref{k1} \ref{i:con}, $(X_1,X_2)$ is connected.  Consider a vertex
  $a\in A_1$ with no neighbor in $X_1\sm A_1$ (the other cases are
  similar).  We put $Z = (A_1\cup A_2)\sm \{a\}$.  So, $Z$ is a cutset
  that separates $a$ from the rest of the graph.  We note that
  $A_1\neq \{a\}$ because $(X_1, X_2)$ is connected.  If $G\in
  \C{Berge}{no bsp}$ then $V(G)\sm Z$ is a star cutset (centered at
  $a$) of $\overline{G}$ and this contradicts Lemma~\ref{l:nostar}.
  If $G\in\C{ehf}{no sc}$ then we note that at least one of $A_1, A_2$
  is a clique because $G$ contains no 4-hole.  So $Z$ is a star cutset
  of ${G}$, a contradiction.
\end{proof}

\subsection{Staying in the class}
\label{sec:stay}

Here we prove several lemmas of the same flavour, needed later for
inductive proofs and recursive algorithms.  They all say that building
the blocks of a graph with respect to some well chosen 2-join
preserves several properties like being free of cutset or member of
some class.

\begin{lemma}
  \label{l:preservesCycle}
  Let $G$ be a graph in $\C{parity}{no sc}$ and $(X_1, X_2)$ a
  connected 2-join of $G$.  Let $G_1^{k_1}$, $G_2^{k_2}$ be
  parity-preserving blocks of decomposition of $G$ w.r.t.\
  $(X_1,X_2)$, where $k_1, k_2 \geq 2$.  If one of $G_1^{k_1}$,
  $G_2^{k_2}$ contain an odd (resp.\ even) hole, then $G$ contains an
  odd (resp.\ even) hole.
\end{lemma}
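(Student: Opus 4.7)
The plan is to start from a hole $H$ of one block, say $G_1^{k_1}$, and produce a hole of $G$ of the same parity. The dichotomy driving the analysis is whether $V(H)$ meets the interior of the marker path $P_2$.

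Suppose first that $V(H)$ avoids the interior of $P_2$, so $V(H) \subseteq X_1 \cup \{a_2, b_2\}$. If $a_2, b_2 \notin V(H)$, then $H$ is already an induced cycle of $G[X_1]$. Otherwise, whenever $a_2 \in V(H)$ its two $H$-neighbors lie in $A_1$, since outside $P_2$ its neighbors are exactly $A_1$; substituting $a_2 \to a'$ for any $a' \in A_2$ preserves the cycle and introduces no chord, because $a'$ and $a_2$ are both complete to $A_1$ and anticomplete to $X_1 \setminus A_1$. The symmetric substitution handles $b_2$. When both $a_2$ and $b_2$ are replaced, $k_1 \geq 2$ gives $a_2 b_2 \notin E(H)$, so I additionally need to pick $a' \in A_2$ and $b' \in B_2$ non-adjacent; such a pair exists by Lemma~\ref{k1}\ref{i:BiComp1}.

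Now suppose $V(H)$ contains an interior vertex $v$ of $P_2$. Each interior vertex of $P_2$ has degree $2$ in $G_1^{k_1}$, with both neighbors on $P_2$, so following $H$ from $v$ in both directions forces all of $P_2$ to be contained in $H$. The remainder of $H$ is then an induced path $Q$ of $G[X_1]$ with one end $\alpha \in A_1$ (adjacent to $a_2$) and the other end $\beta \in B_1$ (adjacent to $b_2$); its interior lies in $C_1$, for otherwise $H$ would have a chord to $a_2$ or $b_2$. Using that $(X_1, X_2)$ is connected, I choose a path $R$ in $G[X_2]$ from $A_2$ to $B_2$ with interior in $C_2$, and close $Q \cup R$ via the complete pairs $(A_1, A_2)$ and $(B_1, B_2)$. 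The absence of cross-edges between $X_1$ and $X_2$ other than those in these two pairs makes the resulting cycle induced, and its length is $|Q|+|R|+2$ while $|H|=|Q|+|P_2|+2$; the parity-preserving assumption, well-defined by Lemma~\ref{l.2jAiBi}, gives that $|R|$ and $|P_2|$ share parity, so the two holes have the same parity.

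The only real obstacle is the sub-case in which both $a_2, b_2 \in V(H)$: a naive substitution could introduce the chord $a'b'$, and the repair uses that $G$ has no star cutset, applied through Lemma~\ref{k1}.
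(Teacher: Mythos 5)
Your proof is correct and follows essentially the same route as the paper's: the same dichotomy on whether the hole meets the interior of the marker path, the same substitution of $a_2, b_2$ by a non-adjacent pair $a' \in A_2$, $b' \in B_2$ obtained from Lemma~\ref{k1}, and the same replacement of $P_2$ by a path from $A_2$ to $B_2$ with interior in $C_2$ whose parity is controlled by Lemma~\ref{l.2jAiBi}. You merely spell out the sub-cases (one endpoint versus both endpoints of $P_2$ on the hole) a bit more explicitly than the paper does.
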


\begin{proof}
  Let $C$ be a hole in $G_1^{k_1}$ say.  Let $P_2 = a_2 \tp \cdots \tp
  b_2$ be the marker path of $G_1^{k_1}$.  If $C \subseteq X_1 \cup
  \{a_2, b_2\}$ then we obtain a hole $C'$ of $G$ as follows.  By
  Lemma~\ref{k1} \ref{i:BiComp1}, there exist non-adjacent vertices
  $a'_2\in A_2$, $b'_2 \in B_2$.  If $a_2 \in C$ (resp.\ if $b_2 \in
  C$) then we replace $a_2$ (resp.\ $b_2$) by $a'_2$, (resp.\ $b'_2$).
  So, holes $C$ and $C'$ have the same length and in particular the
  same parity.

  So we may assume that $C$ contains interior vertices of $P_2$.  This
  means that $C$ is the union of $P_2$ together with a path $Q$ from
  $A_1$ to $B_1$ with interior in $C_1$.  We obtain a hole $C'$ of $G$
  by replacing $P_2$ by any path of $G$ from $A_2$ to $B_2$ with
  interior in $C_2$.  Such a path exists because $(X_1, X_2)$ is
  connected.  Holes $C$ and $C'$ have the same parity from the
  definition of parity-preserving blocks and Lemma \ref{l.2jAiBi}.
\end{proof}

\begin{lemma}
  \label{l:preservesNoSC}
  Let $G\in \C{ehf}{no sc} \cup
  \C{Berge}{no bsp}$ and let $(X_1, X_2)$
  be a connected 2-join of~$G$.  Let
  $G^{k_1}_1$ and $G^{k_2}_2$ be blocks of decomposition w.r.t.\
  $(X_1,X_2)$.  If $k_1,k_2\geq 3$ then $G^{k_1}_1$ and $G^{k_2}_2$
  are both in $\C{}{no sc}$.
\end{lemma}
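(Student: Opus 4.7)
The plan is to argue by contradiction: assume one of the blocks, say $G^{k_1}_1$, has a star cutset $S$ centered at some vertex $x$. By Lemma~\ref{l:nostar}, in both ambient classes $G$ has no star cutset, so it suffices to lift $S$ to a star cutset of $G$. Writing the marker path as $P_2=p\tp v_1\tp\cdots\tp v_{k_1-1}\tp q$, I split into three cases according to where $x$ sits in $V(G^{k_1}_1)$.

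If $x\in X_1$, the neighbors of $x$ on $P_2$ are confined to $\{p,q\}$ (with $p$ in $S$ only if $x\in A_1$ and $q$ only if $x\in B_1$), and the natural lift $S'=(S\cap X_1)\cup(A_2\text{ if }p\in S)\cup(B_2\text{ if }q\in S)$ lies in $N_G[x]$ because $A_1$--$A_2$ and $B_1$--$B_2$ are complete pairs. To check that $S'$ disconnects $G$, I would observe that any detour through $V(P_2)\setminus S$ in $G^{k_1}_1\setminus S$ is mirrored by a detour through $X_2\setminus S'$ in $G\setminus S'$ (using Lemma~\ref{k1}\ref{i:con}), and conversely. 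If $x=v_i$ is interior to $P_2$, then $S\subseteq\{v_{i-1},v_i,v_{i+1}\}$ (with $v_0=p$, $v_{k_1}=q$); since $k_1\geq 3$ at most one of $p,q$ lies in $S$, and each remnant of $P_2\setminus S$ still meets $A_1\cup B_1$ through $p$ or $q$. Combined with Lemma~\ref{k1}\ref{i:OldConn} (every component of $G[X_1]$ hits both $A_1$ and $B_1$), this shows that $G^{k_1}_1\setminus S$ stays connected, contradicting that $S$ is a cutset.

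The remaining case is $x=p$ (the case $x=q$ being symmetric). Writing $S=S_0\cup\{p\}\cup T'$ with $S_0\subseteq A_1$ and $T'\subseteq\{v_1\}$, the subpath of $P_2\setminus S$ containing $q$ is non-empty (using $k_1\geq 3$) and reaches $X_1\setminus S$ via $q\tp B_1$; so for $S$ to disconnect $G^{k_1}_1$, $S_0$ must be non-empty and some component $Y$ of $G[X_1\setminus S_0]$ must avoid $B_1$. Then $Y\subseteq(A_1\setminus S_0)\cup C_1$ and, using Lemma~\ref{l:degenerate}, no vertex of $Y$ has a $B_1$-neighbor in $G$, so $Y$'s neighbors in $G$ outside $Y$ lie in $S_0\cup A_2$, and $S_0\cup A_2$ separates $Y$ from the rest of $G$. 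If $Y\subseteq C_1$, then $S_0\cup\{a'\}$ with any $a'\in A_2$ is a star cutset of $G$ centered at $a'$. If $Y\cap A_1\neq\emptyset$, then for $G\in\C{ehf}{no sc}$ the absence of a $4$-hole forces $A_1$ or $A_2$ to be a clique, and $S_0\cup A_2$ is a star cutset centered in that clique; for $G\in\C{Berge}{no bsp}$, the quadruple $(Y,V(G)\setminus(Y\cup S_0\cup A_2),S_0,A_2)$ is a skew partition of $G$ and, combined with Lemma~\ref{l:nostar} applied to $\overline{G}$ together with a parity argument on paths and antipaths across this partition, one obtains a balanced skew partition of $G$, contradicting the hypothesis.

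The main obstacle is this last sub-case, namely $G\in\C{Berge}{no bsp}$ with neither $A_1$ nor $A_2$ a clique: here the naive lift $S_0\cup A_2$ is a cutset but not a star, and one must invoke the complement-side of Lemma~\ref{l:nostar} together with a refinement of the induced skew partition to a balanced one. All the other cases reduce to choosing the right lift of $S$ and then a straightforward verification using the 2-join axioms and the auxiliary Lemmas~\ref{k1} and~\ref{l:degenerate}.
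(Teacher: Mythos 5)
Your overall architecture --- lift a putative star cutset $S$ of the block back to a star cutset of $G$, splitting on whether its center lies in $X_1$, at an end of the marker path, or in its interior --- is exactly the paper's, and your treatment of the cases $x\in X_1$, $x$ interior to $P_2$, and $Y\subseteq C_1$ is sound. The problem is the sub-case you yourself single out as ``the main obstacle'': $G\in\C{Berge}{no bsp}$, $x=a_2$, and the separated component $Y$ of $G[X_1\sm S_0]$ meets $A_1$ with neither $A_1$ nor $A_2$ a clique. There you produce only the skew cutset $S_0\cup A_2$ and propose to upgrade the resulting skew partition to a balanced one by ``a parity argument on paths and antipaths'' plus Lemma~\ref{l:nostar} in the complement. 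That upgrade is not carried out, and it is not routine: passing from an arbitrary skew partition of a Berge graph to a balanced one is a substantial theorem in its own right, not available among the lemmas of this paper (Lemma~\ref{l.starcutset} requires a \emph{star} cutset, which is precisely what you do not have here), and Lemma~\ref{l:nostar} applied to $\overline{G}$ gives no control over the parity of paths and antipaths across $(Y,\,V(G)\sm(Y\cup S_0\cup A_2))$. As written, this branch is a genuine gap.

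The gap is avoidable by sub-dividing the case differently, which is what the paper does. If $Y\not\subseteq A_1$, then $Y\cap C_1\neq\emptyset$; since vertices of $C_1$ have no neighbors in $X_2$, every neighbor of $Y\cap C_1$ outside $Y\cap C_1$ lies in $(Y\cap A_1)\cup S_0\subseteq A_1$, so for any $a_2'\in A_2$ the set $A_1\cup\{a_2'\}$ is a star cutset of $G$ centered at $a_2'$ --- no clique hypothesis on $A_1$ or $A_2$ is needed, and this covers both ambient classes at once. If instead $Y\subseteq A_1$, then every vertex of $Y$ has all its $X_1$-neighbors in $Y\cup S_0\subseteq A_1$, contradicting Lemma~\ref{l:degenerate} directly. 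Two smaller remarks: your appeal to Lemma~\ref{l:degenerate} to show that $Y$ has no neighbor in $B_1$ is a misattribution (that is immediate from $Y$ being a component of $G[X_1\sm S_0]$ that avoids $B_1$), and it is in the sub-case $Y\subseteq A_1$ --- which your dichotomy never isolates --- that this lemma is actually needed.
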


\begin{proof}
  By Lemma~\ref{l:nostar}, $G$ is in $\C{}{no sc}$.  Let $(X_1, X_2,
  A_1,B_1,A_2,B_2)$ be a split of $(X_1,X_2)$.  Let $P_2 = a_2 \tp
  \cdots \tp b_2$ be the marker path of $G^{k_1}_1$.

  Suppose that $G_1^{k_1}$ say has a star cutset $S$ centered at
  $x$. If $S \cap P_2 =\emptyset$ then $S$ is a star cutset of $G$, a
  contradiction.  So $S \cap P_2 \neq \emptyset$. If $x \not\in P_2$
  then w.l.o.g.\ $x \in A_1$ and hence $S \cup A_2$ is a star cutset
  of $G$, a contradiction.

  So $x \in P_2$. First suppose that $x$ coincides with $a_2$ or
  $b_2$, say $x=a_2$.  Since $k_1 >1$, vertices of $B_1\cup \{b_2\}$ are
  all contained in the same component $B$ of $G_1^{k_1} \setminus S$.
  Let $C$ be a connected component of $G_1^{k_1} \setminus S$ that is
  distinct from $B$.  If $C \setminus A_1 \neq \emptyset$ then for
  $a_2' \in A_2$, $A_1 \cup \{ a_2' \}$ is a star cutset of $G$, a
  contradiction.  So $C \subseteq A_1$.  Hence, some vertex $c$ of $C$
  is in $A_1$ and has no neighbor in $X_1\sm A_1$, a contradiction
  to Lemma~\ref{l:degenerate}.
  
  Therefore, $x \in P_2 \setminus \{ a_2,b_2 \}$.  Since $(X_1,X_2)$
  satisfies~\ref{i:OldConn} from Lemma~\ref{k1}, both $G[X_1 \cup \{
  a_2\}]$ and $G[X_1 \cup \{b_2\}]$ are connected.  So, both $a_2$ and
  $b_2$ must be in $S$, a contradiction to $k_1\geq 3$.
\end{proof}

\begin{lemma}\label{k:even}
  Let $G\in \C{ehf}{no sc}$.  Let $(X_1, X_2)$ be a connected 2-join
  of $G$ and let $G_1^{k_1}$ and $G_2^{k_2}$ be parity-preserving
  blocks of decomposition.  If $k_1,k_2 \geq 3$ then $G_1^{k_1}$ and
  $G_2^{k_2}$ are both in $\C{ehf}{no sc}$.
\end{lemma}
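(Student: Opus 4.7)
The plan is to combine the two preceding lemmas directly. Observe that $\C{ehf}{no sc} \subseteq \C{Berge}{no bsp}$ is not claimed, but $\C{ehf}{no sc}$ is listed as a hypothesis of Lemma~\ref{l:preservesNoSC}. Also, every even-hole-free graph trivially lies in $\C{parity}{}$ (vacuously, since it contains no even holes at all), and by hypothesis $G$ has no star cutset, so $G\in \C{parity}{no sc}$ and Lemma~\ref{l:preservesCycle} applies.

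First, I would invoke Lemma~\ref{l:preservesNoSC} with the hypothesis $G\in \C{ehf}{no sc}$ and $k_1,k_2 \geq 3$ to conclude that both $G_1^{k_1}$ and $G_2^{k_2}$ have no star cutset. That takes care of the ``no sc'' part.

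Next, to show that the blocks are even-hole-free, I would argue by contradiction. Suppose $G_1^{k_1}$ (say) contains an even hole. Since $G$ is even-hole-free it lies in $\C{parity}{}$, and since $G$ has no star cutset it lies in $\C{parity}{no sc}$. The blocks are parity-preserving by hypothesis and $k_1, k_2 \geq 3 \geq 2$, so Lemma~\ref{l:preservesCycle} applies and yields an even hole in $G$, contradicting $G\in \C{ehf}{}$. The same argument works for $G_2^{k_2}$.

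There is no real obstacle here: the work has been done in Lemmas~\ref{l:preservesCycle} and~\ref{l:preservesNoSC}, and the only thing to check is that the hypothesis $G\in \C{ehf}{no sc}$ is strong enough to feed into both. In particular, one must notice that ``even-hole-free'' implies ``parity'' trivially, so that Lemma~\ref{l:preservesCycle} is indeed applicable, and that the threshold $k_1,k_2 \geq 3$ of the current lemma matches the stronger threshold required by Lemma~\ref{l:preservesNoSC}.
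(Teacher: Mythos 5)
Your proof is correct and follows exactly the paper's argument: the paper's proof of this lemma is precisely the two-line combination of Lemma~\ref{l:preservesNoSC} (for the absence of star cutsets) and Lemma~\ref{l:preservesCycle} (for the absence of even holes). Your extra remarks about $\C{ehf}{}\subseteq\C{parity}{}$ and the matching thresholds on $k_1,k_2$ are accurate checks that the paper leaves implicit.
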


\begin{proof}
  For $i=1, 2$, by Lemma~\ref{l:preservesNoSC}, $G_i^{k_i}$ has no
  star cutset.  By Lemma~\ref{l:preservesCycle}, $G_i^{k_i}$ contains
  no even hole.
\end{proof}

\begin{lemma}
  \label{l:bergeBSP}
  Let $G\in \C{Berge}{no bsp}$ and let $(X_1, X_2)$ be a connected
  non-path 2-join of $G$. Let $G^{k_1}_1$ and $G^{k_2}_2$ be
  parity-preserving blocks of decomposition of $G$ w.r.t.\ $(X_1,
  X_2)$ where $3 \leq k_1, k_2 \leq 4$.  Then $G^{k_1}_1$ and
  $G^{k_2}_2$ are both in $\C{Berge}{no bsp}$.
\end{lemma}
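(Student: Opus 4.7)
The plan is to handle separately the two conditions defining $\C{Berge}{no bsp}$: being Berge and having no balanced skew partition. Let $P_{3-i} = a_{3-i}\tp\cdots\tp b_{3-i}$ denote the marker path of $G^{k_i}_i$. Since $G$ is Berge and, by Lemma~\ref{l:nostar}, has no star cutset, $G \in \C{parity}{no sc}$, and Lemma~\ref{l:preservesCycle} immediately implies that neither block contains an odd hole.

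For the odd-antihole condition, I would argue by contradiction in $G_1^{k_1}$ (the other case being symmetric). The key observation is a degree argument: each interior vertex of $P_2$ has degree~$2$ in $G_1^{k_1}$, while a vertex of an antihole of length $n$ has degree $n-3$ in it; so the interior of $P_2$ can meet an odd antihole $C$ only when $|C|\leq 5$, and an odd antihole of length~$5$ is just $C_5$, already excluded as an odd hole. Thus $C\subseteq X_1\cup\{a_2,b_2\}$, and I would split into three subcases according to $|\{a_2,b_2\}\cap C|$. When this intersection is empty, $C$ is itself an odd antihole of $G$, a contradiction. When it has size~$1$, say $a_2\in C$, one replaces $a_2$ by any $a_2'\in A_2$: since $a_2'$ and $a_2$ have the same neighbors ($A_1$) inside $C\setminus\{a_2\}\subseteq X_1$, $(C\setminus\{a_2\})\cup\{a_2'\}$ is an odd antihole of $G$, again a contradiction. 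When both $a_2,b_2\in C$, the fact that $a_2b_2\notin E(G_1^{k_1})$ (as $k_1\geq 3$) forces each of $a_2,b_2$ to have exactly two non-neighbors in $C$, which in turn gives $|C\cap X_1|\leq 2$, contradicting $|C\cap X_1|=|C|-2\geq 5$.

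For the absence of a balanced skew partition, I would assume for contradiction that $G_1^{k_1}$ has a balanced skew partition $(A,B)$ with split $(A_1',A_2',B_1',B_2')$ and construct one in~$G$. The main step is a case analysis on how $V(P_2)$ distributes over the split: the degree-$2$ interior vertices of $P_2$ are tightly constrained (a degree-$2$ vertex inside $B$ would force strong structural conditions on its two neighbors), so up to symmetry the interior of $P_2$ lies in $A$, and $a_2,b_2$ occupy at most two parts of the split. One then transfers the partition to $G$ by replacing $V(P_2)$ with $X_2$, placing $A_2$ on the side containing $a_2$, $B_2$ on the side containing $b_2$, and $C_2$ in $A$; the complete/anticomplete structure of the 2-join together with Lemma~\ref{k1} shows that this yields a skew partition $(A^\star,B^\star)$ of $G$.

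The main obstacle is then verifying that $(A^\star,B^\star)$ is balanced. This requires translating every odd induced path of length~$\geq 2$ with ends in the complete side and interior in the anticonnected side---and, dually, every odd antipath---between $G$ and $G_1^{k_1}$, and checking that parities agree. The path direction is handled by Lemma~\ref{l.2jAiBi} together with the parity-preservation of the block: any $A_2$-to-$B_2$ path through $C_2$ has the same parity as $P_2$. The antipath direction is the delicate part, and this is precisely where the bounds $3\leq k_1,k_2\leq 4$ enter the argument, limiting how much of $P_2$ a long antipath can absorb. Carrying out this bookkeeping cleanly through all the cases is essentially a block-level application of the techniques of~\cite{nicolas:bsp}.
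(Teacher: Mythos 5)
Your argument for the first half of the conclusion --- that the blocks are Berge --- is correct and self-contained: odd holes are excluded by Lemma~\ref{l:preservesCycle} (applicable because $G\in\C{parity}{no sc}$ by Lemma~\ref{l:nostar}), and your degree count correctly forces any odd antihole of a block to avoid the interior of the marker path, after which replacing $a_2$, $b_2$ by suitable vertices of $A_2$, $B_2$ produces an odd antihole of $G$. That part is fine, and indeed more explicit than the paper, which obtains the whole lemma by citation.

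The genuine gap is in the second half. The entire content of this lemma is that the blocks acquire no balanced skew partition, and your proposal does not prove this: it sketches a plan (locate $V(P_2)$ relative to a hypothetical split, transfer the partition back to $G$, check balancedness) and then concedes that the antipath bookkeeping is ``the delicate part,'' to be handled by ``the techniques of~\cite{nicolas:bsp}.'' That is exactly the part that cannot be waved at. Moreover, several intermediate claims in your sketch are themselves unestablished and not obvious: that the interior of $P_2$ may be assumed to lie in the disconnected side $A$ (a degree-$2$ vertex \emph{can} sit in a skew cutset, since it only constrains one side of the split of $B$ to have at most two vertices); and that your transfer yields a skew partition of $G$ at all (if $a_2$ and $b_2$ both land in $A$, re-inserting the connected set $A_2\cup C_2\cup B_2$ may reconnect the two pieces of $A$; if one of them lands in $B$, the cutset changes shape and one must re-verify both disconnection of $A^\star$ and anticonnectivity failure of $B^\star$). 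The paper sidesteps all of this by invoking Lemmas~4.12 and~4.18 of~\cite{nicolas:bsp}, which assert precisely that the parity-preserving blocks of a \emph{proper} connected non-path 2-join of a graph in $\C{Berge}{no bsp}$ remain in $\C{Berge}{no bsp}$, and by checking properness via Lemma~\ref{l:nostar} and Lemma~\ref{k1}~\ref{i:OldConn}. Either cite that result precisely or carry out the full case analysis; as written, your proposal establishes only that the blocks are Berge.
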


\begin{proof}
  Lemma 4.12 and~4.18 from~\cite{nicolas:bsp} say that if $(X_1, X_2)$
  is \emph{proper} then the conclusion holds, where a \emph{proper
    2-join} means a 2-join that satisfies~\ref{i:OldConn} from
  Lemma~\ref{k1}.  So, by Lemma~\ref{k1}, $(X_1, X_2)$ is proper and
  the conclusion holds.
\end{proof}

\begin{lemma}
  \label{l:HPdegenerate}
  Let $G \in \C{Berge}{no bsp}$ and let $(A, B, C, D, E, F)$ be a
  homogeneous pair of $G$.  Then every vertex of $C$ has a neighbor in
  $E\cup D$ and every vertex of $D$ has a neighbor in $E\cup C$.
\end{lemma}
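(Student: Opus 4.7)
The plan is to proceed by contradiction, aiming to produce a star cutset of either $G$ or $\overline{G}$, which is forbidden by Lemma~\ref{l:nostar} (since $G \in \C{Berge}{no bsp}$ admits a homogeneous pair). The definition of homogeneous pair is symmetric under the simultaneous swap of $A$ with $B$ and $C$ with $D$, so it suffices to prove the statement about $C$. Suppose then that some $c \in C$ has no neighbor in $D \cup E$; combined with the homogeneous pair axioms (which force $c$ to be complete to $A$ and anticomplete to $B$), this gives $A \subseteq N_G(c) \subseteq A \cup (C \setminus \{c\}) \cup F$.

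My first candidate is the star cutset $S = N_{\overline{G}}[c]$ of $\overline{G}$, centered at $c$. It suffices to show that $\overline{G} \setminus S = \overline{G}[N_G(c)]$ is disconnected. Because the pairs $(A,C)$ and $(A,F)$ are complete in $G$, they are anticomplete in $\overline{G}$; hence inside $\overline{G}[N_G(c)]$ the set $A$ has no edge to $T := N_G(c) \setminus A \subseteq (C \setminus \{c\}) \cup F$. When $T$ is non-empty, $\overline{G}[N_G(c)]$ is visibly disconnected into the non-empty pieces $A$ and $T$, producing the required star cutset in $\overline{G}$ and contradicting Lemma~\ref{l:nostar}.

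The hard case---and the main obstacle---is $T = \emptyset$, that is, $N_G(c) = A$. Here $\overline{G}[N_G(c)] = \overline{G}[A]$; if $A$ happens to be a clique in $G$ then $\overline{G}[A]$ is edgeless on $|A| \geq 2$ vertices, hence still disconnected, and the same cutset $S$ works. When $A$ is not a clique, I would instead look at $\{c\} \cup A$ as a potential star cutset of $G$ centered at $c$: its complement already contains the connected set $B \cup D \cup F$ (from the complete pairs $(B,D)$ and $(B,F)$), and I would exhibit a vertex of $(C \setminus \{c\}) \cup E$ that is separated from this set, using a non-edge $a_1 a_2$ inside $A$ together with the mixed adjacency between $A$ and $B$ to force the asymmetry. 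This last step is the delicate one, since the homogeneous pair axioms place no direct constraints on the internal structure of $A$, $C$, $E$, or $F$; if the simple $\{c\} \cup A$ fails to separate, one can fall back on a star cutset of $\overline{G}$ centered at a suitable vertex of $A$, exploiting that $a_1$ and $a_2$ together with $c$ form an induced $P_3$ that must persist to a witness of non-connectedness.
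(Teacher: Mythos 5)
Your first two cases are fine, but the case you yourself flag as delicate --- $N_G(c)=A$ with $A$ not a clique --- is a genuine gap, and neither of your proposed fallbacks closes it. The cutset $\{c\}\cup A$ of $G$ need not separate anything: $G\sm(\{c\}\cup A)$ contains the connected set $B\cup D\cup F$, and every vertex of $(C\sm\{c\})\cup E$ may well have a neighbor in $F$ (the homogeneous pair axioms put no constraint on edges between $C\cup E$ and $F$), in which case the graph stays connected. The fallback via a star cutset of $\overline{G}$ centered in $A$ is only a hope; you give no candidate cutset and no reason one exists.

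The trouble comes from choosing the \emph{largest} possible star, $N_{\overline{G}}[c]$, whose removal can leave only $A$ behind. The paper instead removes the smaller star $S=\{c\}\cup B\cup D\cup E\subseteq N_{\overline{G}}[c]$ (equivalently, it observes that $(A\cup C\cup F)\sm\{c\}$ is a skew cutset separating $c$ from $B\cup D\cup E$). What remains of $\overline{G}$ after deleting $S$ is $\overline{G}[(A\cup C\cup F)\sm\{c\}]$, which contains \emph{all} of $F$; since $F$ is non-empty by the definition of a homogeneous pair and $(A,C)$, $(A,F)$ are complete in $G$, there are no $\overline{G}$-edges between the non-empty sets $A$ and $(C\sm\{c\})\cup F$, so this remainder is always disconnected --- no case analysis on $N_G(c)$ or on the internal structure of $A$ is needed. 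With that one change (keep the $\overline{G}$-neighbors of $c$ inside $C\cup F$ on the far side of the cutset rather than inside it), your argument collapses to the paper's one-line proof; as written, it is incomplete.
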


\begin{proof}
  If there exists a vertex $c\in C$ with no neighbor in $E\cup D$ then
  $(A \cup C \cup F) \sm \{c\}$ is a skew cutset that separates $c$
  from the rest of the graph.  Hence, $\overline{G}$ has a star-cutset
  centered at $c$, a contradiction to Lemma~\ref{l:nostar}.  The
  case with $d\in D$ is similar.
\end{proof}

Berge graphs have a particular problem: their decomposition theorems
allow 2-joins in the complement.  And swapping to the complement makes
difficult keeping track of maximum stable sets and cliques.  The
following lemma shows how to bypass this problem.

\begin{lemma}
  \label{l:recurseBerge}
  Let $G\in \C{Berge}{no cutset}$ and let $(X_1, X_2)$ be a connected
  non-path 2-join of $G$.  Let $G^{k_1}_1$ and $G^{k_2}_2$ be
  parity-preserving blocks of decomposition of $G$ w.r.t.\ $(X_1,
  X_2)$ where $3\leq k_1, k_2 \leq 4$.  Then $G^{k_1}_1$ and
  $G^{k_2}_2$ are in $\C{Berge}{no cutset}$.  
\end{lemma}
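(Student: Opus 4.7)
The plan is as follows. By Lemma~\ref{l:bergeBSP}, both blocks $G_1^{k_1}$ and $G_2^{k_2}$ already belong to $\C{Berge}{no bsp}$, so what remains is to rule out the two other bad structures in the definition of $\C{Berge}{no cutset}$: a homogeneous pair, and a connected non-path 2-join in the complement. By symmetry I focus on $G_1^{k_1}$. Write $P_2 = a_2 \tp p_1 \tp \cdots \tp p_{k_1-1} \tp b_2$ for the marker path; it has $k_1 - 1 \in \{2, 3\}$ interior vertices $p_j$, each of degree exactly~2 in $G_1^{k_1}$.

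Suppose first, for a contradiction, that $G_1^{k_1}$ admits a homogeneous pair $(A, B, C, D, E, F)$. A vertex of $A$ is complete to the non-empty disjoint sets $C$ and $F$ and has a neighbor in $B$, so it has degree at least~3; symmetrically for $B$. Hence no interior vertex $p_j$ of $P_2$ lies in $A \cup B$. A further short analysis using the (anti)complete relations of $C, D, E, F$ with $A, B$, combined with the fact that a $p_j \in C$ would force all of $A$ into the two path-neighbors of $p_j$ and hence $k_1 \leq 2$, pins every $p_j$ down into $E$. The positions of the endpoints $a_2, b_2$ are then constrained, and in each remaining configuration I replace $P_2$ by $X_2$, putting the portions of $X_2$ that meet $A_2, B_2, C_2$ (with $C_2 = X_2 \sm (A_2 \cup B_2)$) into the parts that previously contained the corresponding portion of $P_2$. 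Using Lemma~\ref{l:degenerate} to keep all required parts non-empty, the resulting six-tuple is verified to be a homogeneous pair of $G$, contradicting $G \in \C{Berge}{no cutset}$.

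Suppose next that $\overline{G_1^{k_1}}$ admits a connected non-path 2-join with split $(Y_1, Y_2, A'_1, B'_1, A'_2, B'_2)$. In $G_1^{k_1}$ this reads: $A'_1$ is anticomplete to $A'_2$ and $B'_1$ is anticomplete to $B'_2$, while every other pair between $Y_1$ and $Y_2$ is complete. An interior vertex $p_j$ has degree~2; if $p_j \in Y_1$ then $p_j$ is complete in $G_1^{k_1}$ to $Y_2 \sm (A'_2 \cup B'_2)$, so this set has size at most~2 and $p_j \in A'_1 \cup B'_1$. Because $k_1 \leq 4$ the number of configurations is small. In each configuration I replace $P_2$ by $X_2$ and check that the resulting partition of $V(G)$ is still a 2-join of $\overline{G}$, that it is connected, and, crucially, that it is non-path. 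Preserving the non-path property is precisely where the hypothesis that $(X_1,X_2)$ is itself non-path enters: if the lifted 2-join in $\overline{G}$ were a path 2-join, the path side would have to be the side coming from $X_2$, forcing $(X_1,X_2)$ to be a path 2-join too. This contradicts $G \in \C{Berge}{no cutset}$.

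The main obstacle is the case analysis inside each lifting argument. In every sub-case one has to check that the substituting sets are non-empty, that the defining complete and anticomplete relations are preserved, that connectedness survives, and, for the 2-join in the complement, that the lift remains non-path. The tools that make this manageable are Lemma~\ref{k1}, which yields connectedness, the bounds $|X_i|\geq 4$, and non-triviality of $A_i, B_i$; Lemma~\ref{l:degenerate}, which forbids vertices of $A_i \cup B_i$ with no neighbor outside $A_i$ or $B_i$ and hence keeps substitutions from collapsing a part to emptiness; and the length bounds $3 \leq k_1, k_2 \leq 4$, which simultaneously prevent the marker path from supporting non-trivial internal structure (so interior vertices are forced into very few locations) and keep the list of cases finite.
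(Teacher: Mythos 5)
Your first step (Lemma~\ref{l:bergeBSP} puts both blocks in $\C{Berge}{no bsp}$) and your treatment of the homogeneous pair coincide with the paper: the paper also shows the interior marker vertices must lie in $E$ (it gets degree at least~3 for all of $A,B,C,D,F$ via Lemma~\ref{l:HPdegenerate} rather than your ad hoc counting, but the effect is the same) and then lifts to a homogeneous pair of $G$ by substituting $A_1$, $B_1$, $C_1$ for the corresponding pieces of the marker path. That part of your plan is sound.

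The gap is in the complement 2-join. You propose to lift a connected non-path 2-join $(Y_1,Y_2)$ of $\overline{G_1^{k_1}}$ to one of $\overline{G}$ by ``replacing $P_2$ by $X_2$,'' but the configurations your own degree analysis leaves open are exactly those in which this replacement is not defined: the interior vertices of $P_2$ are forced into $A'_1\cup B'_1\cup A'_2\cup B'_2$, and nothing prevents them (and the endpoints $a_2,b_2$) from landing on \emph{different} sides of $(Y_1,Y_2)$, in which case there is no coherent side of the lifted partition to receive $X_2$. Moreover your argument that the lift is non-path --- ``the path side would have to be the side coming from $X_2$, forcing $(X_1,X_2)$ to be a path 2-join'' --- conflates paths in $G$ with paths in $\overline{G}$: non-pathness of $G[X_i]$ says nothing about whether a side of the lifted partition induces a path in the \emph{complement}. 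The paper sidesteps both problems by never lifting: it shows that $\overline{G_2^{k_2}}$ has \emph{no} 2-join at all. Since the marker path's interior vertices have degree $|V|-3$ in the complement and $|X'_i|\geq 4$ by Lemma~\ref{k1}, they cannot lie in $C'_1\cup C'_2$; the two remaining placements (up to symmetry, $c_1\in A'_1$ with $c_2\in B'_2$, or $c_1\in A'_1$ with $c_2\in B'_1$) each force specific vertices into specific attachment sets and then contradict either the definition of a 2-join or Lemma~\ref{k1}~\ref{i:BiComp1}, all inside the block. If you carry out your case analysis honestly you will discover the same thing --- every configuration is internally inconsistent --- but as written your plan relies on a lifting step that cannot be executed and a non-path argument that does not hold.
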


\begin{proof}
  Note that $G$ is in $\C{Berge}{no bsp}$.  By Lemma~\ref{l:bergeBSP},
  $G^{k_1}_1$ and $G^{k_2}_2$ are both Berge graphs and $G^{k_1}_1$
  and $G^{k_2}_2$ have no balanced skew partition.  Because of the
  symmetry, we just have to prove the following two claims:

  \begin{claim}
    $\overline{G^{k_2}_2}$ has no connected 2-join.  
  \end{claim}
  
  \begin{proofclaim}
    Note that $G^{k_2}_2$ has a 2-join (a path 2-join).  So by
    Lemma~\ref{l:nostar}, $\overline{G^{k_2}_2}$ has no star cutset.
    In $G^{k_2}_2$, we denote by $P_1 = a_1 \tp c_1 \tp c_2 \tp c' \tp
    b_1$ the marker path, where $a_1$ is complete to $A_2$ and $b_1$
    to $B_2$.  Note that this path may be of length 3 or 4.  If it is
    of length 3, then we suppose $c'=b_1$, this is convenient to avoid
    a multiplication of cases.  Note that by Lemma
    \ref{k1} \ref{i:con}, any 2-join of $\overline{G_2^{k_2}}$ is
    connected.   Let us suppose for a contradiction
    that $\overline{G^{k_2}_2}$ has a 2-join.   Let $(X'_1, X'_2, A'_1, B'_1, A'_2, B'_2)$ be a split
    of this 2-join.  We put $C'_i = X'_i\sm (A'_i \cup B'_i)$, $i=1,
    2$.
    
    Since $c_1, c_2$ are of degree $2$ in $G^{k_2}_2$, they are of
    degree $|V(\overline{G_2^{k_2}})|-3$ in $\overline{G^{k_2}_2}$.
    Since by Lemma~\ref{k1} \ref{i:geq4}, $|X'_i|\geq 4$, $i= 1, 2$,
    we must have $c_1, c_2 \in A'_1 \cup A'_2 \cup B'_1 \cup B'_2$.
    Also, $c_1, c_2$ are non-adjacent in $\overline{G^{k_2}_2}$ so up
    to symmetry there are two cases.  Here below, the words
    ``neighbor'' and ``non-neighbor'' refer to adjacency in
    $\overline{G^{k_2}_2}$.

    Case 1: $c_1\in A'_1$, $c_2 \in B'_2$. Then by Lemma~\ref{k1},
    \ref{i:BiComp1} applied to $(X'_1, X'_2)$, $c_1$ must have a
    non-neighbor in $B'_1$, and this non-neighbor must be $a_1$ ($a_1,
    c_2$ are the only non-neighbors of $c_1$).  So, $a_1 \in B'_1$.
    Similarly, $c_2$ must have a non-neighbor in $A_2'$, and this
    non-neighbor must be $c'$, i.e.\ $c' \in A'_2$.  But then, since
    $a_1c'$ is an edge of $\overline{G_2^{k_2}}$, this contradicts the
    definition of a 2-join.

    Case 2: $c_1\in A'_1$, $c_2 \in B'_1$.  We must have $B'_2 =
    \{a_1\}$ because $a_1, c_2$ are the only non-neighbors of $c_1$.
    For the same reason, $A'_2 = \{c'\}$.  Since $a_1$ is a neighbor
    of $c'$, there is a contradiction to Lemma~\ref{k1}
    \ref{i:BiComp1} applied to $(X'_1, X'_2)$.
  \end{proofclaim}

  \begin{claim}
    $G^{k_2}_2$ has no homogeneous pair.      
  \end{claim}
  
  \begin{proofclaim}
    Suppose for a contradiction that $(A, B, C, D, E, F)$ is a
    homogeneous pair of $G^{k_2}_2$.  It is convenient to use now a
    slightly different notation for the marker path $P_1 = a_1 \tp c_1
    \tp c_2 \tp c' \tp b_1$.  Now, when the path is of length 3, we
    suppose $c'=c_2$.

    By Lemma~\ref{l:HPdegenerate}, every vertex in $A, B, C, D, F$ has
    degree at least 3.  So, $c_1, c_2, c' \in E$.  We will reach the
    contradiction by giving an homogeneous pair $(A', B', C', D', E',
    F')$ of $G$. 

    In $E'$, we put every vertex of $E\sm \{c_1, c_2, c'\}$, and we
    add $C_1$.  If $E$ contains $a_1$, we add $A_1$ to $E'$.  If $E$
    contains $b_1$, we add $B_1$ to $E'$.  We set $A'=A$ and $B'=B$.
    In $C'$, we put every vertex of $C\sm\{a_1, b_1\}$.  If $C$
    contains $a_1$, we add $A_1$ to $C'$.  If $C$ contains $b_1$, we
    add $B_1$ to $C'$.  We define similarly $D'$, $F'$ from $D$ and
    $F$ respectively.  Now we observe that $(A', B', C', D', E', F')$
    partitions $V(G)$ and is a homogeneous pair of $G$ (note that $E'$
    is possibly empty).
  \end{proofclaim}
  \mbox{}
\end{proof}

\section{Extreme 2-joins}
\label{sec:extrem}

In this section, we show how to find a 2-join in a graph so that one
of the blocks has no more 2-joins.  The idea behind this is that when
$(X_1, X_2)$ is a 2-join of $G$ and the block of decomposition
$G_1^{k_1}$ has no more decomposition then it is basic.  So a lot of
computations can be made on $G[X_1]$ without leading to an exponential
complexity.

Let $(X_1,X_2)$ be a connected non-path 2-join of a graph $G$.  We say
that $(X_1,X_2)$ is a {\em minimally-sided connected non-path 2-join}
if for some $i\in \{ 1,2 \}$, the following holds: for every connected
non-path 2-join $(X_1',X_2')$ of $G$, neither $X_1' \subsetneq X_i$
nor $X'_2 \subsetneq X_i$ holds.  We call $X_i$ a {\em minimal side}
of this minimally-sided 2-join.  Note that minimally-sided connected
non-path 2-joins exist in any graph that admits a connected non-path
2-join.

Let $(X_1,X_2)$ be a connected non-path 2-join of $G$. We say that
$(X_1,X_2)$ is an {\em extreme 2-join} if for some $i\in \{ 1,2 \}$
and all $k \geq 3$ the block of decomposition $G_i^{k}$ has no
connected non-path 2-join.  We say that $X_i$ is an \emph{extreme
  side} of such a 2-join.  Figure~\ref{fige2j} shows that graphs in
general do not have an extreme 2-join, but as we now show, graphs with
no star cutset do.

  \begin{figure}[h!]
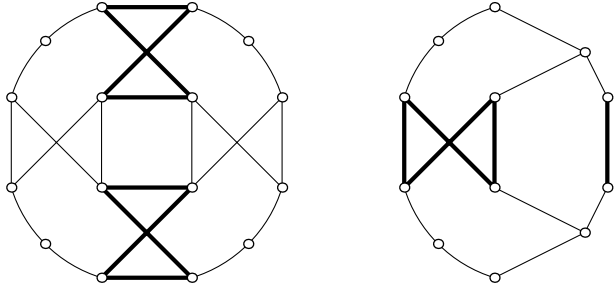

   \begin{center}
     \includegraphics{lineExt.6}   \rule{3em}{0ex}  \includegraphics{lineExt.7}
\end{center}
\caption{\label{fige2j} Graph $G$ (on the left) has a star
  cutset, but does not have an extreme connected non-path 2-join: $G$
  has a connected non-path 2-join represented with bold lines, and all
  connected non-path 2-joins are equivalent to this one. Both of the
  blocks of decomposition are isomorphic to graph $H$ (on the right),
  and $H$ has a connected non-path 2-join whose edges are represented
  with bold lines.}
\end{figure}

\begin{lemma}\label{k21}
  Let $G \in \C{}{no sc}$ and let $(X_1,X_2,A_1,B_1,A_2,B_2)$ be a
  split of a connected non-path 2-join of $G$.  If $|A_1|=1$ then
  $(X_1 \setminus A_1,X_2 \cup A_1,N_{G[X_1]}(A_1),B_1,A_1,B_2)$ is a
  split of a connected non-path 2-join of $G$.
\end{lemma}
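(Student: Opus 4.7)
The plan is to verify directly that the proposed tuple $(X_1',X_2',A_1',B_1',A_2',B_2')=(X_1\sm A_1,\,X_2\cup A_1,\,N_{G[X_1]}(A_1),\,B_1,\,A_1,\,B_2)$ is a split of a 2-join, and then that this 2-join is connected and non-path. Write $A_1=\{a_1\}$. Most of the axioms are essentially formal: $A_1'$ is complete to $A_2'=\{a_1\}$ by the very definition of $A_1'$; $B_1'=B_1$ is complete to $B_2'=B_2$ by the old split; $A_1'$ is non-empty by Lemma~\ref{k1} \ref{i:neiXi}; and $|X_1'|=|X_1|-1\geq 3$, $|X_2'|\geq 4$ follow from Lemma~\ref{k1} \ref{i:geq4}. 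The one axiom that is not automatic is the disjointness $A_1'\cap B_1'=\emptyset$, equivalently, that $a_1$ has no neighbor in $B_1$; this is where the hypothesis $G\in\C{}{no sc}$ enters and is the step I expect to be the main obstacle.

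For this non-adjacency I would argue by contradiction. Assume $a_1$ has a neighbor $b\in B_1$ and consider $S=\{a_1,b\}\cup B_2$. Since $b$ is complete to $B_2$ by the old 2-join and adjacent to $a_1$ by assumption, $S\subseteq N[b]$, so $S$ is a star centered at $b$. To see that $S$ is a cutset, observe that after deleting $S$ every edge of $G$ joining $X_1\sm\{a_1,b\}$ to $X_2\sm B_2$ would have to lie in one of the two complete pairs $A_1$-$A_2$ or $B_1$-$B_2$; but $A_1=\{a_1\}$ and $B_2$ have both been removed. Both residual sides are non-empty ($|X_1|\geq 4$ by Lemma~\ref{k1} \ref{i:geq4}, and $A_2\neq\emptyset$), so $S$ is a star cutset of $G$, contradicting $G\in\C{}{no sc}$.

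Once $a_1$ has no neighbor in $B_1$, the remaining pieces fall into place. The ``no other edges between $X_1'$ and $X_2'$'' axiom is immediate, since the only edges from $X_1\sm A_1$ into $\{a_1\}$ are those defining $A_1'$, and the only edges from $X_1\sm A_1$ into $X_2$ come from the old $B_1$-$B_2$ pair. The exclusion of a length-2 path side on $X_2'$ is vacuous because $|X_2'|\geq 4$. For $X_1'$, if $G[X_1']$ were such a length-2 path from $A_1'$ to $B_1$ with interior in $C_1'$, then re-attaching $a_1$ (adjacent in $X_1$ only to the unique vertex of $A_1'$, by the non-adjacency to $B_1$ just established) would make $G[X_1]$ a length-3 path from $A_1$ to $B_1$ with interior in $C_1$, so $(X_1,X_2)$ would be a path 2-join, a contradiction. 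Connectedness of the new 2-join follows from that of $(X_1,X_2)$: on the $X_2'$ side one prepends $a_1$ to any connecting path of $G[X_2]$; on the $X_1'$ side one takes a connecting path of $G[X_1]$ starting at $a_1$ and trims it at its last vertex lying in $A_1'$, so that the subsequent interior lies in $C_1\sm A_1'=C_1'$. Non-pathness is proved by the same ``reinstate $a_1$ and its neighborhood'' reduction: any path side of the new 2-join pulls back to a path side of $(X_1,X_2)$, contradicting the hypothesis.
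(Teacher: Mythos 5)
Your proof is correct and follows essentially the same route as the paper's: the one substantive point, that $a_1$ is anticomplete to $B_1$ (so that $N_{G[X_1]}(A_1)\cap B_1=\emptyset$), is something the paper reads off directly from Lemma~\ref{k1}~\ref{i:AiComp1} (every vertex of $B_1$ has a non-neighbor in $A_1=\{a_1\}$), whereas you re-derive it with the explicit star cutset $\{a_1,b\}\cup B_2$ centered at $b$ --- a valid argument of exactly the kind the paper uses to establish that item of Lemma~\ref{k1} in the first place. The remaining verifications (the other 2-join axioms, connectedness via Lemma~\ref{k1}~\ref{i:con}, and non-pathness by reinstating $a_1$) are routine and match what the paper's one-line justification leaves implicit.
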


\begin{proof}
  Assume $A_1=\{ a_1 \}$.  So by Lemma~\ref{k1} \ref{i:AiComp1}, $a_1$
  has no neighbor in $B_1$.  Let $A_1'=N_{G[X_1]}(a_1)$. Then $A_1'
  \cap B_1=\emptyset$.  By Lemma \ref{k1} \ref{i:geq4}, $|X_1
  \setminus A_1| \geq 3$.  Since $(X_1,X_2)$ is a non-path 2-join of
  $G$, it follows that $(X_1 \setminus A_1,X_2 \cup A_1, A_1', B_1,
  A_1, B_2)$ is a split of a non-path 2-join of $G$.  By Lemma
  \ref{k1} \ref{i:con}, $(X_1 \setminus A_1,X_2 \cup A_1)$ is
  connected.
\end{proof}

\begin{lemma}\label{k22}
  Let $G\in \C{}{no sc}$ and let $(X_1,X_2,A_1,B_1,A_2,B_2)$ be a
  split of a minimally-sided connected non-path 2-join of $G$. 
Let $X_1$ be a minimal side.
Then
  $|A_1| \geq 2$ and $|B_1| \geq 2$.  In particular, $A_2 \cup B_2$
  contains only vertices of degree at least 3.
\end{lemma}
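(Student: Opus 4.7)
The plan is to first establish $|A_1|\geq 2$ (and symmetrically $|B_1|\geq 2$) by contradiction using Lemma~\ref{k21}, and then deduce the degree condition from Lemma~\ref{k1}.

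Suppose for a contradiction that $|A_1|=1$. Note that swapping the roles of $A_i$ and $B_i$ yields another split of the same 2-join, so I can freely apply Lemma~\ref{k21} on the $A_1$-side. By Lemma~\ref{k21}, $(X_1\setminus A_1,\, X_2\cup A_1,\, N_{G[X_1]}(A_1),\, B_1,\, A_1,\, B_2)$ is a split of a connected non-path 2-join of $G$. But $X_1\setminus A_1\subsetneq X_1$, which contradicts the assumption that $X_1$ is a minimal side of a minimally-sided connected non-path 2-join. Hence $|A_1|\geq 2$, and the same argument (applied to the split obtained by swapping $A_1\leftrightarrow B_1$ and $A_2\leftrightarrow B_2$) gives $|B_1|\geq 2$.

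For the degree statement, let $v\in A_2$. Since every vertex of $A_2$ is complete to $A_1$ and $|A_1|\geq 2$, the vertex $v$ has at least two neighbors in $A_1\subseteq X_1$. By Lemma~\ref{k1}\ref{i:neiXi}, $v$ also has at least one neighbor in $X_2$, which is disjoint from $X_1$. Therefore $v$ has degree at least~$3$ in $G$. The case $v\in B_2$ is analogous, using $|B_1|\geq 2$.

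The only non-routine point is recognizing that Lemma~\ref{k21} directly supplies a strictly smaller side, so that the contradiction with the minimality hypothesis is immediate; there is no genuine obstacle, and in particular the stronger hypotheses of Lemma~\ref{l:degenerate} are not needed here because the degree bound follows already from completeness to $A_1$ (resp.\ $B_1$) together with Lemma~\ref{k1}\ref{i:neiXi}.
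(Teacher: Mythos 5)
Your proof is correct and follows essentially the same route as the paper: apply Lemma~\ref{k21} to the side of cardinality one to produce a connected non-path 2-join with a side strictly contained in $X_1$, contradicting minimality, and then derive the degree bound from completeness to $A_1$ (resp.\ $B_1$) together with Lemma~\ref{k1}~\ref{i:neiXi}. The paper's own proof is just a terser version of exactly this argument.
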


\begin{proof}
  If $|A_1|=1$ then by Lemma \ref{k21}, $(X_1 \setminus A_1,X_2 \cup
  A_1)$ is a connected non-path 2-join of $G$, contradicting the
  assumption that $(X_1,X_2)$ is minimally sided connected non-path
  2-join of $G$.  So $|A_1| \geq 2$, and by symmetry $|B_1| \geq 2$.
  By Lemma \ref{k1} \ref{i:neiXi}, all vertices of $A_2\cup B_2$ have
  degree at least 3.
\end{proof}

Recall that a {\em flat path} of a graph $G$ is any path of $G$ of length
at least~2, whose interior vertices all have degree 2 in $G$, and whose
ends have no common neighbors outside the path.

\begin{lemma}\label{l:pathThrough}
  Let $G\in \C{}{no sc}$ and let $(X_1,X_2,A_1,B_1,A_2,B_2)$ be a
  split of a minimally sided connected non-path 2-join of $G$.  Let
  $X_1$ be a minimal side, and let $P$ be a flat path of $G$.  If $P
  \cap X_1 \neq \emptyset$ and $P \cap X_2 \neq \emptyset$, then
  one of the following holds:
  \begin{enumerate}
  \item\label{i:onev} For an endvertex $u$ of $P$, $P \setminus u
    \subseteq X_1$ and $u \in A_2 \cup B_2$.
  \item\label{i:twov} For endvertices $u$ and $v$ of $P$, $u \in A_2$, $v
    \in B_2$, $P \setminus \{ u,v \} \subseteq X_1$, the length of $P$
    is at least 3 and $G[X_1]$ has exactly two connected components
    that are both a path with one end in $A_1$, one end in $B_1$ and
    interior in $C_1$.
  \end{enumerate}
\end{lemma}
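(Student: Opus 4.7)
The plan is to trace the position of every vertex of $P$ with respect to the split $(X_1,X_2,A_1,B_1,A_2,B_2)$, using that interior vertices of $P$ have $G$-degree exactly~$2$ while every vertex of $A_2\cup B_2$ has degree at least~$3$ by Lemma~\ref{k22}.

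First I would show that the interior of $P$ is contained in $X_1$. For any interior vertex $w$, the equality $\deg_G(w)=2$ combined with Lemma~\ref{k22} gives $w\notin A_2\cup B_2$; in particular a transition between $X_1$ and $X_2$ cannot occur on an edge $p_ip_{i+1}$ both of whose endpoints are interior (one endpoint would be forced into $A_2\cup B_2$), so $p_1,\dots,p_{n-1}$ all lie on the same side of the 2-join. They cannot all lie in $X_2$: otherwise they would all lie in $C_2$, which would force $p_0$ and $p_n$ into $X_2$ as well (a crossing edge $p_0p_1$ would put $p_1$ into $A_2\cup B_2$), contradicting $P\cap X_1\neq\emptyset$. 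Hence the interior of $P$ is in $X_1$. If exactly one end $u$ lies in $X_2$ we obtain conclusion~\ref{i:onev}: the $P$-neighbor of $u$ lies in $X_1$, forcing $u\in A_2\cup B_2$.

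Otherwise both $u=p_0$ and $v=p_n$ are in $A_2\cup B_2$. Suppose for contradiction $u,v\in A_2$; then $p_1\in A_1$ is complete to $A_2$, and since $\deg_G(p_1)=2$ while Lemma~\ref{k1}\ref{i:neiXi} supplies $p_1$ with an $X_1$-neighbor, we must have $|A_2|=1$, contradicting $u\neq v$. The same argument applied only to $u\in A_2$ gives $|A_2|=1$, and symmetrically $|B_2|=1$. Thus $u\in A_2$, $v\in B_2$, $p_1\in A_1$, $p_{n-1}\in B_1$; since $A_1\cap B_1=\emptyset$ we have $p_1\neq p_{n-1}$, and so $P$ has length at least~$3$. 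For $1<i<n-1$, the vertex $p_i$ lies in $X_1$ but not in $A_1$ (otherwise $p_0p_i$ would be a chord of the induced path $P$) nor in $B_1$ (same argument with $p_n$), so $p_i\in C_1$. Hence $P'=p_1\tp\cdots\tp p_{n-1}$ is a path from $A_1$ to $B_1$ with interior in $C_1$; and since every vertex of $P'$ has both its $G$-neighbors on $P$, the set $V(P')$ is a connected component of $G[X_1]$.

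Finally I would show that $G[X_1]$ has exactly one other component and that it is also a path of the required form. At least one such component exists by Lemma~\ref{k1}\ref{i:OldConn} together with $|A_1|\geq 2$ (Lemma~\ref{k22}). For any other component $K$, the case $|V(K)|=2$ is immediate, since Lemma~\ref{k1}\ref{i:OldConn} forces $K$ to be an edge between $A_1$ and $B_1$, a length-$1$ path of the required form. When $|V(K)|\geq 3$ and $K$ is not itself a length-$2$ path of the required form, routine verification shows that $(V(K),V(G)\setminus V(K))$ is a valid 2-join; its side $V(G)\setminus V(K)$ is not a path-side because $u\in A_2$ has degree at least~$2$ there (via $p_1\in V(P')$ and an $X_2$-neighbor given by Lemma~\ref{k1}\ref{i:neiXi}), while $A'_2=A_2=\{u\}$ would require $u$ to be an endpoint of degree~$1$. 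So if $K$ were not a path-side we would obtain a connected non-path 2-join with a side strictly inside $X_1$, contradicting minimality. To rule out a third component $K'$, the same argument applied to $(V(K)\cup V(K'),V(G)\setminus(V(K)\cup V(K')))$ works: side~$1$ is disconnected and hence not a path, and side~$2$ is still not a path by the $u$-degree argument. The main technical obstacle is handling cleanly the boundary 2-join validity conditions (the cases $|V(K)|=2$ and $V(K)$ a length-$2$ path of the required form), where the pair is not a 2-join but the structural conclusion already holds directly.
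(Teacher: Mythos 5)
Your proof is correct and follows essentially the same route as the paper's: the degree bound from Lemma~\ref{k22} forces the interior of $P$ into $X_1$ and yields $|A_2|=|B_2|=1$, the interior of $P$ becomes a component of $G[X_1]$ that is a path of the required form, and minimality of $(X_1,X_2)$ then forces the rest of $G[X_1]$ to be a single such path. The only cosmetic difference is in the final contradiction: the paper uses the single 2-join $(X_1\setminus C,\, X_2\cup C)$ with $C$ the interior of $P$ (which handles a non-path remainder and extra components in one stroke), whereas you exhibit one 2-join per offending component plus one more to exclude a third component; both choices give the same minimality contradiction.
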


\begin{proof}
  Let $u$ and $v$ be the endvertices of $P$.  By Lemma~\ref{k22}, the
  interior of $P$ must lie in $X_1$ and w.l.o.g.\ $u \in A_2$.  By
  Lemma \ref{k1} \ref{i:neiXi}, the neighbor $x$ of $u$ along $P$ has
  a neighbor in $X_1$, so $|A_2|=1$.  Hence if \ref{i:onev} does not
  hold then $v \in B_2$.  Also, $P$ is of length at least 3.  So, the
  interior of $P$ is a connected component $C$ of $G[X_1]$.  If
  $G[X_1\sm C]$ is not a path with one end in $A_1$, one end in $B_1$
  and interior in $C_1$ then $(X_1\sm C, X_2 \cup C)$ is a connected
  non-path 2-join.  Indeed, by Lemma~\ref{k1}~\ref{i:OldConn}, $X_1\sm
  C$ meets $A_1$ and $B_1$, and since it is not a path, it has size at
  least 3, so $(X_1\sm C, X_2 \cup C)$ is a non-path 2-join that is
  connected by Lemma~\ref{k1}~\ref{i:con}.  This contradicts $(X_1,
  X_2)$ being minimally-sided.  Therefore \ref{i:twov} holds.
\end{proof}

\begin{lemma}\label{k4}
  Let $(X_1, X_2, A_1, B_1, A_2, B_2)$
  be a split of a minimally-sided connected non-path 2-join of a graph $G$,
with $X_1$ being a minimal side.
Assume that $G$ and all the blocks of decomposition of $G$ w.r.t.\ 
$(X_1,X_2)$ whose marker paths are of length at least 3, all belong to
$\C{}{no sc}$. 
 Then $(X_1,X_2)$ is an
  extreme 2-join and $X_1$ is an extreme side.
\end{lemma}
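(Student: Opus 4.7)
The plan is to argue by contradiction: assume that for some $k\geq 3$ the block $G_1^k$ admits a connected non-path 2-join, and derive a connected non-path 2-join of $G$ with a side strictly contained in $X_1$, contradicting the minimality of $(X_1,X_2)$. Among all connected non-path 2-joins of $G_1^k$, fix a minimally sided one $(Y_1, Y_2, A'_1, B'_1, A'_2, B'_2)$ with $Y_1$ a minimal side; since $G_1^k \in \C{}{no sc}$ by hypothesis, Lemmas~\ref{k1}, \ref{k22}, and~\ref{l:pathThrough} all apply to $(Y_1, Y_2)$.

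The key observation is that the marker path $P_2 = a_2 \tp c_1 \tp \cdots \tp c_{k-1} \tp b_2$ of $G_1^k$ is a flat path of length $k\geq 3$: its interior vertices have degree $2$ by construction, and $a_2, b_2$ share no common neighbor outside $P_2$ because their external neighborhoods $A_1$ and $B_1$ are disjoint. When $V(P_2)$ lies entirely on one side of $(Y_1,Y_2)$, say $V(P_2) \subseteq Y_2$, then $Y_1 \subseteq X_1$, and $Y_1 = X_1$ is impossible since it would force $Y_2 = V(P_2)$ and $(Y_1,Y_2)$ to be a path 2-join. Reinstating $X_2$ for $V(P_2)$ and swapping $a_2, b_2$ for $A_2, B_2$ inside $A'_2, B'_2$ then yields a split of a connected non-path 2-join of $G$: the cross-edges are precisely the old cross-edges of $(Y_1,Y_2)$ restricted to $Y_1 \times (Y_2\setminus V(P_2))$ together with the complete pairs $A'_1$-to-$A_2$ and $B'_1$-to-$B_2$ coming from $(X_1,X_2)$; connectivity passes through by Lemma~\ref{k1}~\ref{i:con}; non-path-ness is preserved since $X_2$ is not a path side of $(X_1,X_2)$. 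This 2-join of $G$ has minimal side inside $Y_1 \subsetneq X_1$, contradicting the minimality of $(X_1,X_2)$.

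Otherwise $V(P_2)$ meets both $Y_1$ and $Y_2$, and Lemma~\ref{l:pathThrough} applied to $P_2$ in $G_1^k$ gives either subcase~(i) or~(ii). In subcase~(i), with say $a_2 \in A'_2$ and $P_2\setminus\{a_2\}\subseteq Y_1$, the degree-$2$ vertex $c_1$ has its unique cross-edge to $a_2$, so $A'_2 = \{a_2\}$ and $A'_1 = \{c_1\}\cup (A_1\cap Y_1)$; Lemma~\ref{k22}'s bound $|A'_1|\geq 2$ forces $A_1\cap Y_1 \neq \emptyset$, and a careful transfer of $(Y_1,Y_2)$ back to $G$ (undoing $P_2$-for-$X_2$ and inserting an $A_2$-to-$B_2$ path in $X_2$ for the piece of $P_2$ that lay in $Y_1$) again produces a connected non-path 2-join of $G$ with minimal side strictly inside $X_1$. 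In subcase~(ii), $G_1^k[Y_1]$ decomposes into the interior of $P_2$ plus a single extra component $V(Q)\subseteq X_1$ that is a path from $A_1$ to $B_1$ through $C_1$; the degree-$2$ constraint on the interior of $Q$ shows $V(Q)$ is a connected component of $G[X_1]$, so the pair $(X_1\setminus V(Q),\ V(Q)\cup X_2)$ with split $(X_1\setminus V(Q), V(Q)\cup X_2, A_1\setminus\{a\}, B_1\setminus\{b\}, A_2, B_2)$, where $a,b$ are the ends of $Q$, is a connected 2-join of $G$ whose first side is strictly inside $X_1$, and its non-path-ness follows from $(Y_1,Y_2)$ being non-path (which prevents $(X_1\setminus V(Q))\cup\{a_2,b_2\}$ from being a path side of $(Y_1,Y_2)$).

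The main obstacle is the bookkeeping in subcase~(i), where $V(P_2)$ straddles the two sides of $(Y_1,Y_2)$: one must verify that reinstating $X_2$ respects the 2-join structure, producing exactly the prescribed complete pairs across the new 2-join and satisfying the size and non-degeneracy conditions. The hypothesis $k\geq 3$ is essential throughout — making $P_2$ flat, keeping the degree-$2$ argument on the $c_i$ sharp, and licensing Lemma~\ref{l:pathThrough}.
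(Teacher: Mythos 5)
Your proposal follows essentially the same route as the paper's proof: assume some block $G_1^k$ has a connected non-path 2-join, take it minimally-sided, split into the case where the marker path lies entirely in one side and the case where it straddles both (handled via Lemma~\ref{l:pathThrough}), and in each case transfer the 2-join back to $G$ to contradict the minimal-sidedness of $(X_1,X_2)$ --- indeed your subcase~(ii) construction is exactly the 2-join built in the paper's Case~2.2, just described from the $G$ side, and your Case~A matches its Case~1. Do note, however, that the ``bookkeeping'' you defer in subcase~(i) is where the bulk of the paper's argument actually lives: its Case~2.1 splits further according to whether $b_2$ falls in $C_1'$ or in $B_1'$, and needs Lemma~\ref{k1}~\ref{i:AiComp1} to get $B_2'\cap A_1=\emptyset$, hence $C_2'\neq\emptyset$ and $|X_2'|\geq 4$, before the transferred pair can be certified as a connected non-path 2-join of $G$ with a side strictly inside $X_1$.
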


\begin{proof}
  Suppose that the block of decomposition $G_1^{k_1}$, $k_1 \geq 3$, has a
  connected non-path 2-join with split $(X_1', X_2', A_1', B_1', A_2',
  B_2')$.  For $i=1,2$ let $C_i'=X_i' \setminus (A_i' \cup B_i')$.
  Let $P_2 = x_0 \tp x_1 \tp \cdots \tp x_{k_1}$, where $x_0=a_2$ and
  $x_{k_1}=b_2$, be the marker path of $G_1^{k_1}$.

\vspace{2ex}

\noindent
{\bf Case 1:} For some $i \in \{ 1,2 \}$, $P_2 \subseteq X_i'$.

\vspace{1ex}

W.l.o.g.\ $P_2 \subseteq X_2'$. Note that since $N_{G_1^{k_1}} (P_2
\setminus \{ a_2,b_2 \}) \subseteq \{ a_2,b_2 \}$ we have $P_2 \cap
(A_2' \cup B_2') \subseteq \{ a_2,b_2 \}$. Note also that since $a_2$
and $b_2$ have no common neighbor in $G_1^{k_1}$ we have $|A_2' \cap
\{ a_2,b_2 \}|\leq 1$ and $|B_2' \cap \{ a_2,b_2 \}|\leq 1$. So by
symmetry it suffices to consider the following subcases:

\vspace{2ex}

\noindent
{\bf Case 1.1:} $P_2 \subseteq C_2'$

\vspace{1ex}

Since $a_2$ is adjacent to all vertices of $A_1$ and $a_2$ has no neighbor
in $X_1'$ (since $a_2 \not\in A_2' \cup B_2'$) it follows that 
$A_1 \subseteq X_2'$. Similarly $B_1 \subseteq X_2'$. But then 
$(X_1', (X_2' \setminus P_2) \cup X_2)$ is a connected non-path 2-join
of $G$. Since $A_1 \cup B_1 \subseteq X_2'$, $X_1' \subsetneq X_1$, 
contradicting our choice of $(X_1,X_2)$.

\vspace{2ex}

\noindent
{\bf Case 1.2:} $a_2 \in A_2'$ and $P_2 \setminus \{ a_2\} \subseteq C_2'$.

\vspace{1ex}

So $b_2$ has no neighbor in $X_1'$, and since $b_2$ is adjacent to all
vertices of $B_1$, it follows that $B_1 \subseteq X_2'$.  In particular,
$X_1' \subsetneq X_1$.  Since $a_2 \in A_2'$, $P_2 \subseteq X_2'$ and
$N_{G_1^{k_1}} (a_2) \setminus P_2 =A_1$, it follows that $A_1' \subseteq
A_1$ and $(X_1' \setminus A_1') \cap A_1=\emptyset$. But then $(X_1',
(X_2' \setminus P_2) \cup X_2,A_1',B_1',(A_2'\setminus \{ a_2\}) \cup
A_2,B_2')$ is a split of a connected non-path 2-join of $G$,
contradicting our choice of $(X_1,X_2)$.

\vspace{2ex}

\noindent
{\bf Case 1.3:} $a_2 \in A_2'$, $b_2 \in B_2'$ and 
$P_2 \setminus \{ a_2,b_2 \} \subseteq C_2'$.

\vspace{1ex}

Since $(X_1',X_2')$ is not a path 2-join, $X_2' \cap X_1 \neq
\emptyset$, and hence $X_1' \subsetneq X_1$.  Since $a_2 \in A_2'$, $P_2
\subseteq X_2'$ and $N_{G_1^{k_1}} (a_2) \setminus P_2 =A_1$, it follows
that $A_1' \subseteq A_1$ and $(X_1' \setminus A_1') \cap A_1
=\emptyset$.  Similarly $B_1' \subseteq B_1$ and $(X_1' \setminus
B_1') \cap B_1 =\emptyset$.  But then $(X_1', (X_2' \setminus P_2)
\cup X_2,A_1',B_1',(A_2'\setminus \{ a_2\}) \cup A_2,(B_2'\setminus \{
b_2 \}) \cup B_2)$ is a split of a connected non-path 2-join of $G$,
contradicting our choice of $(X_1,X_2)$.

\vspace{2ex}

\noindent
{\bf Case 2:} For $i = 1,2$, $P_2 \cap X_i'\neq \emptyset$.

\vspace{1ex}

We may assume w.l.o.g.\ that $(X_1',X_2')$ is a minimally-sided
connected non-path 2-join of $G_1^{k_1}$, with $X_1'$ being the
minimal side.  But then by Lemma \ref{l:pathThrough} applied to
$(X_1',X_2')$ and $P_2$ it suffices up to symmetry to consider the
following two cases:

\vspace{2ex}

\noindent
{\bf Case 2.1:} $a_2 \in A_2'$ and $P_2 \setminus a_2 \subseteq X_1'$.

\vspace{1ex}

Since $x_1$ is of degree 2 and $x_2 \in X_1'$, it follows that
$|A_2'|=1$. Note that $P_2 \setminus \{ a_2,x_1\} \subseteq X_1'
\setminus A_1'$.  Since $a_2 \in A_2'$, $A_1' \setminus \{ x_1\}
\subseteq A_1$ and $A_1 \setminus A_1' \subseteq X_2'\setminus A_2'$.
Note that since $(X_1',X_2')$ is connected, $A_1 \setminus A_1' \neq
\emptyset$.  Note that $P_2 \cap B_1'\subseteq \{ b_2\}$.

First suppose that $b_2 \in C_1'$. Since $b_2$ is adjacent to all
vertices of $B_1$, $B_1 \subseteq X_1'$.  By Lemma~\ref{k1},
\ref{i:AiComp1}, no vertex of $B'_2$ has a neighbor in $A'_2$, which
implies $B_2' \cap A_1=\emptyset$.  So, $C_2' \neq \emptyset$ and by
Lemma \ref{k1} \ref{i:geq4}, $|X_2'| \geq 4$.  But then
$(X_2'\setminus \{ a_2\}, (X_1' \setminus P_2) \cup X_2, A_1\setminus
A_1',B_2',A_2,B_1')$ is a split of a connected non-path 2-join of $G$,
contradicting our choice of $(X_1,X_2)$ (since clearly $X_2'
\setminus \{ a_2 \} \subsetneq X_1$).

Hence $b_2 \not \in C_1'$ and  $b_2 \in B_1'$. Then $B_2'
\subseteq B_1$ and $B_1 \setminus B_2' \subseteq X_1'$. By
Lemma~\ref{k1}, \ref{i:AiComp1}, no vertex of $B'_2$ has a neighbor in
$A'_2$, which implies $B_2' \cap A_1 =\emptyset$.  So $(X_2'\setminus
\{ a_2\}, (X_1' \setminus P_2) \cup X_2, A_1\setminus
A_1',B_2',A_2,(B_1'\setminus \{ b_2 \}) \cup B_2)$ is a split of a
connected non-path 2-join of $G$, contradicting our choice of
$(X_1,X_2)$ (since clearly $X_2' \setminus \{ a_2 \} \subsetneq X_1$).

\vspace{2ex}

\noindent {\bf Case 2.2:} $a_2 \in A_2'$, $b_2 \in B_2'$ and $P_2
\setminus \{ a_2,b_2 \} \subseteq X_1'$.

\vspace{1ex}

Then $x_1 \in A_1'$, $x_{k_1-1} \in B_1'$ and $P \setminus \{
a_2,b_2,x_1, x_{k_1-1} \} \subseteq C_1'$.  Since $x_1$ and $x_{k_1-1}$ are
of degree 2 in $G_1^{k_1}$, it follows that $A_2'=\{ a_2 \}$ and
$B_2'=\{ b_2\}$.  Since $N_{G_1^{k_1}}(a_2)=A_1 \cup \{ x_1 \}$, it
follows that $A_1' \setminus \{ x_1 \} \subseteq A_1$ and $A_1
\setminus A_1' \subseteq X_2'$. Similarly $B_1' \setminus \{ x_{k_1-1}
\} \subseteq B_1$ and $B_1 \setminus B_1' \subseteq X_2'$.  Since
$(X_1',X_2')$ is connected, $A_1 \setminus A_1' \neq \emptyset$ and
$B_1 \setminus B_1' \neq \emptyset$.  Since $a_2$ and $b_2$ have no
common neighbor in $G_1^{k_1}$, $(A_1 \setminus A_1') \cap (B_1
\setminus B_1')=\emptyset$.  Since $(X_1',X_2')$ is a non-path 2-join,
$|X_2'\setminus \{ a_2,b_2 \}| \geq 3$.  But then $((X_1' \setminus
P_2) \cup X_2, X_2' \setminus \{ a_2,b_2 \}, A_2,B_2,A_1 \setminus
A_1',B_1 \setminus B_1')$ is a split of a 2-join of $G$.  By Lemma
\ref{k1} \ref{i:con} this 2-join is connected.  Since
$G_1^{k_1}[X_2']$ and $G[X_2]$ are not paths, this 2-join is a
non-path 2-join. But then since $X_2' \setminus \{ a_2,b_2
\} \subsetneq X_1$, our choice of $(X_1,X_2)$ is contradicted.
\end{proof}

When ${\cal M}$ is a collection of vertex-disjoint flat paths, a
2-join $(X_1, X_2)$ is \emph{{$\cal M$}-independent} if for every path
$P$ from $\cal M$ we have either $V(P) \subseteq X_1$ or $V(P) \subseteq
X_2$.
These special types of 2-joins will have a fundamental role to play when it 
comes to computing a largest stable set.

\begin{lemma}\label{k23}
  Let $(X_1,X_2,A_1,B_1,A_2,B_2)$ be a
  split of a minimally-sided connected non-path 2-join of a graph $G$, with
  $X_1$ being a minimal side.  
Assume that $G$ and all the blocks of decomposition of $G$ w.r.t.\ 
$(X_1,X_2)$ whose marker paths are of length at least 3, all belong to
$\C{}{no sc}$. 
Let ${\cal M}$ be a set of
  vertex-disjoint flat paths of length at least~3 of $G$.  
If there
  exists a path $P \in {\cal M}$ such that $P \cap A_1 \neq \emptyset$
  and $P \cap A_2 \neq \emptyset$, then let $A_1'=A_2$, and otherwise
  let $A_1'=A_1$.  If there exists a path $P \in {\cal M}$ such that
  $P \cap B_1 \neq \emptyset$ and $P \cap B_2 \neq \emptyset$, then
  let $B_1'=B_2$, and otherwise let $B_1'=B_1$.  Let $X_1'=X_1 \cup
  A_1' \cup B_1'$ and $X_2'=V(G) \setminus X_1'$.  Then the following
  hold:
  \begin{enumerate}
  \item\label{i:k23con} $(X_1',X_2')$ is a connected non-path 2-join
    of $G$.
  \item\label{i:k23indep} $(X'_1, X'_2)$ is ${\cal M}$-independent.
  \item\label{i:k23extr} $(X_1',X_2')$ is an extreme 2-join of $G$ and
    $X'_1$ is an extreme side of this 2-join.
  \end{enumerate}
\end{lemma}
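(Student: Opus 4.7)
My plan is to prove the three parts roughly in the order \ref{i:k23indep}, \ref{i:k23con}, \ref{i:k23extr}. The two preliminary tools are Lemma~\ref{l:pathThrough}, which constrains how flat paths of length at least $3$ can cross $(X_1,X_2)$, and Lemma~\ref{k21}, which lets us ``transfer'' a singleton side across the 2-join.

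For \ref{i:k23indep}, a path $P\in{\cal M}$ entirely in $X_1$ stays in $X_1\subseteq X_1'$. A path $P\in{\cal M}$ that crosses $(X_1,X_2)$ has, by Lemma~\ref{l:pathThrough}, either one endvertex in $A_2\cup B_2$ and the rest in $X_1$, or two endvertices in $A_2$ and $B_2$ with interior in $X_1$; the definitions of $A_1'$ and $B_1'$ are chosen precisely so the endvertex(es) land in $X_1'$, whence $P\subseteq X_1'$. Finally, a path $P$ entirely in $X_2$ cannot meet the singleton $A_2$ (resp.\ $B_2$) involved in a transfer, because that singleton is used up as an endvertex of another path of the vertex-disjoint family $\mathcal M$; so $P\subseteq X_2'$.

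For \ref{i:k23con}, I case-split according to whether $A_1'\in\{A_1,A_2\}$ and $B_1'\in\{B_1,B_2\}$. If both are unchanged, $(X_1',X_2')=(X_1,X_2)$ and we are done. If exactly one is changed, Lemma~\ref{l:pathThrough} forces $|A_2|=1$ (resp.\ $|B_2|=1$), and the version of Lemma~\ref{k21} obtained by swapping $X_1\leftrightarrow X_2$ yields the desired connected non-path 2-join. When both are changed, write $A_2=\{u\}$ and $B_2=\{w\}$, and apply the symmetric Lemma~\ref{k21} twice, once to transfer $u$ and once to transfer $w$. This works provided the resulting split remains valid; the delicate check is that the new $A_2''$ and $B_2''$ are disjoint, equivalently that $u$ and $w$ have no common neighbour in $C_2$. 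In the single-path subcase of Lemma~\ref{l:pathThrough}\ref{i:twov} this is immediate: any such neighbour would lie outside the crossing path and be adjacent to both of its ends, contradicting the flat-path property. In the two-distinct-paths subcase, I plan to combine the two flat-path properties with the minimality of $(X_1,X_2)$: assuming a common neighbour $v\in C_2$ exists, the path through $u,v,w$ together with the paths $P_u,P_w$ lets one exhibit a connected non-path 2-join of $G$ whose small side is strictly contained in $X_1$, contradicting minimality.

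For \ref{i:k23extr}, the plan is to apply Lemma~\ref{k4} to $(X_1',X_2')$. This needs two things: $(X_1',X_2')$ is a minimally-sided connected non-path 2-join with $X_1'$ as a minimal side, and $G$ together with the relevant blocks of $(X_1',X_2')$ still belong to $\C{}{no sc}$. The minimality follows from that of $(X_1,X_2)$ since $X_1'\setminus X_1\subseteq A_2\cup B_2$ consists of at most two singleton vertices: any connected non-path 2-join $(Y_1,Y_2)$ of $G$ with $Y_i\subsetneq X_1'$ can be rearranged (in the spirit of the case analysis in the proof of Lemma~\ref{k4}) to produce a connected non-path 2-join of $G$ whose small side is strictly contained in $X_1$. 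The $\C{}{no sc}$ transfer for the blocks of $(X_1',X_2')$ whose marker paths have length at least $3$ is by the same style of argument as in Lemma~\ref{l:preservesNoSC}, using Lemma~\ref{l:degenerate} and that $G\in\C{}{no sc}$.

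The main obstacle is the two-distinct-paths subcase in part \ref{i:k23con}: proving that $u\in A_2$ and $w\in B_2$ cannot share a neighbour in $C_2$. This requires combining the flat-path property (ends have no common neighbour outside the path) with the minimality of the 2-join $(X_1,X_2)$, by constructing from a hypothetical common neighbour an explicit smaller connected non-path 2-join of $G$.
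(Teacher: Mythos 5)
Your parts \ref{i:k23indep} and \ref{i:k23con} follow the paper's route (Lemma~\ref{l:pathThrough} for independence, Lemma~\ref{k21} applied once or twice for the new 2-join), but there are two problems with the proposal.

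First, the ``main obstacle'' you flag in part \ref{i:k23con} is not an obstacle, and your proposed resolution is both unproven and unnecessary. After the first transfer, $(X_2\sm A_2,\, X_1\cup A_2)$ is, by Lemma~\ref{k21}, a connected non-path 2-join of $G$ with split $(X_2\sm A_2,\, X_1\cup A_2,\, N_{G[X_2]}(A_2),\, B_2,\, A_2,\, B_1)$, and $G$ is still in $\C{}{no sc}$. So Lemma~\ref{k1}~\ref{i:BiComp1} applies to this \emph{intermediate} 2-join: since $B_2=\{w\}$ is a singleton, every vertex of $N_{G[X_2]}(u)$ is non-adjacent to $w$, i.e.\ $u$ and $w$ have no common neighbour in $X_2$. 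The second application of Lemma~\ref{k21} is therefore legitimate verbatim; no appeal to flat-path properties or to the minimality of $(X_1,X_2)$ is needed, and this should not be left as a ``plan''.

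Second, and more seriously, your argument for \ref{i:k23extr} does not work. You propose to apply Lemma~\ref{k4} directly to $(X_1',X_2')$, which requires $(X_1',X_2')$ to be minimally-sided with $X_1'$ a minimal side. But whenever a transfer actually occurs, $X_1\subsetneq X_1'$, and $(X_1,X_2)$ is itself a connected non-path 2-join of $G$ with a side strictly contained in $X_1'$; so $(X_1',X_2')$ is \emph{not} minimally-sided with minimal side $X_1'$, and no ``rearrangement'' can repair a hypothesis that is simply false. The paper avoids this entirely: the block of decomposition of $G$ w.r.t.\ $(X_1',X_2')$ corresponding to $X_1'$ is obtained from the block $G_1^{k_1}$ of the \emph{original} 2-join $(X_1,X_2)$ by subdividing its marker path $0$, $1$ or $2$ times, hence equals $G_1^{k}$ for some $k\geq 3$; Lemma~\ref{k4}, applied once to the minimally-sided $(X_1,X_2)$ (whose blocks lie in $\C{}{no sc}$ by hypothesis), already guarantees that every such $G_1^{k}$ has no connected non-path 2-join. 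You need this observation, or something equivalent; as written, part \ref{i:k23extr} is a genuine gap.
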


\begin{proof}
  If there exists a path $P \in {\cal M}$ such that $P \cap A_1 \neq
  \emptyset$ and $P \cap A_2 \neq \emptyset$, then by Lemma
  \ref{l:pathThrough}, either for an endvertex $u$ of $P$, $P \setminus
  u \subseteq X_1$ and $u \in A_2$; or for endvertices $u$ and $v$ of
  $P$, $u \in A_2$, $v\in B_2$ and $P \setminus \{ u,v \} \subseteq
  X_1$.  Since the intermediate vertices of $P$ are of degree 2, it
  follows that $|A_2|=1$, and so \ref{i:k23con} holds by Lemma
  \ref{k21} (possibly applied twice).

  Let $(X_1',X_2',A_1',B_1',A_2',B_2')$ be the split of $(X_1',X_2')$,
  where $A_2 \in \{A_1', A_2'\}$ and $B_2 \in \{B_1', B_2'\}$.  By
  Lemma \ref{l:pathThrough}, applied to $(X_1,X_2)$, paths $P \in
  {\cal M}$ are one of the following types:

  \begin{description}
  \item[Type 1:] $P \subseteq X_1$
  \item[Type 2:] $P \subseteq X_2$
  \item[Type 3:] For an endvertex $u$ of $P$, $u \in A_2$ and $P\setminus u 
    \subseteq X_1$.
  \item[Type 4:] For an endvertex $u$ of $P$, $u \in B_2$ and $P\setminus u 
    \subseteq X_1$.
  \item[Type 5:] For endvertices $u$ and $v$ of $P$, 
    $u \in A_2$, $v \in B_2$ and $P\setminus \{ u,v \} 
    \subseteq X_1$.
  \end{description}
  
  Note that since ${\cal M}$ is a collection of vertex-disjoint paths,
  at most one path of ${\cal M}$ is of type 3 (resp.\ type 4), and if
  there exists a type 3 (resp.\ type 4) path then for every type 2 path
  $P$, $P \cap A_2=\emptyset$ (resp.\ $P \cap B_2=\emptyset$).  Also
  there is at most one type 5 path in ${\cal M}$, and if such a path
  exists there are no type 3 and 4 paths in ${\cal M}$, and for every
  type 2 path $P$ of ${\cal M}$, $P \cap A_2=\emptyset$ and $P \cap
  B_2=\emptyset$.  So by the construction of $(X_1',X_2')$ all type 1
  (resp.\ type 2) paths w.r.t.\ $(X_1,X_2)$ stay type 1 (resp.\ type 2)
  w.r.t.\ $(X_1',X_2')$, and all type 3, 4 and 5 paths
  w.r.t.\ $(X_1,X_2)$ become type 1 w.r.t.\ $(X_1',X_2')$.  Therefore
  \ref{i:k23indep} holds.

  For $k_1,k_2 \geq 3$, let $G_1^{k_1}$ and $G_2^{k_2}$ be the blocks
  of decomposition of $G$ w.r.t.\ the 2-join $(X_1,X_2)$.  By Lemma
  \ref{k4}, $G_1^{k_1}$ has no connected non-path 2-join. Let $P_2$ be
  the marker path of $G_1^{k_1}$.  Let $G_1^{'{k_1'}}$ be the block of
  decomposition of $G$ with respect to $(X'_1, X'_2)$.  Notice that
  $G_1^{'{k_1'}}$ can be obtained from $G_1^{k_1}$ by subdividing an
  edge of $P_2$ (0, 1 or 2 times), and hence by our
assumption, $G_1^{'{k_1'}}\in \C{}{no sc}$. Therefore by Lemma~\ref{k4},
  $G_1^{'{k_1'}}$ has no connected non-path 2-join and hence
  \ref{i:k23extr} holds.
\end{proof}

\begin{lemma}\label{alg:extreme}
There is an algorithm with the following specification:
\begin{description}
\item[Input:] 
A connected graph $G$ and a set ${\cal M}$ of vertex-disjoint flat
paths of $G$ of length at least 3.
\item[Output:] One of the following is returned.
 \begin{enumerate}
\item An extreme connected non-path
2-join of $G$ (with an identified extreme side) that is ${\cal M}$-independent.
\item $G$ or a block of decomposition of $G$ w.r.t.\ some 
2-join whose marker path is of length at least 3, has a star cutset.
\item $G$ has no connected non-path 2-join.
\end{enumerate}
\item[Running time:] ${\cal O} (n^3m)$
\end{description}
\end{lemma}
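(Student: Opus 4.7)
The plan is to build the algorithm in three phases: (a) detect a connected non-path 2-join; (b) shrink one of its sides until it is minimally sided; (c) apply the construction of Lemma~\ref{k23} to turn the result into an $\mathcal M$-independent extreme 2-join, verifying along the way that no star cutset appears in $G$ or in any relevant block of decomposition. A connected non-path 2-join can be detected in polynomial time: the classical Cornu\'ejols--Cunningham algorithm (adapted by restricting to splits where both sides have size at least~$4$ and are not a path, as is standard) finds such a 2-join or certifies none exists. If none exists we return option~3. This costs $O(n^2m)$.

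Otherwise, let $(X_1,X_2,A_1,B_1,A_2,B_2)$ be a split produced. To make it minimally sided I would iterate: try to find a connected non-path 2-join $(X_1',X_2')$ of $G$ with $X_1'\subsetneq X_1$ or $X_2'\subsetneq X_1$. This amounts to running the same 2-join detector on a graph modified so as to forbid all candidate splits whose smaller side is not strictly contained in $X_1$ (a standard restriction that enumerates polynomially many constrained candidates), and if any smaller-sided split is found, replace $(X_1,X_2)$ by it and repeat. Because each iteration strictly decreases the size of the minimal side, at most $n$ iterations occur; since each one costs $O(n^2m)$, the total is $O(n^3m)$. When the loop terminates, $(X_1,X_2)$ is a minimally-sided connected non-path 2-join with $X_1$ a minimal side.

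Now build the two blocks of decomposition $G_1^{k_1}$ and $G_2^{k_2}$ with marker paths of length $3$ (equivalently, length $\geq 3$; the choice does not matter for star-cutset detection). Test $G$, $G_1^{3}$ and $G_2^{3}$ for the presence of a star cutset using the $O(nm)$ star-cutset algorithm of Chv\'atal (by trying each vertex as center and checking connectivity of the complement in $G\setminus N[x]$); if any of them has one, return option~2. At this point the hypotheses of Lemmas~\ref{k4} and~\ref{k23} are satisfied, so I apply the explicit construction of Lemma~\ref{k23}: for the paths $P\in\mathcal M$ that straddle $(X_1,X_2)$, absorb the offending endpoints (which Lemma~\ref{l:pathThrough} forces to lie in $A_2\cup B_2$) into $X_1$ to obtain $X_1'=X_1\cup A_1'\cup B_1'$ and $X_2'=V(G)\setminus X_1'$. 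By Lemma~\ref{k23}\ref{i:k23con}--\ref{i:k23extr}, $(X_1',X_2')$ is a connected non-path 2-join, is $\mathcal M$-independent, and is extreme with extreme side $X_1'$; return this 2-join (option~1). This final step is linear in $n$.

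The main obstacle is ensuring the minimization loop runs inside the stated budget: a priori the 2-join detector on a restricted side-size problem could be expensive, but by constraining one side to be contained in the current $X_1$ the search space shrinks monotonically, so the cumulative cost remains $O(n^3m)$. A second subtlety is justifying that the star-cutset check on $G$ and on a single block suffices: any block whose marker path has length $\geq 3$ can be obtained from $G_1^{3}$ by subdividing the marker path, and subdivisions of a degree-$2$ edge neither create nor destroy a star cutset, so checking $G_1^{3}$ and $G_2^{3}$ witnesses the no-star-cutset hypothesis needed by Lemmas~\ref{k4} and~\ref{k23} uniformly.
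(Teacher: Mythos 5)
Your overall architecture matches the paper's: produce a minimally-sided connected non-path 2-join with minimal side $X_1$, verify that $G$, $G_1^3$ and $G_2^3$ have no star cutset (returning output~(ii) otherwise, and observing that blocks with longer marker paths are subdivisions of $G_1^3$ along the marker path and so inherit the no-star-cutset property), and then invoke Lemma~\ref{k23} to enlarge $X_1$ into an ${\cal M}$-independent extreme side. Those closing steps are correct and essentially identical to the paper's.

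The genuine gap is in how you obtain the minimally-sided 2-join. The paper does not minimize by iteration: it first tests $G$ for a star cutset (so that by Lemma~\ref{k1}~\ref{i:con} every 2-join of $G$ is connected) and then calls, as a black box, the ${\cal O}(n^3m)$ algorithm of Theorem~5.2 of Charbit, Habib, Trotignon and Vu\v{s}kovi\'c, which for a star-cutset-free input directly returns a minimally-sided non-path 2-join or certifies that none exists. Your replacement --- repeatedly searching for a connected non-path 2-join one of whose sides is strictly contained in the current $X_1$, via ``a graph modified so as to forbid all candidate splits whose smaller side is not strictly contained in $X_1$'' --- is precisely the hard part, and you give no construction for it. The Cornu\'ejols--Cunningham-style detectors find an arbitrary 2-join by guessing representative vertices and running a refinement over the whole vertex set; there is no standard modification that confines an entire side to a prescribed subset, and designing one (this is in essence what the cited Theorem~5.2 accomplishes) is a nontrivial piece of work rather than a routine restriction. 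Consequently both the correctness and the claimed ${\cal O}(n^2m)$ per-iteration cost of your minimization loop are unsubstantiated, and this is the step on which the whole lemma rests. A minor further remark: the paper checks $G$ for a star cutset \emph{before} searching for 2-joins, which is what licenses treating every 2-join found as connected; since you defer this check, your initial detector would additionally have to certify connectivity of the 2-join it returns.
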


\begin{proof}
  First check whether $G$ has a star cutset. Note that this can be
  done in time ${\cal O} (n^3)$ as noted by
  Chv\'atal~\cite{chvatal:starcutset}: for every $x \in V(G)$, check
  whether $G \setminus N[x]$ is disconnected, and also check whether
  there exists $y \in N(x)$ such that $y$ has no neighbor in $G
  \setminus N[x]$.  If the answer to any of these is yes, then $G$ has
  a star cutset centered at $x$, and otherwise it does not. If $G$ is
  identified as having a star cutset return (ii) and stop, and
  otherwise continue.

Note that at this point in the algorithm we know that $G \in \C{}{no sc}$,
and hence by Lemma \ref{k1} \ref{i:con} any 2-join of $G$ is connected.

Run the ${\cal O}(n^3m)$-algorithm from Theorem~5.2
in~\cite{ChHaTrVu:2-join} for $G$.  This algorithm outputs a minimally
sided non-path 2-join of an input graph with no star cutset, or
certifies that the input graph has no non-path 2-join.  If this stage
of the algorithm does not find any non-path 2-join in $G$, then return
(iii) and stop.  Otherwise let $(X_1, X_2, A_1, B_1, A_2, B_2)$ be the
split of a minimally-sided connected non-path 2-join found, and
w.l.o.g. assume that $X_1$ is a minimal side. 

Let $G_1^3$ and $G_2^3$ be the blocks of decomposition of $G$ w.r.t.\ 
$(X_1,X_2)$. Check whether $G_1^3$ and $G_2^3$ have a star cutset.
If any one of them does, then return (ii) and stop.
Otherwise, we continue and we note that since $G_1^3,G_2^3 \in \C{}{no sc}$
clearly blocks of decomposition of $G$ w.r.t.\ $(X_1,X_2)$ $G_1^{k_1}$
and $G_2^{k_2}$, for any $k_1,k_2 \geq 3$, also belong to $\C{}{no sc}$.

If there exists a path $P \in {\cal M}$ such that $P \cap A_1 \neq
\emptyset$ and $P \cap A_2 \neq \emptyset$, then let $A_1'=A_2$, and
otherwise let $A_1'=A_1$.  If there exists a path $P \in {\cal M}$
such that $P \cap B_1 \neq \emptyset$ and $P \cap B_2 \neq \emptyset$,
then let $B_1'=B_2$, and otherwise let $B_1'=B_1$.  Let $X_1'=X_1 \cup
A_1' \cup B_1'$ and $X_2'=V(G) \setminus X_1'$.  Then by
Lemma~\ref{k23}, $(X_1',X_2')$ is an extreme connected non-path 2-join
(with $X_1'$ being an extreme side) that is ${\cal M}$-independent.
We return this 2-join in (i) and stop.

Clearly this algorithm can be implemented to run in time ${\cal O} (n^3m)$.
\end{proof}

Note that later when we apply the above algorithm in our main
algorithm, if the output is (ii), then by Lemma \ref{l:preservesNoSC}
we can conclude that $G$ does not belong to $\C{Berge}{no cutset}$ or
to $\C{ehf}{no sc}$.

\section{Keeping track of cliques}
\label{sec:clique}

Here we show how to find a maximum clique in a graph using 2-joins.
For the sake of induction we have to solve the weighted version of the
problem.

Through all the next sections, by graph we mean a graph with weights
on the vertices.  Weights are numbers from $K$ where $K$ means either
the set $\R_+$ of non-negative real numbers or the set $\N_+$ of non
negative integers.  The statements of the theorems will be true for
$K= \R_+$ but the algorithms are to be implemented with $K= \N_+$.
Let $G$ be a weighted graph with a weight function $w$ on $V(G)$.
When $H$ is an induced subgraph of $G$ or a subset of $V(G)$, $w(H)$
denotes the sum of the weights of vertices in $H$.  Note that we view a
graph where no weight is assigned to the vertices as a weighted graph
whose vertices have all weight~1.
Here, $\omega (G)$ denotes the weight of a maximum weighted clique of
$G$.

Let $(X_1, X_2, A_1, B_1, A_2, B_2)$ be a split of a connected 2-join
of $G$.  We define for $k\geq 3$ the \emph{clique-block} $G_2^k$ of
$G$ with respect to $(X_1, X_2)$.  It is obtained from the block
$G^k_2$ by giving weights to the vertices.  Let $P_1= a_1 \tp x_1 \tp
\cdots \tp x_{k-1} \tp b_1$ be the marker path of $G^k_2$.  We assign
the following weights to the vertices of $G^k_2$:

\begin{itemize}
\item for every $u \in X_2$, $w_{G_2^k}(u) = w_{G}(u)$;
\item $w_{G^k_2}(a_1)=\omega (G[A_1])$;
\item $w_{G^k_2}(b_1)=\omega (G[B_1])$;
\item $w_{G^k_2}(x_1)=\omega (G[X_1]) - \omega (G[A_1])$;
\item $w_{G^k_2}(x_i)=0$, for $i=2, \ldots, k-1$.
\end{itemize}

\begin{lemma}
  \label{komega}
  $\omega (G)=\omega(G_2^k)$.
\end{lemma}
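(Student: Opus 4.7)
The plan is to show that both $\omega(G)$ and $\omega(G_2^k)$ equal the same explicit maximum of four quantities, namely
\[
M \;=\; \max\bigl\{\omega(G[X_1]),\ \omega(G[X_2]),\ \omega(G[A_1])+\omega(G[A_2]),\ \omega(G[B_1])+\omega(G[B_2])\bigr\}.
\]

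First I would analyze cliques of $G$. Let $K$ be any clique of $G$. Using the 2-join structure, the only edges between $X_1$ and $X_2$ lie inside $A_1\cup A_2$ or inside $B_1\cup B_2$; in particular no vertex of $A_1$ is adjacent to a vertex of $B_2$ and vice versa. Hence either $K\subseteq X_1$, or $K\subseteq X_2$, or $K$ meets both sides, in which case $K\subseteq A_1\cup A_2$ or $K\subseteq B_1\cup B_2$. Since $A_1$ is complete to $A_2$, a maximum weighted clique of $G[A_1\cup A_2]$ has weight $\omega(G[A_1])+\omega(G[A_2])$, and symmetrically for the $B$ side. This yields $\omega(G)=M$.

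Next I would apply the same dichotomy to $G_2^k$. Let $K'$ be a clique of $G_2^k$; since the marker path $P_1$ plays the structural role of $X_1$ (with $a_1$ complete to $A_2$, $b_1$ complete to $B_2$, and interior vertices of $P_1$ anticomplete to $X_2$), the same argument shows that $K'$ is contained in $V(P_1)$, $X_2$, $\{a_1\}\cup A_2$, or $\{b_1\}\cup B_2$. For cliques in $V(P_1)$ (a path of length $\geq 3$, so cliques are vertices or edges), inspection of the weights shows the maximum is attained by the edge $a_1x_1$ of weight $\omega(G[A_1])+(\omega(G[X_1])-\omega(G[A_1]))=\omega(G[X_1])$, while every other vertex or edge has weight at most this (the single vertices are bounded by $\omega(G[X_1])$ and the remaining edges have weight $\omega(G[X_1])-\omega(G[A_1])$, $0$, or $\omega(G[B_1])$). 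For cliques in $\{a_1\}\cup A_2$ meeting both parts, the weight is $w_{G_2^k}(a_1)+\omega(G[A_2])=\omega(G[A_1])+\omega(G[A_2])$, and symmetrically for $\{b_1\}\cup B_2$. Cliques in $X_2$ contribute $\omega(G[X_2])$. Thus $\omega(G_2^k)=M$, and the lemma follows.

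There is no real obstacle: the proof is essentially a case analysis driven by the definition of a 2-join (which rigidly constrains where cliques can sit) together with the definition of the weights on the marker path (chosen precisely so that the edge $a_1x_1$ carries the full clique weight of $X_1$). The only point worth double-checking is that $k\geq 3$ guarantees $P_1$ contains no triangle, so that only vertices and edges of $P_1$ need be considered.
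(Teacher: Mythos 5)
Your proof is correct and follows essentially the same route as the paper's: the same case analysis of how a clique can meet the two sides of the 2-join ($X_1$, $X_2$, $A_1\cup A_2$, or $B_1\cup B_2$), with the marker-path weights arranged so that $a_1$ carries $\omega(G[A_1])$ and the edge $a_1x_1$ carries $\omega(G[X_1])$. Packaging both quantities as the common maximum $M$ rather than proving two inequalities is only a presentational difference (and your explicit enumeration of the cliques of $P_1$ is, if anything, slightly more complete than the paper's, which silently discards the low-weight cliques of the marker path by assuming $K$ is maximum).
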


\begin{proof}
  Let $K$ be a maximum weighted clique of $G$.  We show that the
  clique-block $G_2^k$ has a clique of weight $w_{G} (K)$, and hence
  $\omega (G) \leq \omega (G_2^k)$.  If $K \subseteq X_2$ then $K
  \subseteq V(G_2^k)$.  If $K \subseteq X_1$ then $\{ a_1, x_1 \}$ is a
  clique of $G_2^k$ of weight $w_{G} (K)$.  So assume that $K \cap
  X_1 \neq \emptyset$ and $K \cap X_2\neq \emptyset$.   W.l.o.g.\ $K
  \cap A_1 \neq \emptyset$ and $K \cap A_2 \neq \emptyset$, and hence
  $K \subseteq A_1 \cup A_2$.  But then $(K \setminus A_1) \cup \{ a_1
  \}$ is a clique of $G_2^k$ of weight $w_{G} (K)$. Therefore
  $\omega (G) \leq \omega (G_2^k)$.

  Now let $K$ be a maximum weighted clique of $G_2^k$.  We show that
  $G$ has a clique of weight $w_{G_2^k} (K)$, and hence $\omega
  (G_2^k) \leq \omega (G)$.  If $K \subseteq X_2$ then $K$ is a clique
  of $G$.  Suppose $K \cap P_1 =\{ a_1 \}$, and let $K'$ be a clique
  of $A_1$ whose weight is $\omega (G[A_1])$.  Then $(K \cap A_2) \cup
  K'$ is a clique of $G$ of weight $w_{G_2^k}(K)$.  So we may assume
  that $K=\{a_1, x_1\}$.  Then $w_{G_2^k} (K) = \omega (G[X_1])$, and
  $G$ has a clique of the same weight.  Therefore $\omega (G_2^k) \leq
  \omega (G)$.
\end{proof}

\section{Keeping track of stable sets}
\label{sec:alphaTrack}

Here we show how to use 2-joins to compute maximum stable sets.  This
is more difficult than cliques mainly because stable sets may
completely overlap both sides of a 2-join.  For the sake of induction
we need to put weights on the vertices.  But even with weights, there
is an issue: we are not able to compute maximum weighted stable set of
a graph assuming that some computations are done on its blocks as
defined in Section~\ref{sec:blocks}.  So we need to enlarge slightly
our blocks to encode information, and this causes some trouble.
First, the extended blocks may fail to be in the class we are working
on.  This problem will be solved in Section~\ref{sec:tree} by building
the decomposition tree in two steps.  Also in a decomposition tree
built with our unusual blocks, the leaves may fail to be basic graphs,
so computing something in the leaves of the tree is a problem
postponed to Section~\ref{sec:extension}.

Throughout this section, $G$ is a fixed graph with a weight function
$w$ on the vertices and $(X_1,X_2,A_1,B_1,A_2,B_2)$ is a split of a
2-join of~$G$.  For $i=1,2$, $C_i=X_i \setminus (A_i \cup B_i)$.  For
any graph $H$, $\alpha (H)$ denotes the weight of a maximum weighted
stable set of $H$.  We define $a = \alpha(G[{A_1 \cup C_1}])$, $b=
\alpha(G[{B_1 \cup C_1}])$, $c = \alpha(G[{C_1}])$ and $d =
\alpha(G[{X_1}])$.

\begin{lemma}
  \label{l:4cases}
  Let $S$ be a stable set of $G$ of maximum weight.  Then one and only
  one of the following holds:

  \begin{enumerate}
  \item\label{i:4c1} $S \cap A_1 \neq \emptyset$, $S \cap B_1 =
    \emptyset$, $S\cap X_1$ is a maximum weighted stable set of $G[A_1
    \cup C_1]$ and $w(S \cap X_1) = a$;
  \item\label{i:4c2} $S \cap A_1 = \emptyset$, $S \cap B_1 \neq
    \emptyset$, $S\cap X_1$ is a maximum weighted stable set of $G[B_1
    \cup C_1]$ and $w(S \cap X_1) = b$;
  \item\label{i:4c3} $S \cap A_1 = \emptyset$, $S \cap B_1 =
    \emptyset$, $S\cap X_1$ is a maximum weighted stable set of
    $G[C_1]$ and $w(S \cap X_1) = c$;
  \item\label{i:4c4} $S \cap A_1 \neq \emptyset$, $S \cap B_1 \neq
    \emptyset$, $S\cap X_1$ is a maximum weighted stable set of
    $G[X_1]$ and $w(S \cap X_1) = d$.
  \end{enumerate}
\end{lemma}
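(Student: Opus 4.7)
The plan is to observe that the four cases are disjoint and exhaustive depending on how $S$ meets $A_1$ and $B_1$, so the ``one and only one'' part is automatic. The content of the lemma is the assertion that in each case $S \cap X_1$ achieves the maximum weight of a stable set in the indicated induced subgraph of $G[X_1]$.

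For each case, I would apply an exchange argument. In case \ref{i:4c1}, since $S \cap A_1 \neq \emptyset$ and $A_1$ is complete to $A_2$, we get $S \cap A_2 = \emptyset$. Since $S \cap B_1 = \emptyset$ we have $S \cap X_1 \subseteq A_1 \cup C_1$, hence $w(S \cap X_1) \leq a$. For the reverse inequality, let $T$ be a maximum weighted stable set of $G[A_1 \cup C_1]$, so $w(T) = a$. Then $T \cup (S \cap X_2)$ is a stable set of $G$: vertices of $T$ in $C_1$ have no neighbor in $X_2$, while vertices of $T$ in $A_1$ only have neighbors in $A_2$ in $X_2$, and $S \cap A_2 = \emptyset$. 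If $w(S \cap X_1) < a$, the weight of this new stable set would strictly exceed $w(S)$, contradicting maximality of $S$. Case \ref{i:4c2} is symmetric.

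Case \ref{i:4c3} is analogous but simpler: $S \cap X_1 \subseteq C_1$ gives $w(S \cap X_1) \leq c$, and swapping $S \cap X_1$ with a maximum weighted stable set of $G[C_1]$ is trivially compatible with $S \cap X_2$ since $C_1$ has no neighbors in $X_2$. Case \ref{i:4c4} uses the full strength of the 2-join structure: here both $S \cap A_2 = \emptyset$ and $S \cap B_2 = \emptyset$ since $A_1, B_1$ are complete to $A_2, B_2$ respectively. Hence any stable set contained in $X_1$ can be substituted for $S \cap X_1$ without creating edges to $S \cap X_2$. In particular, replacing $S \cap X_1$ with a maximum weighted stable set of $G[X_1]$ (of weight $d$) yields a stable set of $G$ of weight at least $w(S)$, so $w(S \cap X_1) \geq d$; the reverse inequality is immediate as $S \cap X_1$ is a stable set of $G[X_1]$.

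There is no real obstacle here, since the exchange arguments are essentially forced by the bipartite structure between the $A_i$'s and the $B_i$'s and the absence of any other edge across the 2-join. The only point worth emphasizing in the write-up is that the exchange works regardless of whether the replacement stable set $T$ actually meets $A_1$ or $B_1$ — what matters is that $S \cap A_2$ and/or $S \cap B_2$ are empty, which is guaranteed by the hypothesis of each case together with the complete pairs $(A_1, A_2)$ and $(B_1, B_2)$.
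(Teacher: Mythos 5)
Your proof is correct and is exactly the argument the paper has in mind: the paper disposes of this lemma with the single line ``Follows directly from the definition of a 2-join,'' and your exchange arguments (using that $S\cap A_1\neq\emptyset$ forces $S\cap A_2=\emptyset$ via the complete pair, and that $C_1$ is anticomplete to $X_2$) are precisely the routine verification being left to the reader. No discrepancy to report.
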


\begin{proof}
  Follows directly from the definition of a 2-join.
\end{proof}

\subsection{Stable sets overlapping 2-joins}
\label{sec:ineq}
We need kinds of blocks that preserve being in $\C{Berge}{}$.  To
define them we need several inequalities that tell more about how
stable sets and 2-joins overlap.

\begin{lemma}
  \label{l:ineqbasic}
  $0 \leq c \leq a, b \leq d \leq a+b$.
\end{lemma}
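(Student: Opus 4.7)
The plan is to verify each inequality in the chain by elementary containment arguments and a single partition step for the last one. Since weights are non-negative, the bound $0 \leq c$ is immediate.

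For $c \leq a$ and $c \leq b$, I would note that $C_1 \subseteq A_1 \cup C_1$ and $C_1 \subseteq B_1 \cup C_1$, so any maximum weighted stable set of $G[C_1]$ is in particular a stable set of $G[A_1 \cup C_1]$ and of $G[B_1 \cup C_1]$, yielding $c \leq a$ and $c \leq b$. Similarly, since $A_1 \cup C_1 \subseteq X_1$ and $B_1 \cup C_1 \subseteq X_1$, we get $a \leq d$ and $b \leq d$.

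The only step that requires slightly more than monotonicity is $d \leq a+b$. Let $S$ be a maximum weighted stable set of $G[X_1]$, so $w(S) = d$. Write $S = S' \sqcup S''$ with $S' = S \cap (A_1 \cup C_1)$ and $S'' = S \cap B_1$. Then $S'$ is a stable set of $G[A_1 \cup C_1]$ so $w(S') \leq a$, and $S''$ is a stable set of $G[B_1 \cup C_1]$ (in fact just of $G[B_1]$), so $w(S'') \leq b$. Adding these gives $d = w(S) = w(S') + w(S'') \leq a+b$.

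I do not expect any obstacle here: the lemma is essentially a statement about monotonicity of $\alpha$ under induced subgraph inclusion together with the trivial partition $X_1 = (A_1\cup C_1) \cup B_1$. No property of the 2-join beyond the definitions of $a,b,c,d$ is needed, and in particular the edges between $X_1$ and $X_2$ play no role.
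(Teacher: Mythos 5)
Your proof is correct and follows essentially the same argument as the paper: the chain $0\le c\le a,b\le d$ by monotonicity of $\alpha$ under induced-subgraph containment, and $d\le a+b$ by splitting a maximum weighted stable set of $G[X_1]$ into two pieces lying in $A_1\cup C_1$ and $B_1\cup C_1$ respectively (the paper uses the symmetric partition $D=(D\cap A_1)\cup(D\cap(C_1\cup B_1))$, which is the same idea).
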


\begin{proof}
  The inequalities $0 \leq c \leq a, b \leq d$ are trivially true. Let
  $D$ be a maximum weighted stable set of $G[X_1]$.  We have:
  $$
  d = w(D) = w(D\cap A_1) + w(D\cap (C_1 \cup B_1)) \leq a + b.
  $$
\end{proof}

A 2-join with split $(X_1, X_2,A_1,B_1,A_2,B_2)$ is said to be
\emph{$X_1$-even} (resp.\ \emph{$X_1$-odd}) if all paths from $A_1$ to
$B_1$ with interior in $C_1$ are of even length (resp.\ odd length).
Note that from Lemma~\ref{l.2jAiBi}, if $G$ is in $\C{parity}{}$ and
$(X_1,X_2)$ is connected, then $(X_1, X_2)$ must be either $X_1$-even
or $X_1$-odd.

\begin{lemma}
  \label{l:ineqEven}
  If $(X_1, X_2)$ is an $X_1$-even 2-join of $G$, then $a+b \leq c+d$.
\end{lemma}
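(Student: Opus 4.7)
The plan is to take maximum weighted stable sets $S_a$ of $G[A_1 \cup C_1]$ and $S_b$ of $G[B_1 \cup C_1]$ (so $w(S_a)=a$ and $w(S_b)=b$) and repackage them into a stable set $T_c \subseteq C_1$ and a stable set $T_d \subseteq X_1$ of combined weight at least $a+b$. Since $w(T_c) \leq c$ and $w(T_d) \leq d$, this will deliver $a+b \leq c+d$.

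First I would analyze the structure of $F = G[S_a \cup S_b]$. Because $S_a$ and $S_b$ are individually stable, $F$ has maximum degree $2$, all its edges run between $S_a \setminus S_b$ and $S_b \setminus S_a$, and every vertex of $S_a \cap S_b \subseteq C_1$ is isolated in $F$. Hence the non-trivial components of $F$ are induced paths and induced even cycles of $G$ whose vertices alternate between $S_a \setminus S_b$ and $S_b \setminus S_a$.

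The key step, where the $X_1$-even hypothesis enters, is to show that no component of $F$ contains both a vertex of $S_a \cap A_1$ and a vertex of $S_b \cap B_1$. I would argue by contradiction: inside such a component, pick an induced path $P = v_1 \tp \cdots \tp v_k$ of $G$ from some $u \in A_1 \cap S_a$ to some $v \in B_1 \cap S_b$. Along $P$ the vertices alternate between $S_a$ and $S_b$, so $v_\ell \in S_a$ precisely when $\ell$ is odd. Let $i$ be the largest index with $v_i \in A_1$ and $j$ the smallest index greater than $i$ with $v_j \in B_1$; then $v_i \tp \cdots \tp v_j$ is an induced subpath of $P$ from $A_1$ to $B_1$ whose interior lies in $C_1$, hence of even length by the $X_1$-even hypothesis. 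But $v_i \in S_a$ and $v_j \in S_b$ force $i$ odd and $j$ even, making $j-i$ odd, a contradiction. I expect this parity argument to be the heart of the proof.

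With the structural claim in hand, $T_c$ and $T_d$ are built component by component. Every vertex of $S_a \cap S_b$ is isolated in $F$ and lies in $C_1$, so I place it in both $T_c$ and $T_d$. For any non-trivial component $C$ of $F$, since $C$ meets at most one of $A_1, B_1$, at least one side of the bipartition of $C$ is disjoint from $A_1 \cup B_1$ and is thus contained in $C_1$; I put that side into $T_c$ and the opposite side into $T_d$. Each bipartition side is a stable set of $G$, and because $F$ is an induced subgraph of $G$ there are no edges of $G$ between distinct components of $F$, so $T_c$ and $T_d$ are indeed stable in $G[C_1]$ and $G[X_1]$ respectively. Finally, $w(T_c) + w(T_d) = w(S_a) + w(S_b) = a+b$, which combined with $w(T_c)\leq c$ and $w(T_d)\leq d$ yields $a+b \leq c+d$.
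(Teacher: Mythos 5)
Your argument is essentially the paper's proof. The paper also takes stable sets realizing $a$ and $b$, splits the bipartite graph they induce according to which vertices can reach $A_1$ or $B_1$ (your components meeting $A_1$ resp.\ $B_1$), rules out a component meeting both by exactly your parity clash (a minimal connecting path alternates between the two stable sets, hence has odd length, yet is a path from $A_1$ to $B_1$ with interior in $C_1$, hence has even length), and then swaps sides componentwise to get a stable set in $C_1$ and one in $X_1$ of total weight $a+b$. Two blemishes, neither fatal: your assertion that $F=G[S_a\cup S_b]$ has maximum degree $2$, so that its nontrivial components are paths and even cycles, is false --- the union of two stable sets is bipartite, but a vertex of $S_a\setminus S_b$ may have arbitrarily many neighbours in $S_b\setminus S_a$; fortunately nothing downstream uses this, since the parity step only needs one induced (e.g.\ shortest) path inside the component and the repackaging only needs the bipartition $(C\cap S_a,\, C\cap S_b)$. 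Also, your case analysis omits trivial components consisting of a single vertex of $S_a\setminus S_b$ or $S_b\setminus S_a$; such a vertex should simply be placed in $T_d$ (or in $T_c$ if it lies in $C_1$), after which the weight count still gives $w(T_c)+w(T_d)=a+b$.
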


\begin{proof}
  Let $A$ be a stable set of $G[A_1 \cup C_1]$ of weight $a$ and $B$ a
  stable set of $G[B_1 \cup C_1]$ of weight $b$.  In the bipartite
  graph $G[A\cup B]$, we denote by $Y_A$ (resp.\ $Y_B$) the set of
  those vertices of $A\cup B$ such that there exists a path in $G[A
  \cup B]$ joining them to some vertex of $A \cap A_1$ (resp.\ $B \cap
  B_1$).  Note that from the definition, $A \cap A_1 \subseteq Y_A$,
  $B \cap B_1 \subseteq Y_B$ and no edges exist between $Y_A\cup Y_B$
  and $(A\cup B)\sm (Y_A \cup Y_B)$.  Also, $Y_A$ and $Y_B$ are
  disjoint with no edges between them because else, there is some path
  in $G[A\cup B]$ from some vertex of $A \cap A_1$ to some vertex of
  $B \cap B_1$.  If such a path is minimal with respect to this
  property, its interior is in $C_1$ and it is of odd length because
  $G[A\cup B]$ is bipartite.  This contradicts the assumption that
  $(X_1, X_2)$ is $X_1$-even.  Now we put:

  \begin{itemize}
  \item $Z_D = (A \cap Y_A) \cup (B \cap Y_B) \cup (A \sm (Y_A \cup Y_B))$;
  \item $Z_C = (A \cap Y_B) \cup (B \cap Y_A) \cup (B \sm (Y_A \cup Y_B))$.
  \end{itemize}

  From all the definitions and properties above, $Z_D$ and $Z_C$ are
  stable sets and $Z_D \subseteq X_1$ and $Z_C \subseteq C_1$.  So,
  $a+b = w(Z_C) + w(Z_D) \leq c+d$.
\end{proof}

\begin{lemma}
  \label{l:ineqOdd}
  If $(X_1, X_2)$ is an $X_1$-odd 2-join of $G$, then $c+d \leq a+b$.
\end{lemma}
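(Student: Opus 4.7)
The proof should mirror the structure of Lemma~\ref{l:ineqEven}, but with the roles of $\{C, D\}$ and $\{A, B\}$ swapped. The plan is as follows. Let $C$ be a maximum weighted stable set of $G[C_1]$ (of weight $c$) and let $D$ be a maximum weighted stable set of $G[X_1]$ (of weight $d$). Note that $D$ partitions naturally as $(D\cap A_1) \cup (D\cap B_1) \cup (D\cap C_1)$, while $C \subseteq C_1$. Since both $C$ and $D$ are stable sets, the graph $G[C\cup D]$ is bipartite, with the edges going between $C\sm D$ and $D\sm C$ (vertices of $C\cap D$ being isolated).

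Define $Y_A$ to be the set of vertices of $C\cup D$ in a connected component of $G[C\cup D]$ that contains a vertex of $D\cap A_1$, and define $Y_B$ analogously for $D\cap B_1$. The key claim is that $Y_A\cap Y_B = \emptyset$. Indeed, if some component contained vertices $u\in D\cap A_1$ and $v\in D\cap B_1$, pick a shortest $uv$-path $P$ in $G[C\cup D]$; by minimality, no internal vertex of $P$ lies in $A_1\cup B_1$, so $P$ is a path of $G$ from $A_1$ to $B_1$ with interior in $C_1$. Moreover, since $u,v$ both lie on the $D$-side of the bipartition, $P$ has even length --- contradicting the assumption that $(X_1,X_2)$ is $X_1$-odd.

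Now, mimicking the even case, set
\begin{align*}
Z_A &= (D\cap Y_A) \cup (C\cap Y_B) \cup (D\sm(Y_A\cup Y_B)),\\
Z_B &= (D\cap Y_B) \cup (C\cap Y_A) \cup (C\sm(Y_A\cup Y_B)).
\end{align*}
Both are stable sets: within each of the three defining pieces, stability follows from that of $C$ or $D$, and across pieces either both vertices come from the same stable set or the pieces lie in different components of $G[C\cup D]$. Since $Y_A\cap Y_B = \emptyset$, one checks that $D\cap Y_A \subseteq A_1\cup C_1$ (a vertex of $D\cap B_1$ inside $Y_A$ would lie in $Y_A\cap Y_B$), and similarly $D\sm(Y_A\cup Y_B)\subseteq C_1$; thus $Z_A \subseteq A_1\cup C_1$, and symmetrically $Z_B\subseteq B_1\cup C_1$.

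Finally, a case analysis on whether a vertex lies in $C\sm D$, $D\sm C$, or $C\cap D$ (exactly as in the even case) shows that $w(Z_A)+w(Z_B) = w(C)+w(D) = c+d$. Since $Z_A$ and $Z_B$ are stable sets of $G[A_1\cup C_1]$ and $G[B_1\cup C_1]$ respectively, we obtain $c+d \leq w(Z_A)+w(Z_B) \leq a+b$. The main subtlety is the argument that $Y_A$ and $Y_B$ are disjoint: one must pass from a shortest path in $G[C\cup D]$ (whose intermediate $D$-vertices could a priori sit in $A_1$ or $B_1$) to a genuine $A_1$-to-$B_1$ path of $G$ with interior in $C_1$, which is where the $X_1$-odd hypothesis is used.
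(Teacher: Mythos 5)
Your proof is correct and is essentially the paper's own argument: same bipartite graph $G[C\cup D]$, same reachability sets $Y_A,Y_B$, same disjointness claim via a minimal even-length connecting path, and the same swap construction. The only (immaterial) difference is that you place $D\sm(Y_A\cup Y_B)$ in $Z_A$ and $C\sm(Y_A\cup Y_B)$ in $Z_B$ while the paper does the reverse; both leftovers lie in $C_1$ and are anticomplete to $Y_A\cup Y_B$, so either choice works.
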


\begin{proof}
  Let $D$ be a stable set of $G[X_1]$ of weight $d$ and $C$ a stable
  set of $G[C_1]$ of weight $c$.  In the bipartite graph $G[C\cup D]$,
  we denote by $Y_A$ (resp.\ $Y_B$) the set of those vertices of
  $C\cup D$ such that there exists a path in $G[C \cup D]$ joining
  them to some vertex of $D\cap A_1$ (resp.\ $D \cap B_1$).  Note that
  from the definition, $D \cap A_1 \subseteq Y_A$, $D \cap B_1
  \subseteq Y_B$ and no edges exist between $Y_A \cup Y_B$ and $(C\cup
  D)\sm (Y_A \cup Y_B)$.  Also, $Y_A$ and $Y_B$ are disjoint with no
  edges between them because else, there is some path in $G[C\cup D]$
  from some vertex of $D \cap A_1$ to some vertex of $D \cap B_1$.  If
  such a path is minimal with respect to this property, its interior
  is in $C_1$ and it is of even length because $G[C\cup D]$ is
  bipartite.  This contradicts the assumption that $(X_1, X_2)$ is
  $X_1$-odd.  Now we put:

  \begin{itemize}
  \item $Z_A = (D \cap Y_A) \cup (C \cap Y_B) \cup (C \sm (Y_A \cup
    Y_B))$;
  \item $Z_B = (D \cap Y_B) \cup (C \cap Y_A) \cup (D \sm (Y_A \cup Y_B)$.
  \end{itemize}

  From all the definitions and properties above, $Z_A$ and $Z_B$ are
  stable sets and $Z_A \subseteq A_1 \cup C_1$ and $Z_B \subseteq B_1
  \cup C_1$.  So, $c+d = w(Z_A) + w(Z_B) \leq a+b$.     
\end{proof}

\subsection{Even and odd blocks}

We call \emph{flat claw} of a weighted graph $G$ any set $\{q_1, q_2,
q_3, q_4\}$ of vertices such that:
\begin{itemize}
\item the only edges between the $q_i$'s are $q_1q_2$, $q_2q_3$ and
  $q_4q_2$;
\item $q_1$ and $q_3$ have no common neighbor in $V(G) \sm \{q_2\}$;
\item $q_4$ has degree~1 in $G$ and $q_2$ has degree~3 in $G$.
\end{itemize}

\begin{lemma}
  \label{l:4cClaw}
  Let $G$ be a graph, $Q = \{q_1, q_2, q_3, q_4\}$ a flat claw of $G$
  and $S'$ a maximum weighted stable set of $G$.  Then one and only
  one of the following holds:

    \begin{enumerate}
    \item\label{i:e1} $q_1\in S'$, $q_3\notin S'$ and $S' \cap Q$ is a
      maximum weighted stable set of $G[\{q_1, q_2, q_4\}]$;
    \item\label{i:e2} $q_1 \notin S'$, $q_3\in S'$ and $S'\cap Q$ is a
      maximum weighted stable set of $G[\{q_2, q_3, q_4\}]$;
    \item\label{i:e3} $q_1 \notin S'$, $q_3\notin S'$ and $S'\cap Q$
      is a maximum weighted stable set of $G[\{q_2, q_4\}]$;
    \item\label{i:e4}$q_1 \in S'$, $q_3\in S'$ and $S'\cap Q$ is a
      maximum weighted stable set of $G[\{q_1, q_2, q_3, q_4\}]$.
    \end{enumerate}
\end{lemma}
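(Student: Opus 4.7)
The plan is to observe first that the four cases are mutually exclusive and exhaustive: they correspond precisely to the four possible choices for $(\mathbf{1}_{q_1 \in S'}, \mathbf{1}_{q_3 \in S'})$. What remains is to show that in each case, $S' \cap Q$ is a maximum weighted stable set of the claimed subgraph $H$. Note that in each case the specified subgraph $H$ contains all vertices of $Q$ that could be in $S'$ given the constraints on $q_1, q_3$, so $S' \cap Q$ is a stable set of $H$ to begin with.

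The key structural remark is that, since $q_2$ has degree $3$ with $N(q_2)=\{q_1,q_3,q_4\}\subseteq Q$ and $q_4$ has degree $1$ with $N(q_4)=\{q_2\}\subseteq Q$, any local modification of $S'$ inside $Q$ that neither adds $q_1$ nor adds $q_3$ automatically produces a stable set of $G$: no new edge to $S'\setminus Q$ can appear, because only $q_2$ and $q_4$ might be newly introduced and their neighborhoods are confined to $Q$. This lets us safely compare $S'\cap Q$ to any candidate stable set of $H$ that does not enlarge the set $\{q_1,q_3\}\cap S'$.

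I will then carry out each case using the same template. In case 1, from $q_1\in S'$ we get $q_2\notin S'$, hence $q_4$ may be added (since its only neighbor is $q_2$); by maximality, $q_4\in S'$ or $w(q_4)=0$, so $w(S'\cap Q)=w(q_1)+w(q_4)$. Replacing $\{q_1,q_4\}$ by $\{q_2\}$ in $S'$ yields a stable set (since $q_3\notin S'$), so maximality gives $w(q_2)\leq w(q_1)+w(q_4)$; thus $w(S'\cap Q)=\max(w(q_1)+w(q_4),w(q_2))$ equals the maximum stable set weight in $G[\{q_1,q_2,q_4\}]$, proving $S'\cap Q$ is optimal there. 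Case 2 is symmetric in the roles of $q_1$ and $q_3$. In case 3, either $q_2\in S'$ (and swapping $q_2\leftrightarrow q_4$ gives $w(q_2)\geq w(q_4)$) or $q_2\notin S'$ (and then $q_4\in S'$ or $w(q_4)=0$, and swapping $q_4\leftrightarrow q_2$ gives $w(q_4)\geq w(q_2)$, using that $q_1,q_3\notin S'$). In case 4, $q_2\notin S'$, so again $q_4\in S'$ or $w(q_4)=0$; replacing $\{q_1,q_3,q_4\}$ by $\{q_2\}$ is a valid swap, giving $w(q_2)\leq w(q_1)+w(q_3)+w(q_4)=w(S'\cap Q)$, which identifies the claimed optimum.

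The only delicate point is the legality of each swap: we must never reintroduce $q_1$ or $q_3$ into $S'$ without checking external neighbors, and we must check that the vertices we remove from $S'$ ($q_1$, $q_3$, $q_4$) do not break anything (trivial, since removing a vertex from a stable set keeps it stable). The structural remark above is exactly what makes all the needed swaps cost-free, so this is a bookkeeping issue rather than a genuine obstacle; the proof then reduces to the four routine weight comparisons sketched above.
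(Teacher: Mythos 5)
Your proof is correct, and it is essentially the argument the paper has in mind: the paper dismisses this lemma with ``Follows directly from the definitions,'' and your case analysis with the swap arguments (made legal by the fact that $N(q_2)\cup N(q_4)\subseteq Q$, plus non-negativity of the weights to handle the case $q_4\notin S'$) is precisely the routine verification being left to the reader. Nothing is missing.
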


\begin{proof}
  Follows directly from the definitions.
\end{proof}

We define now the \emph{even block} $G_2$ with respect to
$(X_1, X_2)$.  We keep $X_2$ and replace $X_1$ by a flat claw on $q_1,
\dots, q_4$ where $q_1$ is complete to $A_2$ and $q_3$ is complete to
$B_2$.  We give the following weights: $w(q_1) = d-b$, $w(q_2) = c$,
$w(q_3) = d-a$, $w(q_4) = a+b-d$.  From Lemma~\ref{l:ineqbasic}, all
weights are non-negative.  By Lemma~\ref{l:ineqEven}, the following
Lemma applies in particular if $(X_1, X_2)$ is a connected $X_1$-even
2-join.

\begin{lemma}
  \label{l:evenBlock}
  If $a+b \leq c+d$ and if $G_2$ is the even block of $G$, then
  $\alpha(G_2) = \alpha (G)$.
\end{lemma}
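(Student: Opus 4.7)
The plan is to prove both $\alpha(G_2) \geq \alpha(G)$ and $\alpha(G_2) \leq \alpha(G)$ by a parallel four-case analysis, where on the $G$ side the cases come from Lemma~\ref{l:4cases} (the four possibilities for how a maximum weighted stable set $S$ meets $A_1$ and $B_1$, with respective contributions $a, b, c, d$ on $X_1$), and on the $G_2$ side they come from Lemma~\ref{l:4cClaw} applied to the flat claw $Q = \{q_1, q_2, q_3, q_4\}$ (the four possibilities for whether a maximum weighted stable set $S'$ contains $q_1$, $q_3$, both, or neither). The two lists of four cases will be matched bijectively.

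For $\alpha(G_2) \geq \alpha(G)$, starting from an optimal $S$ of $G$, I would keep $S \cap X_2$ and replace $S \cap X_1$ by a specific subset of $Q$: $\{q_1, q_4\}$ in case (i) of Lemma~\ref{l:4cases}, $\{q_3, q_4\}$ in (ii), $\{q_2\}$ in (iii), and $\{q_1, q_3, q_4\}$ in (iv). A direct check using the weights $w(q_1) = d-b$, $w(q_2) = c$, $w(q_3) = d-a$, $w(q_4) = a+b-d$ confirms that each replacement has weight exactly $w(S \cap X_1)$. Stability only requires checking the edges across the complete pairs $(\{q_1\}, A_2)$ and $(\{q_3\}, B_2)$: whenever $q_1$ is put in $S'$, we are in case (i) or (iv), so $S$ already meets $A_1$, which forces $S \cap A_2 = \emptyset$; symmetrically for $q_3$.

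For the reverse direction, starting from an optimal $S'$ of $G_2$, I would keep $S' \cap X_2$ and replace $S' \cap Q$ by a maximum weighted stable set $T$ of, respectively, $G[A_1 \cup C_1]$, $G[B_1 \cup C_1]$, $G[C_1]$ or $G[X_1]$ (of weights $a$, $b$, $c$, $d$) according to the case (i)--(iv) of Lemma~\ref{l:4cClaw}. The weight identity is again routine arithmetic. Stability follows since whenever $T$ may meet $A_1$ (cases (i) and (iv) only), the corresponding case of Lemma~\ref{l:4cClaw} has $q_1 \in S'$, and the completeness of $q_1$ to $A_2$ forces $S' \cap A_2 = \emptyset$; symmetrically for $B_1$.

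The one place the hypothesis $a+b \leq c+d$ is essential is in case (iii) of Lemma~\ref{l:4cClaw}: since $q_2 q_4$ is an edge, the maximum weighted stable set of $G_2[\{q_2, q_4\}]$ has weight $\max(c, a+b-d)$, and the hypothesis precisely ensures that this equals $w(q_2) = c$, which is exactly the weight appearing in case (iii) of Lemma~\ref{l:4cases}. In the remaining cases, the required weight equalities are immediate, and Lemma~\ref{l:ineqbasic} guarantees that all four claw weights are non-negative. The main obstacle, as often with 2-joins, is bookkeeping: ensuring that the four cases line up as claimed and that the choices of subsets of $Q$ and of $X_1$ are compatible with the completeness conditions across the 2-join, so that maximality is transferred faithfully in both directions.
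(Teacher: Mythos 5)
Your proof is correct and follows essentially the same route as the paper's: both directions are handled by matching the four cases of Lemma~\ref{l:4cases} with the four cases of Lemma~\ref{l:4cClaw}, using exactly the same replacement sets $\{q_1,q_4\}$, $\{q_3,q_4\}$, $\{q_2\}$, $\{q_1,q_3,q_4\}$ and the same backward substitution by maximum weighted stable sets of $G[A_1\cup C_1]$, $G[B_1\cup C_1]$, $G[C_1]$, $G[X_1]$. Your explicit identification of where the hypothesis $a+b\leq c+d$ is needed (so that $\max(c,a+b-d)=c$ in case (iii) of Lemma~\ref{l:4cClaw}) is precisely the role the paper assigns to it, just spelled out in more detail.
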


\begin{proof}
  Let $S$ be a stable set of maximum weight in $G$.  Then $S$ must
  satisfy one of \ref{i:4c1}, \ref{i:4c2}, \ref{i:4c3} or \ref{i:4c4}
  of Lemma~\ref{l:4cases}.  Respective to these cases one can
  construct a stable set $S'$ of $G_2$ that has the weight of $S$, by
  taking the union of $S \cap X_2$ and one of $\{q_1, q_4\}$, $\{q_3,
  q_4\}$, $\{q_2\}$ or $\{q_1, q_3, q_4\}$.    

  Conversely, if $S'$ is a stable set of $G_2$ of maximum weight then
  it satisfies one of \ref{i:e1}, \ref{i:e2}, \ref{i:e3} or \ref{i:e4}
  of Lemma~\ref{l:4cClaw}.  Respective to these cases, $w(S'\cap Q)$
  is $a$, $b$, $c$ or $d$ (by Lemma~\ref{l:ineqbasic} and because $a+b
  \leq c+d$) and one can construct a maximum stable set $S$ of $G$ by
  replacing $S' \cap Q$ by a maximum weighted stable set of $G[A_1
  \cup C_1]$, $G[B_1 \cup C_1]$, $G[C_1]$ or $G[X_1]$.
\end{proof}

We call \emph{flat vault} of graph $G$ any set $\{r_1, r_2,
r_3, r_4, r_5, r_6\}$ of vertices such that:

\begin{itemize}
\item the only edges between the $r_i$'s are such that
  $r_3,r_4,r_5,r_6,r_3$ is a 4-hole;
\item $N(r_1) = N(r_5)\sm \{r_4, r_6\}$; 
\item $N(r_2) = N(r_6)\sm \{r_3, r_5\}$; 
\item $r_1$ and $r_2$ have no common neighbors;
\item $r_3$ and $r_4$ have degree 2 in $G$.
\end{itemize}

\begin{lemma}
  \label{l:4cVault}
  Let $G$ be a graph, $Q = \{r_1, r_2, r_3, r_4, r_5, r_6\}$ a flat vault of $G$
  and $S'$ a maximum weighted stable set of $G$.  Then one and only
  one of the following holds:
 
  \begin{enumerate}
  \item\label{i:o1} $S' \cap \{r_1, r_5\} \neq \emptyset$, $S' \cap
    \{r_2, r_6\} = \emptyset$ and $S' \cap Q$ is a maximum weighted
    stable set of $G[\{r_1, r_3, r_4, r_5\}]$;
  \item\label{i:o2} $S' \cap \{r_1, r_5\} = \emptyset$, $S' \cap
    \{r_2, r_6\} \neq \emptyset$ and $S'\cap Q$ is a maximum weighted
    stable set of $G[\{r_2, r_3, r_4, r_6\}]$;
  \item\label{i:o3} $S' \cap \{r_1, r_5\} = \emptyset$, $S' \cap
    \{r_2, r_6\} = \emptyset$ and $S'\cap Q$ is a maximum weighted
    stable set of $G[\{r_3, r_4\}]$;
  \item\label{i:o4} $S' \cap \{r_1, r_5\} \neq \emptyset$, $S' \cap
    \{r_2, r_6\} \neq \emptyset$ and $S'\cap Q$ is a maximum weighted
    stable set of $G[\{r_1, r_2, r_3, r_4, r_5, r_6\}]$.
  \end{enumerate}
\end{lemma}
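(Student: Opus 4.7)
The four cases are pairwise disjoint and jointly exhaustive by definition, since they form the obvious partition according to whether each of $S' \cap \{r_1, r_5\}$ and $S' \cap \{r_2, r_6\}$ is empty or nonempty. So it suffices to check that in each case, $S' \cap Q$ attains the maximum weight of a stable set in the designated subgraph of $G[Q]$. In each case, $S' \cap Q$ is clearly contained in the designated vertex set (the cases directly enforce this), and it is a stable set of the induced subgraph. So only the maximality needs checking.

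The key structural observation I would use is that $r_1$ and $r_5$ are effectively false twins with respect to $V(G) \sm Q$: since $N(r_1) = N(r_5)\sm \{r_4,r_6\}$ and $\{r_4,r_6\} \subseteq Q$, the vertices $r_1$ and $r_5$ have exactly the same neighbors outside $Q$. Analogously, $r_2$ and $r_6$ share the same outside-$Q$ neighborhood. Moreover, $r_3$ and $r_4$ have no neighbors outside $Q$ at all, since they have degree~$2$ with both neighbors in $Q$. Finally, by hypothesis $r_1$ and $r_2$ have no common neighbor, so the outside-$Q$ neighborhoods of the two twin pairs are disjoint.

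With this in hand, each of the four cases is handled by a single swap argument. For a fixed case, let $H$ be the designated subgraph, and let $M$ be a maximum weighted stable set of $H$. I would show that $(S' \sm Q) \cup M$ is a stable set of $G$: no edge lies inside $M$; no edge lies inside $S' \sm Q$; and any putative edge between $M$ and $S' \sm Q$ is ruled out by the observation above -- vertices $r_3, r_4$ have no outside neighbors, and whenever $r_1$ or $r_5$ (resp.\ $r_2$ or $r_6$) appears in $M$ the current case guarantees that already some member of the same twin pair sits in $S'$, so $S' \sm Q$ avoids the common outside neighborhood of that pair. Maximality of $S'$ then forces $w(S' \cap Q) \geq w(M)$, and as $S' \cap Q$ is itself a stable set of $H$ we conclude $w(S' \cap Q) = w(M)$.

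The only delicate case is (iv), where both twin pairs must be handled simultaneously; the needed fact is precisely that $r_1$ and $r_2$ have no common neighbor, which ensures the two swaps cannot interact. Everything else is routine verification from the definitions, in direct parallel with Lemma~\ref{l:4cClaw}.
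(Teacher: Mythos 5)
Your proof is correct and is exactly the expansion the paper intends: the published proof is just ``Follows directly from the definitions,'' and your twin/swap argument (same outside-$Q$ neighborhoods for $r_1,r_5$ and for $r_2,r_6$, no outside neighbors for $r_3,r_4$, then exchange $S'\cap Q$ for a maximum stable set of the designated subgraph) is the natural way to make that precise. One tiny inaccuracy: the hypothesis that $r_1$ and $r_2$ have no common neighbor is not actually needed even in case~(iv), since each pair's outside neighborhood is already excluded from $S'\sm Q$ simply because $S'$ meets that pair --- that hypothesis is used elsewhere (e.g.\ in Lemma~\ref{l:stayBerge}) --- but this does not affect the validity of your argument.
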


\begin{proof}
  Follows directly from the definitions. 
\end{proof}

Let us now define the \emph{odd block} $G_2$ with respect to 
$(X_1,X_2)$.  We replace $X_1$ by a flat vault on $r_1, \dots, r_6$.
Moreover $r_1, r_5$ are complete to $A_2$ and $r_2, r_6$ are complete
to $B_2$.  We give the following weights: $w(r_1) = d-b$, $w(r_2) =
d-a$, $w(r_3) = w(r_4) = c$, $w(r_5) = w(r_6) = a+b-c-d$.  Note that
if we suppose $c+d \leq a+b$ (which holds in particular if $(X_1,
X_2)$ is an $X_1$-odd connected 2-join by Lemma~\ref{l:ineqOdd}), all
the weights are non-negative by Lemma~\ref{l:ineqbasic}.

\begin{lemma}
  \label{l:oddBlock}
  If $c+d \leq a+b$ and if $G_2$ is the odd block of $G$, then
  $\alpha(G_2) = \alpha (G)$.
 \end{lemma}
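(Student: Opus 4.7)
The plan is to follow the proof of Lemma~\ref{l:evenBlock} essentially verbatim, with the flat vault playing the role of the flat claw and Lemma~\ref{l:4cVault} replacing Lemma~\ref{l:4cClaw}. The hypothesis $c+d\leq a+b$, together with Lemma~\ref{l:ineqbasic}, is exactly what guarantees that the weights $w(r_1)=d-b$, $w(r_2)=d-a$, $w(r_3)=w(r_4)=c$, and $w(r_5)=w(r_6)=a+b-c-d$ are all non-negative, so that $G_2$ is well-defined as a weighted graph.

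For the inequality $\alpha(G)\leq\alpha(G_2)$, I would take a maximum weighted stable set $S$ of $G$ and split into the four cases of Lemma~\ref{l:4cases}. In each case I replace $S\cap X_1$ by a stable subset of the vault $Q=\{r_1,\ldots,r_6\}$ of the same weight: for case~\ref{i:4c1} use $\{r_1,r_3,r_5\}$, of weight $(d-b)+c+(a+b-c-d)=a$; for case~\ref{i:4c2} the symmetric $\{r_2,r_4,r_6\}$, of weight $b$; for case~\ref{i:4c3} use $\{r_3\}$, of weight $c$; and for case~\ref{i:4c4} use $\{r_1,r_2,r_3,r_5\}$, of weight $(d-b)+(d-a)+c+(a+b-c-d)=d$. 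Each is stable in $Q$ since the non-edges in the 4-hole $r_3r_4r_5r_6$ are $r_3r_5$ and $r_4r_6$, while $r_1$ and $r_2$ have no neighbors inside $Q$. For stability with $X_2$, the only vertices of $Q$ with neighbors outside $Q$ are $r_1,r_5$ (complete to $A_2$) and $r_2,r_6$ (complete to $B_2$), so the case hypotheses of Lemma~\ref{l:4cases} guarantee $S\cap A_2=\emptyset$ whenever $r_1$ or $r_5$ is used, and likewise $S\cap B_2=\emptyset$ whenever $r_2$ or $r_6$ is used.

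For the reverse inequality, I would take a maximum weighted stable set $S'$ of $G_2$, apply Lemma~\ref{l:4cVault}, and in each of the four resulting cases check that $w(S'\cap Q)$ equals $a$, $b$, $c$, or $d$ respectively. The corresponding reconstruction replaces $S'\cap Q$ with a maximum weighted stable set of $G[A_1\cup C_1]$, $G[B_1\cup C_1]$, $G[C_1]$, or $G[X_1]$, producing a stable set of $G$ of weight $w(S')$; stability is immediate since no vertex of $S'\cap X_2$ was adjacent to any vertex of $Q\setminus\{r_1,r_2,r_5,r_6\}$ and the case conditions control what lives in $A_2\cup B_2$.

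The only genuinely non-routine step is the case analysis establishing $w(S'\cap Q)\leq a,b,c,d$ in Lemma~\ref{l:4cVault}~\ref{i:o1}--\ref{i:o4}. Because the vault contains a 4-hole, each case admits several candidate stable sets that must be compared; for instance, case~\ref{i:o1} requires checking that $\{r_1,r_3\}$, $\{r_1,r_4\}$, $\{r_3,r_5\}$ and $\{r_1,r_3,r_5\}$ all have weight at most $a$, which reduces to $(d-b)+c\leq a$, $a-c\leq a$, $a+b-d\leq a$, and $(d-b)+c+(a+b-c-d)=a$; all of these follow from the hypothesis $c+d\leq a+b$ together with $0\leq c\leq a,b\leq d$ from Lemma~\ref{l:ineqbasic}. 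This verification is the dual of the one implicit in Lemma~\ref{l:evenBlock} and constitutes the main (though still routine) obstacle.
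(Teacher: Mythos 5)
Your proposal is correct and follows essentially the same route as the paper's proof: the same four replacement sets $\{r_1,r_3,r_5\}$, $\{r_2,r_4,r_6\}$, $\{r_3\}$, $\{r_1,r_2,r_3,r_5\}$ for the forward direction via Lemma~\ref{l:4cases}, and the same reconstruction via Lemma~\ref{l:4cVault} for the converse, with the hypothesis $c+d\leq a+b$ doing exactly the work you identify; the paper merely compresses the weight verifications into a parenthetical remark where you spell them out. (One inconsequential slip: in your case~\ref{i:o1} checklist the weight $a-c$ belongs to $\{r_1,r_5\}$, not $\{r_1,r_4\}$, whose weight is $(d-b)+c$ like that of $\{r_1,r_3\}$; all the required inequalities still hold.)
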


\begin{proof}
  Let $S$ be a stable set of maximum weight in $G$.  Then $S$ must
  satisfy one of \ref{i:4c1}, \ref{i:4c2}, \ref{i:4c3} or \ref{i:4c4}
  of Lemma~\ref{l:4cases}.  So, respective to these cases, it is easy
  to construct a stable set $S'$ of $G_2$ that has the weight of $S$,
  by taking the union of $S \cap X_2$ and one of $\{r_1, r_3, r_5\}$,
  $\{r_2, r_4, r_6\}$, $\{r_3\}$ or $\{r_1, r_2, r_3, r_5\}$.

  Conversely, if $S'$ is a stable set of $G_2$ of maximum weight then
  it satisfies one of \ref{i:o1}, \ref{i:o2}, \ref{i:o3} or \ref{i:o4}
  of Lemma~\ref{l:4cVault}.  Respective to these cases, $w(S'\cap Q)$
  is $a$, $b$, $c$ or $d$ (because $c+d \leq a+b$) and one can
  construct a maximum weighted stable set $S$ of $G$ of the same
  weight as $S'$ by replacing $S' \cap \{r_1, r_2, r_3, r_4, r_5,
  r_6\}$ by a maximum weighted stable set of $G[A_1 \cup C_1]$, $G[B_1
  \cup C_1]$, $G[C_1]$ or $G[X_1]$.
\end{proof}

Note that the following lemma fails for $\C{ehf}{}$ because the odd
block contains an even hole.
 
\begin{lemma}
  \label{l:stayBerge}
  Let $G$ be a graph in $\C{Berge}{}$ and $(X_1, X_2)$ be a connected
  2-join of $G$.   If $(X_1, X_2)$ is $X_1$-even then the even block
  $G_2$ is in $\C{Berge}{}$. If $(X_1, X_2)$ is $X_1$-odd then the
  odd block $G_2$ is in $\C{Berge}{}$.
\end{lemma}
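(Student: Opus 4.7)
The plan is to verify the two conditions defining a Berge graph: absence of an odd hole of length at least five and absence of an odd antihole of length at least five. In both cases (even block with flat claw $Q = \{q_1, \ldots, q_4\}$ or odd block with flat vault $Q = \{r_1, \ldots, r_6\}$) the strategy is contrapositive: given an odd hole (or odd antihole) $H$ in $G_2$, I construct one of the same parity in $G$, contradicting $G \in \C{Berge}{}$.

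For odd holes, I first eliminate marker vertices that cannot lie in $V(H)$. In the even block, $q_4$ has degree $1$ and hence lies in no hole. In the odd block, $r_3$ and $r_4$ have degree $2$, so $r_3 \in V(H)$ forces $r_4, r_5, r_6 \in V(H)$; then either these four vertices form the entire induced $4$-cycle $H$ (impossible since $|V(H)|$ is odd) or $H$ traverses them along the sub-path $r_5 r_4 r_3 r_6$ of length $3$. A case analysis on $V(H) \cap Q$ produces a short list of marker sub-paths that can occur: the empty set; a single vertex (such as $q_1$ or $q_3$ in the claw, or $r_1, r_2, r_5, r_6$ in the vault) connecting two external neighbors on the same side; the pair $\{q_1, q_3\}$; the path $q_1 q_2 q_3$ of length $2$; the edge $r_5 r_6$ of length $1$ or the length-$3$ path $r_5 r_4 r_3 r_6$; or configurations using both twins on a same side simultaneously. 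For each sub-path I substitute a single vertex of $A_1$ or $B_1$ (for a single marker vertex), or an induced $A_1$-to-$B_1$ path of $G[X_1]$ with interior in $C_1$ (when the sub-path crosses sides). By Lemma~\ref{l.2jAiBi}, such an $X_1$-path has even length in the $X_1$-even case and odd length in the $X_1$-odd case, matching the parity of the replaced marker path, so the resulting cycle $C$ has the same parity as $H$. Inducedness of $C$ in $G$ follows from the anticompleteness of $(A_1, B_2)$ and $(A_2, B_1)$ and a chord-avoidance argument showing that the interior of the $X_2$-part of $H$ lies in $C_2$.

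For odd antiholes of length $k \geq 7$ in $G_2$, I use that every vertex of $\bar{C}_k$ has degree $k - 3 \geq 4$, which immediately rules out $q_4$ (degree $1$) and $r_3, r_4$ (degree $2$). An analogous count of non-neighbors (each antihole-vertex has $k - 3 \geq 4$ non-neighbors in the antihole, all among its $G_2$-non-neighbors) rules out $q_2$, whose $G_2$-non-neighbors are only $\{q_1, q_3, q_4\}$. The remaining candidate marker vertices in $V(H)$ are $q_1, q_3$ in the claw and subsets of $\{r_1, r_5\}$ and $\{r_2, r_6\}$ in the vault. Since $q_1, r_1, r_5$ are complete to $A_2$ with no external neighbors outside $A_2$, and $q_3, r_2, r_6$ are complete to $B_2$ with no external neighbors outside $B_2$, I substitute each such marker vertex in $V(H)$ by an appropriate vertex of $A_1$ or $B_1$, which preserves the induced subgraph on $V(H)$ and yields an antihole of length $k$ in $G$. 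The length-$5$ antihole case is subsumed by the length-$5$ hole analysis since $C_5 = \bar{C}_5$.

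The main obstacle is the odd hole analysis in the vault case: the $4$-cycle $r_3 r_4 r_5 r_6$ together with the twin structure $(r_1, r_5)$ and $(r_2, r_6)$ produces several subcases, in each of which the $X_1$-side replacement must be chosen carefully to preserve inducedness. An additional subtlety arises in the antihole case when both twins $r_1, r_5$ (or $r_2, r_6$) occur in $V(H)$: the substitution then requires two distinct non-adjacent $A_1$-vertices (resp.\ $B_1$-vertices), whose existence must be argued using the structural constraints coming from $G$ being Berge, or else this configuration must be shown not to arise.
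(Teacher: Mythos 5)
Your proposal follows essentially the same route as the paper: assume an odd hole or odd antihole $H$ in $G_2$, use degree counts to restrict which marker vertices can lie in $H$, and then either substitute single marker vertices by vertices of $A_1$ or $B_1$, or substitute the marker sub-path crossing from the $A_2$-side to the $B_2$-side by a real $A_1$--$B_1$ path with interior in $C_1$, whose parity is controlled by Lemma~\ref{l.2jAiBi}. Your odd-hole case analysis is in fact slightly more careful than the paper's, which only names $r_5\tp r_6$ as the crossing sub-path and omits the $r_5\tp r_4\tp r_3\tp r_6$ possibility that you correctly list (both have odd length, so the same substitution works).

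The one point you leave open --- both twins $r_1,r_5$ (or $r_2,r_6$) lying in an odd antihole $H$ --- is exactly the configuration the paper rules out rather than handles, and the argument is short: first, since every two vertices of an antihole on at least $7$ vertices have a common neighbour inside it, and no vertex of $\{r_1,r_5\}$ has a common neighbour with a vertex of $\{r_2,r_6\}$, one may assume $H\cap\{r_2,r_6\}=\emptyset$; then $N_{G_2}(r_1)\subseteq N_{G_2}(r_5)$, whereas in $\overline{C_k}$ with $k\geq 7$ any two distinct vertices have distinct (two-element) non-neighbourhoods, so some vertex of $H$ would be adjacent to $r_1$ but not to $r_5$. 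Hence at most one twin occurs and a single substitution by a vertex of $A_1$ suffices. With that observation inserted, your proof matches the paper's. (One further shared caveat: both your substitution of $\{q_1,q_3\}$ by a pair $a_1\in A_1$, $b_1\in B_1$ and the paper's rely on choosing $a_1,b_1$ non-adjacent, which the paper extracts from Lemma~\ref{k1}~\ref{i:BiComp1}; this is available in every context where the lemma is applied.)
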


\begin{proof}
  Suppose that $G_2$ contains an odd hole $H$.  If no edge of $H$ has
  both ends in $V(G_2)\sm X_2$, then $H \subseteq X_2 \cup
  (N_{G_2}(A_2) \sm X_2) \cup (N_{G_2}(B_2) \sm X_2)$.  We obtain an
  odd hole $H'$ of $G$ as follows.  By Lemma~\ref{k1} \ref{i:BiComp1},
  there exist non-adjacent vertices $a_1\in A_1$, $b_1 \in B_1$.  If
  $H \cap (N_{G_2}(A_2) \sm X_2) \neq \emptyset$, we replace the
  unique vertex in $H \cap (N_{G_2}(A_2) \sm X_2)$ by $a_1$.  We
  proceed similarly with $H \cap (N_{G_2}(B_2) \sm X_2)$ and $b_1$.
  We obtain an odd hole $H'$ of $G$, a contradiction. 

  If $H$ has an edge whose ends are both in $V(G_2)\sm X_2$ then $H$
  is vertex-wise partitioned into $q_1 \tp q_2 \tp q_3$ when $(X_1,
  X_2)$ is $X_1$-even (resp.\ $r_5 \tp r_6$ when $(X_1, X_2)$ is
  $X_1$-odd), and a path with one end in $A_2$, one end in $B_2$ and
  interior in $C_2$.  Then an odd hole of $G$ can be obtained by
  replacing $q_1 \tp q_2 \tp q_3$ (resp.\ $r_5 \tp r_6$) by a path of
  even (resp.\ odd) length of $G$ from $A_1$ to $B_1$ with interior in
  $C_1$.  This contradicts $G$ being Berge.

  Suppose that $G_2$ contains an odd antihole $H$.  Since an antihole
  on 5 vertices is in fact a hole, we may assume that $H$ is on at
  least 7 vertices.  So all vertices of $H$ have degree at least four.
  Hence, if $G_2$ is an even block then $H$ cannot go through $q_2,
  q_4$. So, up to the replacement of at most two vertices, $H$ is an
  odd antihole of $G$, a contradiction.  Now suppose $G_2$ is an odd
  block.  Because of the degrees, $r_3, r_4 \notin H$.  In an antihole
  on at least 7 vertices, every pair of vertices has a common
  neighbor.  A vertex of $\{r_1, r_5\}$ has no common neighbor with a
  vertex of $\{r_2, r_6\}$.  So, we may assume that $H\cap \{r_2,
  r_6\} = \emptyset$.  We have $N_{G_2}(r_1) \subseteq N_{G_2}(r_5)$
  so not both $r_1, r_5$ are in $H$.  So, we may assume that
  $r_5\notin H$.  So, up to the replacement of $r_1$ by a vertex of
  $A_1$, $H$ is an odd antihole of $G$, a contradiction.
\end{proof}

\subsection{The gem block}
\label{subsec:gemblock}

We present here a block of decomposition that we do not use in the
rest of the paper but that is interesting because it can be used in
all situations (whereas some inequalities must be satisfied for even
and odd blocks).  

To build the \emph{gem-block} $G_2$ replace $X_1$ by an induced path
$p \tp x \tp y \tp p'$ plus a vertex $z$ complete to this path.
Vertex $p$ is complete to $A_2$ and vertex $p'$ is complete to $B_2$.
We give weights: $w(p) = a$, $w(x) = a+b-d$, $w(y) = d$, $w(p') = 2d -
a$, $w(z) = c+d$.  Note that all weights are non-negative by
Lemma~\ref{l:ineqbasic}.  We omit the proof of the following Lemma
since we do not use it.

\begin{lemma}\label{lineblock}
  If $G_2$ is the gem-block of $G$ then $\alpha(G_2) = \alpha(G) + d$.
\end{lemma}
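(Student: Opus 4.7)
The plan is to mimic the proofs of Lemmas~\ref{l:evenBlock} and~\ref{l:oddBlock}, but using the five-vertex gem $Q=\{p,x,y,p',z\}$ in place of the flat claw or flat vault. Since the only edges between $Q$ and $X_2$ go from $p$ to $A_2$ and from $p'$ to $B_2$, every stable set $S'$ of $G_2$ falls into one of four cases according to whether $p$ and $p'$ lie in $S'$. First I would enumerate the stable sets of the gem; using non-negativity of all weights (Lemma~\ref{l:ineqbasic} gives $d\le a+b$, $a\le d$, $b\le d$, hence also $a+b\le 2d$), the optimal choices in each case turn out to be $\{p,y\}$ of weight $a+d$ when $p\in S'$ and $p'\notin S'$; $\{x,p'\}$ of weight $(a+b-d)+(2d-a)=b+d$ when $p'\in S'$ and $p\notin S'$; $\{z\}$ of weight $c+d$ when $p,p'\notin S'$; and $\{p,p'\}$ of weight $2d$ when $p,p'\in S'$. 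These optimality claims reduce to a handful of simple inequalities (for instance $\{x,p'\}$ beats $\{p'\}$ because $b+d\ge 2d-a$ is equivalent to $d\le a+b$, and $\{z\}$ beats $\{x\}$ and $\{y\}$ via $c\ge 0$ and $a+b\le 2d$).

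Next I would observe that in each of the four cases the restriction imposed on $S'\cap X_2$ is precisely $S'\cap X_2\subseteq C_2\cup B_2$, $\subseteq C_2\cup A_2$, $\subseteq X_2$, $\subseteq C_2$ respectively, which is exactly the constraint that the analogous case of Lemma~\ref{l:4cases} applied to a maximum weighted stable set $S$ of $G$ places on $S\cap X_2$ (because $S\cap A_1\neq\emptyset$ forces $S\cap A_2=\emptyset$ via the complete pair $(A_1,A_2)$, and similarly for $B$). Writing $\alpha_A=\alpha(G[C_2\cup B_2])$, $\alpha_B=\alpha(G[C_2\cup A_2])$, $\alpha_C=\alpha(G[X_2])$ and $\alpha_D=\alpha(G[C_2])$, taking the best choice in each of the four cases gives
\begin{align*}
\alpha(G_2) &= \max\{(a+d)+\alpha_A,\ (b+d)+\alpha_B,\ (c+d)+\alpha_C,\ 2d+\alpha_D\},\\
\alpha(G)   &= \max\{a+\alpha_A,\ b+\alpha_B,\ c+\alpha_C,\ d+\alpha_D\}.
\end{align*}
Term-by-term the first maximum exceeds the second by exactly $d$, so $\alpha(G_2)=\alpha(G)+d$.

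The only real obstacle is bookkeeping: the gem has enough stable sets that the case analysis must be done carefully, and one must confirm that the four displayed gem stable sets are truly optimal in their respective cases using only the inequalities from Lemma~\ref{l:ineqbasic}. Importantly, no parity hypothesis and no inequality between $c+d$ and $a+b$ is required, which is precisely why the gem-block applies in every situation, unlike the even and odd blocks.
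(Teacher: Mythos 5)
The paper deliberately omits a proof of this lemma (``We omit the proof of the following Lemma since we do not use it''), so there is nothing to compare your argument against; I can only assess it on its own terms, and it checks out. The gem $Q=\{p,x,y,p',z\}$ has exactly four maximal stable sets, $\{p,y\}$, $\{x,p'\}$, $\{z\}$ and $\{p,p'\}$, of weights $a+d$, $b+d$, $c+d$ and $2d$ respectively, and your case analysis by membership of $p$ and $p'$ in $S'$ correctly identifies each of these as optimal in its case using only $0\le c\le a,b\le d\le a+b$ (the comparison $\{z\}$ versus $\{x\}$ needs $c+2d\ge a+b$, which follows from $a,b\le d$ rather than from $d\le a+b$, as you note). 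Since $x$, $y$, $z$ have no neighbours in $X_2$ while $p$ and $p'$ are complete to $A_2$ and $B_2$ respectively, the constraint each case places on $S'\cap X_2$ matches the constraint the corresponding case of Lemma~\ref{l:4cases} places on $S\cap X_2$, and both achievability and the upper bound in each of the four terms are immediate from the adjacency structure; the two four-term maxima then differ term-by-term by exactly $d$. Your closing observation is also the right one: unlike Lemmas~\ref{l:evenBlock} and~\ref{l:oddBlock}, no hypothesis relating $a+b$ to $c+d$ is used, which is precisely why the paper introduces the gem-block as the construction that ``can be used in all situations.'' The only thing I would ask you to make explicit in a written-up version is the achievability direction (that each gem stable set unioned with the appropriate maximum weighted stable set of $G[C_2\cup B_2]$, $G[C_2\cup A_2]$, $G[X_2]$ or $G[C_2]$ is indeed stable in $G_2$), but this is the routine bookkeeping you already flag.
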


The gem-block appears implicitly in the proof of the NP-completeness
result in Section~\ref{sec:npc}.

\section{Extensions of basic classes}
\label{sec:extension}

To build a decomposition tree that allows keeping track of maximum
stable sets we use the even and odd blocks defined in
Section~\ref{sec:alphaTrack}.  As a consequence, the leaves of our
decomposition tree may fail to be basic, but are what we call
\emph{extensions} of basic graphs.  Let us define this.

Let $P = p_1 \tp \cdots \tp p_k$, $k\geq 4$, be a flat path of a graph
$G$.  \emph{Extending} $P$ means:

\begin{itemize}
\item Either:
  \begin{enumerate}
  \item replace the vertices of $P$ by a flat claw on $q_1, \dots, q_4$
    where $q_1$ is complete to $N_G(p_1)\sm \{p_2\}$ and $q_3$ is
    complete to $N_G(p_k) \sm \{p_{k-1}\}$;
  \item replace $X_1$ by a flat vault on $r_1, \dots, r_6$ where $r_1,
    r_5$ are complete to $N_G(p_1)\sm \{p_2\}$ and $r_2, r_6$ are complete
    to $N_G(p_k) \sm \{p_{k-1}\}$.
  \end{enumerate}
\item Mark the vertices of the flat claw (or vault) with the integer
  $k$.
\end{itemize}

An \emph{extension} of a pair $(G, {\cal M})$, where $G$ is a graph
and $\cal M$ is a set of vertex-disjoint flat paths of length at
least~3 of $G$, is any weighted graph obtained by extending the flat
paths of $\cal M$ and giving any non-negative weights to all the vertices.
Note that since $\cal M$ is a set of \emph{vertex-disjoint} paths, the
extensions of the paths from $\cal M$ can be done in any order and
lead to the same graph.  An \emph{extension} of a graph $G$ is any
graph that is an extension of $(G, {\cal M})$ for some $\cal M$.

We say that the extension of $P$ is \emph{parity-preserving} when $P$
has even length and is replaced by a flat claw, or when $P$ has odd
length and is replaced by a flat vault. We define the
\emph{parity-preserving} extension of a pair $(G, {\cal M})$ and of a
graph $G$ by requiring that all extensions of paths are
parity-preserving.

\subsection{Recognition of extensions basic graphs}

We will describe algorithms for computing cliques and stable sets in
graphs from our basic classes and their extensions.  To apply these
algorithms we need to detect in which basic class a graph is.  For
bipartite graphs, line graphs of bipartite graphs and their
complements, this a classical problem,
see~\cite{lehot:root,roussopoulos:linegraphe}.  For double split
graphs, this is very easy, see \cite{nicolas:bsp}, Section~7.  For
path-cobipartite graphs it is very easy by picking a vertex of
degree~2 if any, checking if it belongs to a flat path, if so taking a
maximal such flat path, and checking if the flat path satisfies the
definition of a path-cobipartite graph.  A similar trick recognizes
path-double split graphs.  All these classes can be recognized in
linear time.

The class $\C{ehf}{basic}$ can be recognized in time ${\cal
  O}(n^2m)$ by checking for all pairs of vertices if their deletion
gives the line graph of a tree.  Checking that the graph is even-hole
free is easy, since in the line graph of a tree, there exists a unique
induced path joining any pair of vertices. 

Recognition of extensions of basic graphs is easy thanks to the mark
given to the new vertices arising from extensions. These marks allow to
compute the original graph from its extension.

Also, when a graph is identified to be basic, the algorithms above
certify that the graph is basic.  For bipartite graphs, it gives a
bipartition, for a line graph $G$, a root-graph $R$ such that
$G=L(R)$.  For double split graphs, path cobipartite graphs,
path-double split graph and graph from $\C{ehf}{basic}$, the sets like
in their respective definitions are output.

We do not write a theorem about these algorithms, but in the
description of the algorithms in the rest of the paper, when we
consider an extended basic graph, it is implicit that the algorithm
can know in time $O(n^2m)$ in which basic class the graph is.  Since
all our algorithms run in time at least ${\cal O}(n^3m)$, this does not
affect the overall complexity.

\subsection{Parity-preserving extensions of basic Berge graphs (except
  line graphs)}

\begin{lemma}
  \label{l:staybip}
  A parity-preserving extension of a bipartite graph is a bipartite
  graph.
\end{lemma}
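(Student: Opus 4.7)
The plan is to 2-color the extension from any 2-coloring $c$ of the original bipartite graph $G$, by keeping $c$ on the untouched vertices and extending it suitably to each replaced flat path. Since the paths in $\mathcal{M}$ are pairwise vertex-disjoint, their extensions do not interact, so it suffices to check that each replacement preserves bipartiteness locally, i.e.\ can be colored compatibly with the fixed colors of the external neighbors $N_G(p_1)\setminus\{p_2\}$ and $N_G(p_k)\setminus\{p_{k-1}\}$.

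First I would treat the flat claw case. Here $P=p_1\tp\cdots\tp p_k$ has even length, so $p_1$ and $p_k$ lie in the same class of $c$; say both have color $\alpha$, and then all vertices of $N_G(p_1)\setminus\{p_2\}$ and of $N_G(p_k)\setminus\{p_{k-1}\}$ have color $\overline{\alpha}$. Assign $c(q_1)=c(q_3)=c(q_4)=\alpha$ and $c(q_2)=\overline{\alpha}$. The only new edges are $q_1q_2,q_2q_3,q_2q_4$ inside the claw and the edges from $q_1$ (resp.\ $q_3$) to $N_G(p_1)\setminus\{p_2\}$ (resp.\ $N_G(p_k)\setminus\{p_{k-1}\}$); all of these go between classes $\alpha$ and $\overline{\alpha}$, so the resulting graph is bipartite.

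Next I would treat the flat vault case. Here $P$ has odd length, so $p_1$ and $p_k$ lie in opposite classes, say with colors $\alpha$ and $\overline{\alpha}$; hence $N_G(p_1)\setminus\{p_2\}$ has color $\overline{\alpha}$ and $N_G(p_k)\setminus\{p_{k-1}\}$ has color $\alpha$. Set $c(r_1)=c(r_3)=c(r_5)=\alpha$ and $c(r_2)=c(r_4)=c(r_6)=\overline{\alpha}$. The only edges inside the vault are those of the 4-hole $r_3r_4r_5r_6r_3$, which are properly bichromatic by construction. The remaining new edges go from $\{r_1,r_5\}$ to $N_G(p_1)\setminus\{p_2\}$ (colors $\alpha$ and $\overline{\alpha}$) and from $\{r_2,r_6\}$ to $N_G(p_k)\setminus\{p_{k-1}\}$ (colors $\overline{\alpha}$ and $\alpha$), so every edge is bichromatic.

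Doing this independently for every $P\in\mathcal{M}$ yields a proper 2-coloring of the entire extension, proving it is bipartite. There is no real obstacle: the only thing to notice is the matching of parities, namely that in the flat claw the two ``external'' vertices $q_1,q_3$ are at even distance inside the claw (so they must receive the same color, matching an even-length $P$), whereas in the flat vault the external pairs $\{r_1,r_5\}$ and $\{r_2,r_6\}$ are separated by an odd distance through the 4-hole (so they get opposite colors, matching an odd-length $P$). This is precisely why the definition of \emph{parity-preserving} pairs even-length paths with flat claws and odd-length paths with flat vaults.
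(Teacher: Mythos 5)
Your proof is correct and follows essentially the same approach as the paper: both extend a fixed 2-coloring of the bipartite graph by assigning $q_1,q_3,q_4$ one color and $q_2$ the other for a flat claw (even path), and $r_1,r_3,r_5$ one color and $r_2,r_4,r_6$ the other for a flat vault (odd path). The only cosmetic difference is that the paper phrases the conclusion as an induction on the number of extended paths, whereas you handle all paths at once, justified by their vertex-disjointness.
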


\begin{proof}
  Suppose that a graph $G$ is bipartite.  So we color its vertices
  black and white.  Suppose that a parity-preserving extension with
  respect to $P= p_1 \tp \cdots \tp p_k$ is performed.  If the path
  has even length then up to symmetry $p_1$ and $p_k$ are black.
  Since the extension is parity-preserving, $P$ is replaced by a flat
  claw on $q_1, q_2, q_3, q_4$.  We give color black to $q_1, q_3,
  q_4$ and color white to $q_2$.  If the path has odd length then up
  to symmetry $p_1$ is black and $p_k$ is white.  Since the extension
  is parity-preserving, $P$ is replaced by a flat vault on $r_1, r_2,
  r_3, r_4, r_5, r_6$.  We give color black to $r_1, r_3, r_5$ and
  color white to $r_2, r_4, r_6$.  This shows that the
  parity-preserving extension of $P$ yields a bipartite graphs and the
  lemma follows by an induction on the number of extended paths.
\end{proof}

The following lemma shows that maximum weighted stable sets can be
computed for all parity-preserving extensions of Berge basic classes,
except line graphs of bipartite graphs.

\begin{lemma}
  \label{l:optBergeBasic}
  There is an algorithm with the following specification:
  \begin{description}
  \item[Input: ] A weighted graph $G$ that is a parity preserving
    extension of either a bipartite graph, the complement of a
    bipartite graph, the complement of a line graph of a bipartite
    graph, a path-cobipartite graph, the complement of a
    path-cobipartite graph, a path-double split graph or the
    complement of a path-double split graph.
  \item[Output: ] A maximum weighted stable set of $G$.
  \item[Running time: ] ${\cal O} (n^5)$
  \end{description}
\end{lemma}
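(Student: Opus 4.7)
The plan is to treat each of the seven basic classes separately, recognizing first which class the input graph is an extension of (using the algorithms discussed just before the lemma, together with the marks placed on the extension vertices), and then computing a maximum weighted stable set tailored to that class.

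First I would dispatch the bipartite case. By Lemma~\ref{l:staybip}, a parity-preserving extension of a bipartite graph is again bipartite. So in this case I simply run the classical maximum weighted stable set algorithm for bipartite graphs, via the reduction to a minimum $s$-$t$ cut in an auxiliary network; this fits comfortably inside the ${\cal O}(n^5)$ budget.

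Next I would handle the ``complement'' classes. The complement of a bipartite graph is the disjoint union of two cliques with some edges between them, so a maximum weighted stable set consists of at most one vertex from each clique, and the problem reduces to the maximum weight edge (or vertex) in a weighted bipartite graph. The complement of a line graph of a bipartite graph: a stable set is a clique in the line graph, which is either a triangle (impossible, since the base graph is bipartite) or a star, so the problem reduces to maximizing the weight of a star in a bipartite graph. Both are polynomial. The only subtlety is that the extensions introduce flat claws and flat vaults, but these are recognizable from the marks and their local structure; in each case one argues that such configurations cannot arise inside a graph of this ``dense complement'' type except in a very restricted way (essentially there are no flat paths of length $\geq 4$ to extend, or if they appear, they sit in a clearly delimited location), so the extended graph differs from the basic one only on a handful of vertices that can be handled by enumeration.

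The main work, and the main obstacle, is the path-cobipartite, path-double split, and their complements. For these, I would use the explicit structural description (two cliques $A,B$ connected by vertex-disjoint flat paths, possibly together with the ``double split'' part $C \cup D$). A stable set can take at most one vertex from $A$ and at most one from $B$ (and, in the path-double split case, a limited choice from $C \cup D$); once these boundary choices are fixed, the problem decouples into independent weighted stable set problems on the flat paths and on the extensions replacing them. The extensions have just 4 or 6 new vertices with arbitrary weights, so the restricted stable set problem on each is solved by enumerating the constantly many possibilities. Summing the optimal contributions gives the overall weight, and iterating over the polynomially many boundary configurations from $A \cup B \cup C \cup D$ gives an ${\cal O}(n^5)$ algorithm. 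The complement cases are dually reduced to weighted clique on the same structures, where a clique is confined to a clique part of the original graph plus at most one inserted path vertex, again giving a direct polynomial algorithm. Correctness in each case is a straightforward case analysis using the definition of the class together with the definition of a parity-preserving extension.
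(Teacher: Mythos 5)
Your proposal follows essentially the same route as the paper's proof: the bipartite case via Lemma~\ref{l:staybip}, the dense complement classes by observing that extensions can introduce only a bounded number of new vertices and enumerating over them, and the path-cobipartite and path-double split classes by enumerating the polynomially many stable sets meeting the clique parts $A\cup B$ (resp.\ $C\cup D$) and decoupling what remains into bipartite path/extension subproblems. The only places where the paper supplies justifications you merely gesture at are the concrete structural arguments that each complement class contains at most one flat path of length at least~3 (e.g.\ complements of line graphs of bipartite graphs contain no complement of a diamond, and complements of path-cobipartite graphs have a vertex of degree $n-3$), and the case split on whether $E$ is empty for complements of path-double split graphs; these are true and routine to fill in, so your outline matches the paper's argument.
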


\begin{proof}
  For parity-preserving extensions of bipartite graphs, the result
  follows from Lemma~\ref{l:staybip}.  Indeed, computing a maximum
  weighted stable set in a bipartite graph can be done in time ${\cal
    O} (n^3)$, see~\cite{schrijver}.
  
  Let $k$ be a constant integer and $\cal C$ a class of graphs for
  which there exits a polynomial time algorithm to compute maximum
  weighted stable sets.  Let ${\cal C}_k$ be the class of those graphs
  obtained from a graph in $\cal C$ by adding $k$ vertices and giving
  a mark to them.  Then there is a polynomial time algorithm for
  computing a maximum weighted stable set for a graph $G$ in ${\cal
    C}_k$.  It suffices to try every stable subset $S$  of the set
  of marked vertices, to delete all the marked vertices, to give
  weight zero to the neighbors of vertices of $S$, to run the
  algorithm for $\cal C$ in what remains and to denote by $A_S$ the
  stable set obtained.  Then compute $w(S\cup A_S)$.  Choose a stable
  set of maximum weight among the $N \leq 2^k$ stable sets so
  obtained.  Note that $2^k$ is a constant.

  This method works for parity-preserving extensions of complements of
  bipartite graphs, complements of line graphs of bipartite graphs and
  complements of path-cobipartite graphs.  Indeed, as we
  show next, a graph from any of these classes cannot contain two
  vertex-disjoint flat paths of length at least three.  So, at most one
  path is extended and by the remark above the desired algorithm
  relies on classical algorithms for maximum weighted stable set in
  complements of bipartite graphs, complements of line graphs of
  bipartite graphs, and bipartite graphs (note that maximum weighted
  stable set in a complement of a path-cobipartite graph corresponds
  to maximum weighted clique in path-cobipartite graph, and all
  maximal cliques of such graphs are either of size~2 or belong to the
  cobipartite graph obtained by removing vertices of degree~2). All this
  can be done in time ${\cal O} (n^3)$, see \cite{schrijver}.

  The only complement of a bipartite graph that contains a flat path
  of length at least~3 is $P_4$.  line graphs of bipartite graphs
  cannot contain diamonds, so complements of line graphs of bipartite
  graphs cannot contain complements of diamonds.  Hence they cannot
  contain two vertex-disjoint flat paths of length at least three.
  Now we deal with the complement $G$ of a path-cobipartite graph.
  Since we know how to handle complements of bipartite graphs, we may
  assume that the path $P$ form the definition is non-empty.  So, $G$
  contains a vertex $u$ of degree $|V(G)|-3$ (pick a vertex in $P$).
  So, $G$ cannot contain two disjoint flat paths of length at least~3
  because the interior vertices of such paths would contradict
  $\deg(u) = |V(G)|-3$.

  To compute a maximum weighted stable set in a parity-preserving
  extension $G$ of a path-cobipartite graphs $H$, apply the following
  method, where the notation $A, B, P$ is like in the definition of
  path-cobipartite graphs.  First observe that only vertices of $P$
  are replaced during extensions.  For all stable sets $S$ of $G[A
  \cup B]$ (and there are at most $|A|+|B|+|A||B|$ of them), consider
  the graph $G_S=G\setminus (A \cup B \cup N(S))$. Note that $G_S$ is
  a bipartite graph because it is an induced subgraph of a
  parity-preserving extension of a path.  So we can compute a maximum
  weighted stable set $T_S$ of $G_S$ in time ${\cal O} (n^3)$, see
  \cite{schrijver}.  Among all stable sets $S \cup T_S$ so constructed,
  choose one of maximum weight.  So all this can be done in time
  ${\cal O} (n^5)$.

  To compute a maximum weighted stable set in a parity-preserving
  extension $G$ of a path-double split graph $H$ apply the following
  method, where the notation $A, B, C, D, E, k, l$ is like in the
  definition.  First observe that only vertices of $E$ are replaced
  during the extension.  For all stable sets $S$ of $G[C \cup D]$ (and
  there are $3l+1$ of them, including $\emptyset$), consider the graph
  $G_S=G\sm (C \cup D \cup N(S))$.  So, $G_S$ is an induced subgraph
  of $G \sm (C \cup D)$ and has at most $k$ connected components that
  are paths or parity-preserving extensions of paths, and hence $G_S$
  is a bipartite graph.  So we can compute a maximum weighted stable
  set $T_S$ of $G_S$ in time ${\cal O} (n^3)$, see \cite{schrijver}.
  Among all stable sets $S \cup T_S$ so constructed, choose one of
  maximum weight.  So all this can be done in time ${\cal O} (n^4)$.

  To compute a maximum weighted stable set in a parity-preserving
  extension $G$ of the complement $\overline{H}$ of a path-double
  split graph $H$ there are two cases.  First case, the set $E$ is
  empty.  Then, $H$ is in fact a double split graph and so is
  $\overline{H}$.  So $G$ is a parity-preserving extension of a
  path-double split graph, and we already know how to proceed in this
  case.  Second case, the set $E$ is not empty.  Then, all vertices in
  $H$ have at least 3 non-neighbors, so in $\overline{H}$, no vertex
  has degree~2.  So, no path can be extended, $G=\overline{H}$ and to
  compute a maximum weighted stable set in $G$ it suffices to compute
  a maximum weighted clique in $H$.  We can do this by listing all
  cliques $K$ of $H[A \cup B \cup E]$ (including the empty set).  Note
  that there are only linearly many such cliques.  Let $H_K$ be the
  subgraph of $H$ induced by the set of all vertices $C \cup D$ that are
  adjacent to all of $K$. It is easy to compute a maximum weighted
  clique $T_K$ of $H_K$ (it suffices to choose for each pair
  $c_j,d_j$, $j=1, \ldots ,m$, the vertex with bigger weight). Among
  all cliques $K \cup T_K$ so constructed, choose the one of maximum
  weight.  So all this can be done in time ${\cal O} (n^4)$.
\end{proof}

\subsection{Extensions of line graphs}

Extensions of line graphs are more difficult to handle than other
extensions because an extension of a line graph may fail to be a
line graph and a line graph may contain arbitrarily many disjoint
long flat paths.  Note that in this subsection, extensions are \emph{not
required to be parity-preserving}.

Let $G'$ be a weighted graph that is an extension of a line graph
$G=L(R)$, see Figures~\ref{fig:RG} and \ref{fig:Gp}.  We now define
the \emph{transformation} $G''$ of $G'$, see Figure~\ref{fig:GppRpp}.
The structure of $G''$, i.e.\ its vertices and edges, depends only on
$G$ but the weights given to its vertices depend only on $G'$.  Let
$\cal M$ be the set of vertex-disjoint flat paths of length at least~3
of $G$ that are extended to get~$G'$.  So, ${\cal M} = \{P^1, \dots,
P^k\}$ and we put $P^i = p^i_1 \tp \cdots \tp p^i_{l_i}$.  For all
$1\leq i \leq k$, path $P^i$ of $G$ is replaced in $G'$ by a set $Q^i$ that
induces either a flat claw on vertices $q^i_1, q^i_2, q^i_3, q^i_4$ or
a flat vault on vertices $r^i_1, r^i_2, r^i_3, r^i_4, r^i_5, r^i_6$.
For all flat paths $P^i$ of $\cal M$, we put $A^i_2= N_{G}(p^i_1)\sm
\{p^i_2\}$, $B^i_2 = N_{G}(p^i_{l_i}) \sm \{p^i_{l_i - 1}\}$.

\begin{figure}
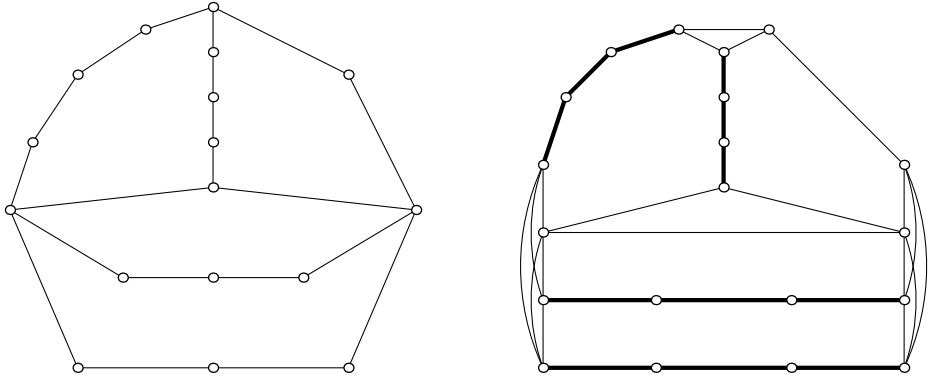

  \begin{center}
    \includegraphics{lineExt.1}\hspace{3em}
    \includegraphics{lineExt.2}
   \caption{$R$ and $L(R)$\label{fig:RG}. Paths of $L(R)$ to be
     extended are represented with bold edges.}
  \end{center}
\end{figure}

\begin{figure}
  \begin{center}
    \includegraphics{lineExt.3}
   \caption{$G'$\label{fig:Gp}}
  \end{center}
\end{figure}

\begin{figure}
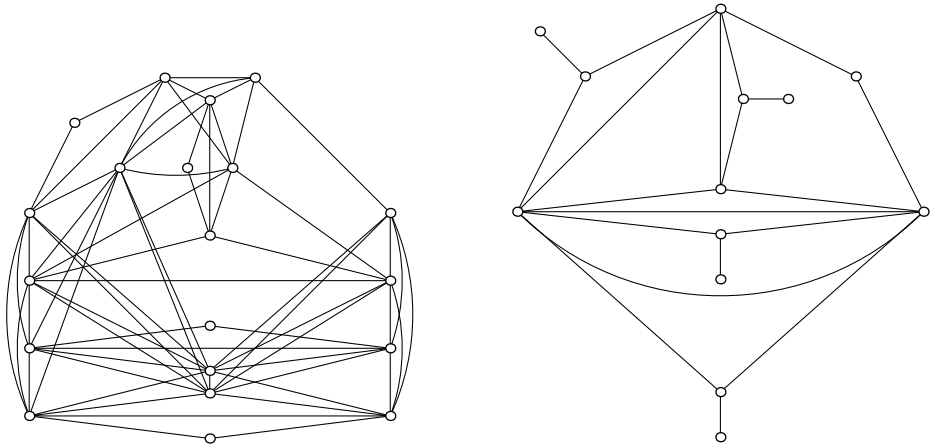

  \begin{center}
    \includegraphics{lineExt.4}\hspace{3em}
    \includegraphics{lineExt.5}
   \caption{$G''=L(R'')$ and $R''$\label{fig:GppRpp}}
  \end{center}
\end{figure}

For all $1\leq i \leq k$, we prepare a set $S^i$ of four new vertices
$p^i, p'^i, x^i, y^i$.  The graph $G''$ has vertex-set:
$$V(G'') = (S^1 \cup \cdots \cup S^k ) \cup V(G) \sm (P^1 \cup \cdots \cup P^k).$$

\noindent Edges of $G''$ depend only on edges of $G$.  They are:

\begin{itemize}
\item $p^ip'^i$, $x^ip^i$, $p^iy^i$, $y^ip'^i$, $p'^ix^i$, $i= 1,
  \dots, k$;
\item $uv$ for all $u, v \in V(G) \cap V(G'')$ such that $uv\in
  E(G)$;
\item $p^iu$ for all $u\in A^i_2 \cap V(G'')$, $i=1, \dots, k$;
\item $p'^iu$ for all $u\in B^i_2 \cap V(G'')$, $i=1, \dots, k$;
\item $x^iu$ for all $u\in (A^i_2 \cup B^i_2) \cap V(G'')$, $i=1,
  \dots, k$;
\item $p^ip^j$ for all $i, j$ such that $p^i_1p^j_1 \in E(G)$;
\item $p'^ip^j$ for all $i, j$ such that $p^i_{l_i}p^j_1 \in E(G)$;
\item $p'^ip'^j$ for all $i, j$ such that $p^i_{l_i}p^j_{l_j} \in
  E(G)$;
\item $x^ip^j$ for all $i, j$ such that $p^i_1p^j_1 \in E(G)$ or
  $p^i_{l_i}p^j_1 \in E(G)$;
\item $x^ip'^j$ for all $i, j$ such that $p^i_1p^j_{l_j} \in E(G)$ or
  $p^i_{l_i}p^j_{l_j} \in E(G)$;
\item $x^ix^j$ for all $i, j$ such that $p^i_1p^j_{1} \in E(G)$ or
  $p^i_{1}p^j_{l_j} \in E(G)$ or $p^i_{l_i}p^j_{1} \in E(G)$ or
  $p^i_{l_i}p^j_{l_j} \in E(G)$.
\end{itemize}

\noindent We define the following numbers that depend only on $G'$:

\begin{itemize}
\item $a^i = \alpha(G'[\{q^i_1, q^i_2, q^i_4\}])$ for all $i$ such
  that $Q^i$ is a flat claw of $G'$;
\item $a^i = \alpha(G'[\{r^i_1, r^i_3, r^i_4, r^i_5\}])$ for all
  $i$ such that $Q^i$ is a flat vault of $G'$;
\item $b^i = \alpha(G'[\{q^i_2, q^i_3, q^i_4\}])$ for all $i$ such
  that $Q^i$ is a flat claw of $G'$;
\item $b^i = \alpha(G'[\{r^i_2, r^i_3, r^i_4, r^i_6\}])$ for all
  $i$ such that $Q^i$ is a flat vault of $G'$;
\item $c^i = \alpha(G'[\{q^i_2, q^i_4\}])$ for all $i$ such that
  $Q^i$ is a flat claw of $G'$;
\item $c^i = \alpha(G'[\{r^i_3, r^i_4\}])$ for all $i$ such that
  $Q^i$ is a flat vault of $G'$;
\item $d^i = \alpha(G'[\{q^i_1, q^i_2, q^i_3, q^i_4\}])$ for all
  $i$ such that $Q^i$ is a flat claw of $G'$;
\item $d^i = \alpha(G'[\{r^i_1, r^i_2, r^i_3, r^i_4, r^i_5,
  r^i_6\}])$ for all $i$ such that $Q^i$ is a flat vault of $G'$.
\end{itemize}

\noindent Note that from the definitions, $c^i \leq a^i, b^i \leq d^i$
for all $i=1, \dots, k$.  We give the following weights to the
vertices of $G''$ (they depend on the weights in $G'$):

\begin{itemize}
\item $w_{G''}(u) = w_{G'}(u)$ for all $u\in V(G) \cap V(G'')$;
\item $w_{G''}(p^i) = a^i$, $i= 1, \dots, k$;
\item $w_{G''}(p'^i) =b^i$, $i= 1, \dots, k$;
\item $w_{G''}(y^i) = c^i$, $i= 1, \dots, k$;
\item $w_{G''}(x^i) = d^i-c^i$, $i= 1, \dots, k$.
\end{itemize}

A \emph{multigraph} is a graph where multiple edges between vertices
are allowed (but we do not allow loops). 

\begin{lemma}
  \label{l:L(multi)}
  $G''$ is the line graph of a multigraph.
\end{lemma}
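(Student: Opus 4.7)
The plan is to realize $G''$ as the line graph of a multigraph $R''$ obtained from $R$ by a local surgery performed at the image of each flat path $P^i$ (where $G = L(R)$).

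First, I would unpack the relationship between $G$ and the flat paths. Each $P^i = p^i_1 \tp \cdots \tp p^i_{l_i}$ corresponds in $R$ to a sequence of edges $e^i_1,\dots,e^i_{l_i}$ tracing a walk $u^i_0 \tp u^i_1 \tp \cdots \tp u^i_{l_i}$. Because the interior vertices $p^i_2,\dots,p^i_{l_i-1}$ have degree 2 in $G$, the internal vertices $u^i_1,\dots,u^i_{l_i-1}$ have degree 2 in $R$; in particular this walk is a simple path in $R$. Since $P^i$ has length at least $3$ and is induced, flatness forces $u^i_0 \neq u^i_{l_i}$, and there is no edge of $R$ joining $u^i_0$ to $u^i_{l_i}$ (such an edge would be a common neighbor of $e^i_1$ and $e^i_{l_i}$ outside the path). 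Moreover, because the paths of $\cal M$ are vertex-disjoint in $G$, the corresponding paths in $R$ share no internal vertex: if two such paths share a vertex it must be one of the four endpoints $u^i_0, u^i_{l_i}, u^j_0, u^j_{l_j}$, and it is easy to check that such an identification holds exactly when the corresponding end-vertices $p^i_1, p^i_{l_i}, p^j_1, p^j_{l_j}$ are adjacent in $G$.

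Next, I would construct $R''$ as follows: start from $R$, remove for each $i$ the internal vertices $u^i_1,\dots,u^i_{l_i-1}$ (and all incident edges, which are exactly $e^i_1,\dots,e^i_{l_i}$), introduce two new vertices $\alpha^i$ and $\beta^i$, and add the four new edges
\[
p^i = u^i_0\alpha^i,\quad p'^i = u^i_{l_i}\alpha^i,\quad y^i = \alpha^i\beta^i,\quad x^i = u^i_0 u^i_{l_i}.
\]
All other vertices and edges of $R$ are kept. No loop is created because $u^i_0 \neq u^i_{l_i}$, and multiple edges are not needed (though formally permitted). The vertex set $E(R'')$ of $L(R'')$ is then in a natural bijection with $V(G'')$: non-path edges of $R$ are identified with the vertices of $V(G)\cap V(G'')$, and the four edges $p^i,p'^i,y^i,x^i$ are identified with the four vertices of $S^i$.

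Finally, I would verify the edges of $L(R'')$ by a straightforward case analysis: (a) for two non-path edges $u,v$, the shared endpoints in $R''$ are the same as in $R$, so adjacency in $L(R'')$ coincides with adjacency in $G$, hence with adjacency in $G''$; (b) for a non-path edge $u$ versus $p^i$, $p'^i$, $x^i$, or $y^i$, the only candidate common vertex is $u^i_0$, $u^i_{l_i}$, or nothing (since $\alpha^i,\beta^i$ are new), giving exactly the conditions $u\in A^i_2$, $u\in B^i_2$, $u\in A^i_2\cup B^i_2$, or no adjacency, as prescribed; (c) within $S^i$, the four edges form the desired diamond (the only missing adjacency is $x^iy^i$, corresponding to $x^i=u^i_0u^i_{l_i}$ and $y^i=\alpha^i\beta^i$ sharing no endpoint); (d) between $S^i$ and $S^j$, the common-vertex test in $R''$ translates, via the endpoint identification discussion above, into precisely the listed conditions $p^i_1p^j_1\in E(G)$, $p^i_{l_i}p^j_1\in E(G)$, etc. I do not expect any step to present a real obstacle; the main care is in case (d), where one must keep track of which of the four endpoints $u^i_0, u^i_{l_i}, u^j_0, u^j_{l_j}$ is being shared and confirm that the adjacency rule matches the enumeration given in the definition of $G''$.
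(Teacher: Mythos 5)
Your construction of $R''$ coincides with the paper's (its $u^i,v^i$ are your $\alpha^i,\beta^i$, with the same four new edges playing the roles of $x^i,p^i,p'^i,y^i$), and your endpoint analysis and case check are correct, so this is essentially the same proof. The only slip is the parenthetical claim that multiple edges are not actually needed: when two paths $R^i$ and $R^j$ share both endpoints in $R$ the edges $x^i$ and $x^j$ become parallel, which is precisely why the paper insists on a multigraph --- but since your construction formally permits this, the argument stands.
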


\begin{proof}
  Path $P^i$ of $G$ corresponds to a path $R^i = r^i_1 \tp \cdots \tp
  r^i_{l_i+1}$ of $R$.  For all $i=1, \dots, k$, path $R^i$ is induced
  and has interior vertices of degree 2 in $R$ because $P^i$ is a flat
  path of $G$.  Since paths of $\cal M$ are vertex-disjoints, paths
  $R^1, \dots, R^k$ are edge-disjoint (but they may share
  end-vertices).  Now let us build a multigraph $R''$ from $R$, see
  Figure~\ref{fig:GppRpp}.  We delete the interior vertices of all
  $R^i$'s.  For each $R^i$, we add two vertices $u^i$, $v^i$ and the
  edges $r^i_1r^i_{l_i+1}$, $u^ir^i_1$, $u^ir^i_{l_i+1}$ and $u^iv^i$.

  It is a routine matter to check that $L(R'')$ is isomorphic to
  $G''$.  Edge $r^i_1r^i_{l_i+1}$ corresponds to vertex $x^i$, edge
  $u^ir^i_1$ corresponds to vertex $p^i$, edge $u^ir^i_{l_i+1}$
  corresponds to $p'^i$ and edge $u^iv^i$ corresponds to vertex $y^i$.
  Note that possibly, two paths $R^i$ and $R^j$ have the same ends.
  For instance $r^i_1=r^j_1$ and $r^i_{l_i+1} = r^j_{l_j+1}$ is
  possible.  Then, the edge $r^i_1r^i_{l_i+1}$ is added twice.  This
  is why we need $R''$ to be a multigraph.
\end{proof}

\begin{lemma}
  \label{l:Linealpha}
  $\alpha(G'') = \alpha(G')$.
\end{lemma}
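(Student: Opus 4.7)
The plan is to prove $\alpha(G') = \alpha(G'')$ by interpolating, one gadget at a time: set $G^{(0)} = G'$ and, for $i \geq 1$, obtain $G^{(i)}$ from $G^{(i-1)}$ by removing $Q^i$ and inserting $S^i$ in its place, so that $G^{(k)} = G''$. The identification of external adjacencies is the natural one: both $q^i_1$ on one hand and the pair $p^i, x^i$ on the other are made adjacent to the same set $M^A_i$ of ``$A$-side'' external vertices (namely, $A^i_2 \cap V(G'')$ together with the stand-ins in the other gadgets for $G$-neighbors of $p^i_1$ on other extended paths), and similarly $q^i_3$ and $p'^i, x^i$ to the same set $M^B_i$. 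The aim is to show $\alpha(G^{(i-1)}) = \alpha(G^{(i)})$ for every $i$, which iterated gives the result.

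Fix $i$, let $H = G^{(i-1)} \setminus Q^i = G^{(i)} \setminus S^i$ (the same weighted graph in both cases), and for any stable set $T$ of $H$ set $\epsilon(T) := \mathbf{1}[T \cap M^A_i \neq \emptyset]$ and $\delta(T) := \mathbf{1}[T \cap M^B_i \neq \emptyset]$. The core claim to establish is that the maximum weight of a stable set $\sigma$ of the attached gadget with $T \cup \sigma$ stable in the ambient graph equals $f(\epsilon(T), \delta(T))$, where $f(0,0) = d^i$, $f(0,1) = a^i$, $f(1,0) = b^i$, and $f(1,1) = c^i$, \emph{irrespective of which of $Q^i, S^i$ is attached}. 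Granting this, $\alpha(G^{(i-1)}) = \max_T [w(T) + f(\epsilon(T), \delta(T))] = \alpha(G^{(i)})$, and induction on $i$ finishes the proof.

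For the gadget $Q^i$ (flat claw or flat vault), $T$-compatibility exactly forbids $q^i_1$ (resp.\ both $r^i_1$ and $r^i_5$) when $\epsilon=1$ and $q^i_3$ (resp.\ both $r^i_2$ and $r^i_6$) when $\delta=1$, since these are the only gadget vertices with external neighbors; the claimed values of $f$ then match the definitions of $a^i, b^i, c^i, d^i$ as the $\alpha$'s of the corresponding induced subgraphs of $Q^i$. For $S^i$, compatibility forbids $\{p^i, x^i\}$ when $\epsilon=1$ and $\{p'^i, x^i\}$ when $\delta=1$; noting that $S^i$ induces $K_4$ minus the single non-edge $x^iy^i$, and that the weights $w(p^i)=a^i$, $w(p'^i)=b^i$, $w(y^i)=c^i$, $w(x^i)=d^i-c^i$ are non-negative by the trivial inequalities $c^i \leq a^i, b^i \leq d^i$, a four-case check shows that the maximum is realized by $\{x^i, y^i\}$, $\{p^i\}$, $\{p'^i\}$, and $\{y^i\}$ respectively, yielding values $d^i, a^i, b^i, c^i$.

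The main obstacle in the plan is the bookkeeping required to justify that $M^A_i$ and $M^B_i$ coincide as subsets of $V(H)$ in the two ambient graphs, since $V(H)$ contains vertices of the already-replaced $S^j$ (for $j<i$) and the not-yet-replaced $Q^j$ (for $j>i$). This reduces to checking that the explicit edge list defining $G''$ records, in the adjacencies of $p^j, x^j$ (as stand-ins for $p^j_1$) and of $p'^j, x^j$ (as stand-ins for $p^j_{l_j}$), exactly the same $G$-adjacencies between path endpoints that are recorded by the corresponding adjacencies of $q^j_1, q^j_3$ in $G'$---a straightforward verification from the definitions.
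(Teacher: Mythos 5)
Your proof is correct and rests on the same core computation as the paper's: for each gadget, the optimal contribution of a compatible stable set depends only on whether the rest of the stable set meets the $A$-side and/or $B$-side attachments, and the four resulting values ($d^i$, $a^i$, $b^i$, $c^i$) coincide for $Q^i$ and $S^i$. The only difference is presentational --- you swap gadgets one at a time and induct, whereas the paper converts a maximum weighted stable set of $G'$ into one of $G''$ (and back) in a single pass via the case analyses of Lemmas~\ref{l:4cClaw} and~\ref{l:4cVault} --- so this is essentially the paper's argument.
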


\begin{proof}
  Let $S'$ be a stable set of maximum weight in $G'$.  Let us build a
  stable set $S''$ of $G''$ of same weight.  In $S''$, we keep all
  vertices of $S' \cap V(G'')$.  For all~$i$ such that $Q^i$ is a flat
  claw of $G'$, $S'$ satisfies one of \ref{i:e1}, \ref{i:e2},
  \ref{i:e3} or \ref{i:e4} of Lemma~\ref{l:4cClaw}.  So, respective to
  these cases we put one of $\{p^i\}$, $\{p'^i\}$, $\{y^i\}$ or
  $\{x^i, y^i\}$ in $S''$.  For all~$i$ such that $Q^i$ is a flat
  vault of $G'$, $S'$ satisfies one of \ref{i:o1}, \ref{i:o2},
  \ref{i:o3} or \ref{i:o4} of Lemma~\ref{l:4cVault}.  So, respective to
  these cases we put one of $\{p^i\}$, $\{p'^i\}$, $\{y^i\}$ or
  $\{x^i, y^i\}$ in $S''$.  This yields a stable set of $G''$ of the same
  weight as $S'$. 

  Conversely, if $S''$ is a stable set of $G''$ of maximum weight then
  we may assume for all~$i$, $S''\cap \{p^i, p'^i, x^i, y^i\}$ is one
  of $\{p^i\}$, $\{p'^i\}$, $\{y^i\}$ or $\{x^i, y^i\}$.  The only
  exception could be when $w_{G''}(y^i) = 0$ and $S''\cap \{p^i, p'^i,
  x^i, y^i\} = \emptyset$ or $\{x^i\}$, but then we add $y^i$ to
  $S''$.  If $Q^i$ is a flat claw, respective to these cases, we put
  in $S'$ a maximum weighted stable set of one of $G'[\{q^i_1, q^i_2,
  q^i_4\}]$, $G'[\{q^i_2, q^i_3, q^i_4\}]$, $G'[\{q^i_2, q^i_4\}]$ or
  $G'[\{q^i_1, q^i_2, q^i_3, q^i_4\}]$.  If $Q_i$ is a flat vault,
  respective to these cases, we put in $S'$ a maximum weighted stable
  set of one of $G'[\{r^i_1, r^i_3, r^i_4, r^i_5\}]$, $G'[\{r^i_2,
  r^i_3, r^i_4, r^i_6\}]$, $G'[\{r^i_3, r^i_4\}]$ or $G'[\{r^i_1,
  r^i_2, r^i_3, r^i_4, r^i_5, r^i_6\}]$.  This yields a stable set of
  $G'$ of the same weight as~$S''$.
\end{proof}

\begin{lemma}
  \label{l:optLineGraphs}
  There is an algorithm with the following specification:
  \begin{description}
  \item[Input: ] A weighted graph $G'$ that is an extension of a
    line graph $G$. 
  \item[Output: ] A maximum weighted stable set of $G'$.
  \item[Running time: ] ${\cal O} (n^3)$
  \end{description}
\end{lemma}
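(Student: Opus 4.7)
The plan is to reduce the problem to maximum weighted matching in a multigraph, via the transformation $G''$ constructed before the lemma. I would proceed as follows.

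First, I would recover from $G'$ the data describing how it was extended. The marks placed on vertices during extension (see Subsection on \emph{Recognition of extensions of basic graphs}) let the algorithm identify in linear time the collection $\{Q^1, \ldots, Q^k\}$ of extended sets and, for each $i$, decide whether $Q^i$ is a flat claw or a flat vault. Deleting the $Q^i$'s and reinserting the flat paths $P^i = p^i_1 \tp \cdots \tp p^i_{l_i}$ of the prescribed length $l_i$ (encoded in the mark) reconstructs the underlying line graph $G$. We then recognize $G$ as a line graph of a (multi)graph $R$ by any classical linear-time algorithm. At this point we have all the data needed to construct the multigraph $R''$ described in the proof of Lemma~\ref{l:L(multi)}: delete the interior vertices of each $R^i$ (the path in $R$ corresponding to $P^i$) and add, for each $i$, the two new vertices $u^i, v^i$ together with the three edges $u^ir^i_1$, $u^ir^i_{l_i+1}$, $u^iv^i$ and the parallel edge $r^i_1 r^i_{l_i+1}$.

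Next I would compute the weights $a^i, b^i, c^i, d^i$ required for the weighted transformation $G''$. Each of these is $\alpha$ of an induced subgraph of $G'$ of constant size ($\leq 6$), so all $a^i, b^i, c^i, d^i$ can be computed by brute force in total time $\mathcal{O}(n)$. Assign the weights to the vertices $p^i, p'^i, x^i, y^i$ of $G''$ as specified, and keep the weights of $V(G) \cap V(G'')$ unchanged. By Lemma~\ref{l:L(multi)}, the resulting $G''$ is the line graph of the multigraph $R''$, and by Lemma~\ref{l:Linealpha} we have $\alpha(G'') = \alpha(G')$.

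Now compute a maximum weighted stable set $S''$ of $G'' = L(R'')$. A stable set of a line graph is precisely a matching of the root graph, so this amounts to running a maximum weighted matching algorithm on $R''$. Edmonds' algorithm (or its refinements, e.g.\ Gabow's) solves this in time $\mathcal{O}(n^3)$, which dominates the overall complexity. Translate $S''$ back into a stable set $S'$ of $G'$ exactly as in the proof of Lemma~\ref{l:Linealpha}: the vertices of $S'' \cap V(G)$ are kept, and for each $i$ we look at which of the four cases $\{p^i\}, \{p'^i\}, \{y^i\}, \{x^i, y^i\}$ the intersection $S'' \cap \{p^i, p'^i, x^i, y^i\}$ falls into (swapping $\emptyset$ or $\{x^i\}$ for $\{y^i\}$ or $\{x^i, y^i\}$ when $w_{G''}(y^i) = 0$, using that such a change does not decrease $w(S'')$) and place in $S'$ a maximum weighted stable set of the corresponding constant-size induced subgraph of $G'[Q^i]$, which is again precomputed in $\mathcal{O}(1)$.

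The only real obstacle is showing that the whole procedure fits in $\mathcal{O}(n^3)$; since the reconstruction of $R$ and $R''$, the weight computations, and the final translation are all near-linear, the bottleneck is maximum weighted matching in $R''$, which is $\mathcal{O}(n^3)$, matching the claimed bound.
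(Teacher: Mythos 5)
Your proposal is correct and follows essentially the same route as the paper: build the weighted transformation $G''$, use Lemma~\ref{l:L(multi)} and Lemma~\ref{l:Linealpha}, solve maximum weighted matching in the root multigraph by Edmonds' algorithm, and translate back via the proof of Lemma~\ref{l:Linealpha}. The only (harmless) difference is that you construct the root multigraph $R''$ explicitly from the recovered extension data, whereas the paper obtains it by running a line-graph recognition algorithm on $G''$ itself.
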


\begin{proof}
  Build the transformation $G''$ of $G'$ as explained above.  So, by
  Lemma~\ref{l:L(multi)}, $G''$ is the line graph of a multigraph.
  Compute a multigraph $R$ such that $G''=L(R)$
  (see~\cite{lehot:root,roussopoulos:linegraphe}), then compute in $R$
  a matching of maximum weight by Edmonds' algorithm
  (see~\cite{schrijver,{edmonds:ptf}}).  It corresponds to a maximum
  weighted stable set in $G''$.  By Lemma~\ref{l:Linealpha}, this
  maximum weighted stable set has the same weight as a maximum weighted
  stable set $S'$ of $G'$.  Note that the proof of
  Lemma~\ref{l:Linealpha} shows how to actually obtain $S'$.
\end{proof}

\subsection{Extensions of basic even-hole-free graphs}

Here again, extensions are \emph{not required to be
  parity-preserving}.

\begin{lemma}
  \label{l:optEhfBasic}
  There is an algorithm with the following specification:
  \begin{description}
  \item[Input: ] A weighted graph $G$ that is an extension of a graph
    from $\C{ehf}{basic}$.
  \item[Output: ] A maximum weighted stable set and a maximum weighted
    clique of $G$.
  \item[Running time: ] ${\cal O} (n^4)$
  \end{description}
\end{lemma}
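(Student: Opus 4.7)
Let $G$ be an extension of some $H \in \C{ehf}{basic}$, so $H = L(T) \cup V'$ with $T$ a tree and $|V'| \leq 2$. Let ${\cal M}$ be the flat paths of $H$ that have been extended, and for each $P^i \in {\cal M}$ let $Q^i \subseteq V(G)$ be the flat claw or flat vault that replaced $P^i$. The plan is to reduce both problems to subroutines from earlier in the paper.

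For the maximum weighted stable set, the plan is to enumerate the intersection $S' = S \cap V'$; since $|V'| \leq 2$, there are at most four stable subsets $S'$ to consider. For each, the problem reduces to computing a maximum weighted stable set in the induced subgraph $G' = G \setminus (V' \cup N_G(S'))$. When $S' = \emptyset$, $G' = G \setminus V'$ is an extension of the line graph $L(T)$, so Lemma~\ref{l:optLineGraphs} applies directly in ${\cal O}(n^3)$. When $S' \neq \emptyset$, removing $N_G(S')$ may delete some endpoint representatives inside an extended path, destroying the flat-claw/vault structure locally; I would adapt the transformation from the proof of Lemma~\ref{l:optLineGraphs} by recomputing the local quantities $a^i, b^i, c^i, d^i$ over the surviving vertices of the affected $Q^i$, so that the problem again reduces to a maximum weighted matching in a multigraph. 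Returning the best $w(S') + \alpha(G')$ gives the answer in total time ${\cal O}(n^3)$.

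For the maximum weighted clique, the key structural observation is that within a flat claw or a flat vault the clique number is at most $2$, and only the endpoint representatives ($q^i_1, q^i_3$ in a claw; $r^i_1, r^i_2, r^i_5, r^i_6$ in a vault) have neighbors outside $Q^i$, inheriting the external neighborhood of $p^i_1$ or $p^i_k$ in $H$. Consequently every clique of $G$ either lies entirely in some $Q^i$ (hence has size $\leq 2$ and is found by direct enumeration in ${\cal O}(n)$), or uses at most one endpoint representative per extension. For the latter type I would build an auxiliary weighted graph $\hat H$ induced on $(V(G) \cap V(H)) \cup \bigcup_i \{p^i_1, p^i_k\}$ with adjacencies from $H$, and weights $w^*(v) = w(v)$ on old vertices, $w^*(p^i_1) = w(q^i_1)$ for a claw extension and $w^*(p^i_1) = \max(w(r^i_1), w(r^i_5))$ for a vault extension (analogously for $p^i_k$). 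Since $\hat H$ is an induced subgraph of $H$, it is $4$-hole-free, so Farber's theorem yields ${\cal O}(n^2)$ maximal cliques listable in ${\cal O}(n^4)$, and each lifts to a clique of $G$ of weight $w^*$ by picking the representative achieving its weight.

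The main subtlety is verifying the clique correspondence: one direction uses that representatives of distinct endpoints $p^i_1, p^j_1$ are adjacent in $G$ iff $p^i_1 \sim p^j_1$ in $H$ (from the definition of extensions), and the other direction uses that distinct representatives of the same endpoint (for instance $r^i_1$ and $r^i_5$ in a vault) are pairwise non-adjacent, so at most one can appear in any clique of $G$. Once these correspondences are justified, the maximum $w^*$-weighted clique in $\hat H$ equals the best clique of $G$ of the second type, and the overall optimum is the maximum over this value and the within-$Q^i$ enumerations.
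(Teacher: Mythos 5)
There is a genuine gap in your stable set argument: you implicitly assume that the set $V'$ of at most two exceptional vertices of $H$ survives into $G$, i.e.\ that the flat paths of ${\cal M}$ that were extended avoid $V'$. Nothing guarantees this. A vertex $v\in V'$ may be an interior vertex (or an end) of some $P^i\in{\cal M}$; in that case $v$ has been replaced by a flat claw or vault and is not a vertex of $G$ at all, so ``enumerate $S\cap V'$'' is ill-defined. Worse, even for the branch $S'=\emptyset$ your claim that $G\setminus V'$ is an extension of the line graph $L(T)$ fails: the claw or vault $Q^i$ replacing a path $P^i$ that passes through $v$ does not sit over a flat path of $H\setminus V'$, so the transformation of Lemma~\ref{l:optLineGraphs} does not apply to $G\setminus(V'\cap V(G))$. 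The paper's proof is built precisely around this point: it takes $A$ with $|A|\le 2$ and $H\setminus A$ a line graph, lets ${\cal M}'$ be the (at most two) paths of ${\cal M}$ meeting $A$, and enumerates stable subsets of a set $B$ of at most $12$ vertices consisting of $A\cap V(G)$ together with the \emph{entire} claws/vaults replacing the paths of ${\cal M}'$; what remains is then genuinely an extension of $(H\setminus A,\,{\cal M}\setminus{\cal M}')$, hence of a line graph. Your fix of ``recomputing $a^i,b^i,c^i,d^i$ over surviving vertices'' addresses only the milder issue of $N_G(S')$ touching a gadget (which is more cleanly handled by zeroing weights rather than deleting vertices, so that Lemma~\ref{l:optLineGraphs} applies verbatim); it does not repair the structural failure above.

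Your clique argument, by contrast, is sound and is a genuinely different route from the paper's. The paper simply observes that $G$ has few inclusion-wise maximal cliques and lists them; you instead classify cliques of $G$ as either contained in a single gadget (size at most $2$, enumerable directly) or meeting each gadget in at most one endpoint representative, and you reduce the latter to a maximum $w^*$-weighted clique in an induced subgraph $\hat H$ of the $4$-hole-free graph $H$, via Farber. The adjacency verifications you flag (representatives of distinct endpoints inherit exactly the external neighbourhoods; representatives of the same endpoint, and of the two ends of one path, are pairwise non-adjacent or, in the case $r^i_5 r^i_6$, have no common external neighbour) all check out, so this part is correct and arguably more explicit than the paper's one-line justification.
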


\begin{proof}
  Let $H$ be a graph in $\C{ehf}{basic}$ and ${\cal M}$ a set of
  vertex-disjoint flat paths of $H$ of length at least 3, such that
  the input graph $G$ is an extension of $(H, {\cal M})$.  Let $A$ be
  a set of vertices of $H$ such that $|A| \leq 2$ and $H \setminus A$
  is a line graph (it takes time ${\cal O}(n^4)$ to find $A$, by checking for
  all possible pairs whether their removal yields a line graph).  Let
  ${\cal M}'$ be the set of paths of ${\cal M}$ that contain some
  vertex of $A$.  Note that $|{\cal M}'| \leq 2$.

  Since $|A| \leq 2$ and $|{\cal M}'| \leq 2$, there is a set of
  vertices $B$ such that $|B| \leq 12$ and $G$ is obtained from $G'$
  by adding vertices of $B$ where $G'$ is an extension of $(H
  \setminus A, {\cal M}\setminus {\cal M}')$.  By Lemma
  \ref{l:optLineGraphs}, a maximum weighted stable set of $G'$ can be
  computed in time ${\cal O} (n^3)$.  To compute a maximum weighted
  stable set of $G$ it suffices to try every stable set $S$ of $B$, to
  delete all vertices of $S\sm B$, to give weight zero to neighbors of
  vertices of $S$, and to compute the maximum weighted stable set of
  the remaining graph (which is as we noted an extension of a
  line graph).  Choose a maximum weighted stable set so
  obtained. Clearly all this can be done in time ${\cal O} (n^3)$.

  To compute a maximum weighted clique, it suffices to notice that $G$
  contains linearly many inclusion-wise maximal cliques.  So, it
  suffices to list them and to choose one of maximum weight. 
\end{proof}

It is convenient to sum up all the results of the section:

\begin{lemma}
  \label{l:optBasic}
  There is an algorithm with the following specification:
  \begin{description}
  \item[Input: ] A weighted graph $G$ that is a parity preserving
    extension of a graph from $\C{Berge}{basic}$ or any extension of a
    graph from $\C{ehf}{basic}$.
  \item[Output: ] A maximum weighted stable set and a maximum weighted
    clique of $G$.
  \item[Running time: ] ${\cal O} (n^5)$
  \end{description}
\end{lemma}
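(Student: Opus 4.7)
The plan is to dispatch on the basic class that the input graph $G$ extends. Using the recognition procedures of the preceding subsection, together with the marks carried by extension vertices, we identify in time $\mathcal{O}(n^2m)$ the basic class that $G$ extends, recover the canonical structure of the base graph $H$, and recover the set $\mathcal{M}$ of flat paths that were extended. Once this preprocessing is done, the maximum weighted stable set of $G$ is computed by invoking exactly one of the three previous lemmas: Lemma~\ref{l:optEhfBasic} if $H \in \C{ehf}{basic}$; Lemma~\ref{l:optLineGraphs} if $H$ is a line graph of a bipartite graph and the extension is parity-preserving; and Lemma~\ref{l:optBergeBasic} for all other parity-preserving extensions of basic Berge graphs.

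For the maximum weighted clique the $\C{ehf}{basic}$ case is already settled by Lemma~\ref{l:optEhfBasic}, so I focus on a parity-preserving extension of a basic Berge graph $H$, handled by case analysis on the type of $H$. If $H$ is bipartite, then by Lemma~\ref{l:staybip} $G$ is bipartite too, so $\omega(G) \leq 2$ and a maximum weighted clique is found by scanning edges. If $H$ is a line graph of a bipartite multigraph $R$, then maximal cliques of size at least three correspond to stars in $R$, and the replacement of flat paths by flat claws or flat vaults does not create larger cliques (both structures have clique number $2$); a direct enumeration over stars, amended to account for added extension vertices, yields the answer. For $H$ path-cobipartite or path-double split, every clique of size at least three lies in one of the two large cliques forming the cobipartite part, possibly together with one extension vertex attached to it, and the maximum weighted clique is computed by enumerating over the polynomially many such augmentations. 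For $H$ in any of the four complement classes (complement of bipartite, of line graph of bipartite, of path-cobipartite, of path-double split), the proof of Lemma~\ref{l:optBergeBasic} already establishes that only $\mathcal{O}(1)$ vertex-disjoint flat paths of length at least $3$ exist, hence the extension affects only a bounded number of vertices; a brute-force enumeration over their interaction with a maximum clique reduces the problem to a maximum weighted clique in the base graph $H$, which is a classical polynomial problem in each of these perfect classes.

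The main obstacle is that the extension operation does not commute with graph complementation, so one cannot simply compute maximum weighted stable sets of $\overline{G}$ to obtain maximum weighted cliques of $G$. This is circumvented by the bounded-complexity observation above: in each relevant basic Berge class, either $G$ retains a trivial clique structure, or at most a constant number of vertices are introduced by extensions, allowing brute-force enumeration. Summing the running times over all cases, the overall complexity is dominated by the $\mathcal{O}(n^5)$ cost of Lemma~\ref{l:optBergeBasic}.
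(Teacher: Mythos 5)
Your proposal is correct, and for the stable set half it is exactly the paper's proof: dispatch on the basic class and invoke Lemmas~\ref{l:optBergeBasic}, \ref{l:optLineGraphs} and~\ref{l:optEhfBasic}. For the clique half in the Berge case you diverge: the paper disposes of it in one line by observing that $\C{Berge}{basic}$ is self-complementary, so a maximum weighted clique is obtained by running the stable set algorithm on the complement; you instead carry out a direct enumeration of maximal cliques class by class. Your stated reason for avoiding the paper's shortcut --- that extension does not commute with complementation, so $\overline{G}$ need not be an extension of a basic graph when $G$ is a proper extension --- is a legitimate observation: the paper's argument, read literally, only covers the case where $G$ is itself basic (i.e.\ the extension is trivial), which happens to be the only case needed downstream since the clique-decomposition tree of Lemma~\ref{l:Clique} has genuinely basic leaves. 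So your route is heavier but actually proves the lemma as stated, whereas the paper's proves what is used. One caveat on your case analysis: in a path-cobipartite graph there may be edges between the two cliques $A$ and $B$, and in a path-double split graph a maximum clique may consist of a clique of $C\cup D$ together with one vertex of $A\cup B$; so it is not true that every clique of size at least three ``lies in one of the two large cliques, possibly with one extension vertex.'' This does not break your method --- these graphs still have polynomially many maximal cliques and the enumeration goes through (compute a maximum weighted clique of the cobipartite part via its bipartite complement, and separately enumerate the bounded augmentations) --- but the characterization should be corrected before the enumeration is claimed to be exhaustive.
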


\begin{proof} 
  Follows from Lemmas~\ref{l:optBergeBasic}, \ref{l:optLineGraphs}
  and~\ref{l:optEhfBasic} for computing stables sets.  For cliques, it
  is done in Lemma~\ref{l:optEhfBasic} for graphs from
  $\C{ehf}{basic}$.  For $\C{Berge}{basic}$, it suffices to notice that
  the class is self-complementary, so we may rely on the algorithm for
  stable sets. 
\end{proof}

\section{Constructing the decomposition tree}
\label{sec:tree}

We now give algorithms to construct several decomposition trees for
graphs in our classes.  First we show how to build a decomposition
tree with the usual parity preserving blocks (as defined in
Section~\ref{sec:blocks}).  Then we show how to reprocess such a tree
to get a tree with clique-blocks, even blocks or odd blocks according
to what we need to optimize.

\subsection{Tree with parity preserving blocks}

We define now a decomposition tree $T_G$ of a graph $G \in {\cal D}$,
where ${\cal D}$ is one of $\C{Berge}{no cutset}$, $\C{ehf}{no sc}$.
We call ${\cal D}_{\text{\sc basic}}$ the class of all basic graphs
associated to the class (so, if ${\cal D} = \C{Berge}{no cutset}$ then
${\cal D}_{\text{\sc basic}} = \C{Berge}{basic}$ and if ${\cal D} =
\C{ehf}{no sc}$ then ${\cal D}_{\text{\sc basic}} = \C{ehf}{basic}$).

We decompose a graph $G\in {\cal D}$ using extreme 2-joins into basic
graphs.  Let us now define more precisely what we call decomposition
tree (proving its existence and constructing it will be done later).
\vspace{2ex}

\noindent {\em Decomposition tree $T_G$ of depth $p\geq 1$ of a graph
  $G \in {\cal D}$ that has a connected non-path 2-join.}

\begin{enumerate}
\item The root of $T_G$ is $(G^0, {\cal M}^0)$, where $G^0=G$ and
  ${\cal M}^0=\emptyset$.  

\item Each node of the decomposition tree is a pair $(H, {\cal M})$
  where $H$ is a graph of $\cal D$ and ${\cal M}$ is a set of disjoint
  flat paths of length~3 or~4 of $H$.

  The non-leaf nodes of $T_G$ are pairs $(G^0, {\cal M}^0), \ldots,
  (G^{p-1}, {\cal M}^{p-1})$.  Each non-leaf node $(G^i, {\cal M}^i)$
  has two children.  One is $(G^{i+1}, {\cal M}^{i+1})$, the other one
  is $(G^{i+1}_B, {\cal M}^{i+1}_B)$.
 
  The leaf-nodes of $T_G$ are the pairs $(G^1_B, {\cal M}^1_B)$,
  \ldots, $(G^{p}_B, {\cal M}^{p}_B)$ and $(G^{p}, {\cal M}^{p})$.
  Graphs $G^1_B, \ldots ,G^{p}_B$ all belong to ${\cal D}_{\text{\sc
      basic}}$.

\item For $i=0, \ldots, p-1$, $G^i$ has a connected non-path 2-join
  $(X_1^i,X_2^i)$ that is extreme with extreme side $X_1^i$ and that
  is ${\cal M}^i$-independent.   Graphs $G^{i+1}$ and $G^{i+1}_B$ are
  the parity preserving blocks of $G^i$ w.r.t.\ $(X^i_1, X^i_2)$  (as
  defined in Section~\ref{sec:blocks}), whose marker paths are of
  length 3 or~4.  The block $G^{i+1}_B$ corresponds to the extreme
  side $X^i_1$, i.e.\ $X^i_1 \subseteq V(G^{i+1}_B)$. 
 
  Set ${\cal M}^{i+1}_B$ consists of paths from ${\cal M}^i$ 
  whose vertices are in $X^i_1$.  Note that the marker path used to
  construct the block $G^{i+1}_B$ does not belong to ${\cal
    M}^{i+1}_B$.

  Set ${\cal M}^{i+1}$ consists of paths from ${\cal M}^i$ whose
  vertices are in $X^i_2$ together with the marker path $P^{i+1}$
  used to build $G^{i+1}$.

\item ${\cal M}^1_B \cup \ldots \cup {\cal M}^{p}_B \cup {\cal M}^p$
  is the set of all marker paths used in the construction of the nodes
  $G^1, \dots, G^{p}$ of $T_G$, and sets ${\cal M}^1_B , \ldots ,
  {\cal M}^{p}_B, {\cal M}^p$ are pairwise disjoint.
\end{enumerate}

Node $(G^{p}, {\cal M}^{p})$ is a leaf of $T_G$ and is called the
\emph{deepest node} of $T_G$.  Note that all leaves of $T_G$ except
possibly the deepest node are basic.

\begin{lemma}
  \label{l:depth}
  For any decomposition tree $T_G$, the depth of $T_G$ is at most~$n$. 
\end{lemma}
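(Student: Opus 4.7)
The plan is to exhibit a strictly decreasing monovariant along the chain of non-leaf nodes $G^0, G^1, \dots, G^{p-1}$, starting from a value at most $n$. I take
\[
\phi(G^i) \;=\; |V(G^i) \cap V(G)|,
\]
the count of ``original'' vertices of $G$ still present at depth $i$. Clearly $\phi(G^0)=n$ and $\phi(G^i)\geq 0$, so a strict-decrease statement $\phi(G^{i+1})<\phi(G^i)$ for every $0\leq i<p$ yields $p\leq n$.

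First I would check that $\phi(G^{i+1})=\phi(G^i)-|X^i_1\cap V(G)|$. This is immediate from the construction: $G^{i+1}$ is obtained from $G^i$ by deleting all of $X^i_1$ and inserting a marker path of length 3 or 4 whose vertices are fresh (hence not in $V(G)$). A short induction on $i$ also gives the orthogonal decomposition $V(G^i)=(V(G^i)\cap V(G))\sqcup M_i$, where $M_i=\bigsqcup_{P\in {\cal M}^i}V(P)$. So the entire lemma reduces to the key claim $|X^i_1\cap V(G)|\geq 1$ for each $i<p$.

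To establish the key claim, suppose for contradiction that $X^i_1\cap V(G)=\emptyset$; then $X^i_1\subseteq M_i$ and the ${\cal M}^i$-independence of $(X^i_1,X^i_2)$ forces $X^i_1=\bigsqcup_{P\in {\cal S}}V(P)$ for some nonempty ${\cal S}\subseteq {\cal M}^i$. Since every marker path is a flat path of $G^i$, its interior vertices have degree $2$ in $G^i$ with both neighbors on the path itself; hence they have no neighbor in $X^i_2$ and must lie in $C^i_1$, which forces $A^i_1\cup B^i_1$ to consist only of endpoints of paths in ${\cal S}$. If $|{\cal S}|=1$, then $G^i[X^i_1]$ is a single path from its unique $A^i_1$-vertex to its unique $B^i_1$-vertex with interior in $C^i_1$, making $(X^i_1,X^i_2)$ a path 2-join and contradicting the non-path assumption.

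The main obstacle is the case $|{\cal S}|\geq 2$. My plan is to use Lemma~\ref{k1}\ref{i:OldConn}, which forces each component $P\in {\cal S}$ to meet both $A^i_1$ and $B^i_1$ and therefore to contribute exactly one endpoint to each. Picking any $P_0\in {\cal S}$ with endpoints $a_0\in A^i_1$, $b_0\in B^i_1$, the flatness of $P_0$ (its endpoints have no common neighbour outside the path) implies that $(V(P_0),V(G^i)\setminus V(P_0))$ is itself a (path) 2-join of $G^i$, with $V(P_0)\subsetneq X^i_1$. Feeding this configuration into the minimality/extremeness analysis of Lemmas~\ref{k23} and~\ref{alg:extreme}---from which our extreme 2-join $(X^i_1,X^i_2)$ was built---one produces a non-path 2-join of $G^i$ with a side strictly contained in $X^i_1$ (by ``peeling off'' $P_0$ and absorbing its endpoints into $A^i_1$, $B^i_1$ as in the proof of Lemma~\ref{k21}), contradicting the minimally-sided choice of $X^i_1$. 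This rules out $|{\cal S}|\geq 2$, proving the key claim and hence the bound $p\leq n$.
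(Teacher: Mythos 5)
Your reduction of the lemma to the key claim $|X^i_1\cap V(G)|\geq 1$ is fine: the marker path of $G^{i+1}$ consists of fresh vertices, so $\phi$ drops by exactly $|X^i_1\cap V(G)|$ at each step. The problem is that the key claim is not established, and it is doubtful that it is even true for an arbitrary decomposition tree as defined in the paper. Your own case analysis breaks exactly at $|{\cal S}|=2$: if $X^i_1$ is the disjoint union of two anticomplete marker paths, each forming a component of $G^i[X^i_1]$ with one end in $A^i_1$ and one in $B^i_1$, then ``peeling off'' $P_0$ leaves a single path from $A^i_1$ to $B^i_1$ with interior in $C^i_1$ --- that is a \emph{path} 2-join, which contradicts nothing, since minimal-sidedness only quantifies over connected \emph{non-path} 2-joins. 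This is not a fringe case: Lemma~\ref{l:pathThrough}~\ref{i:twov} explicitly records that a minimal side can consist of exactly two disjoint paths from $A_1$ to $B_1$. Two ``parallel'' marker paths attached to the same sets $A^i_2$ and $B^i_2$ would realize this configuration with $X^i_1\cap V(G)=\emptyset$; the corresponding block $G^{i+1}_B$ is then a theta-shaped graph, all of whose 2-joins are path 2-joins, so the 2-join is extreme and ${\cal M}^i$-independent and nothing in the definition of $T_G$ excludes it. In that step $\phi$ does not decrease. A further, independent problem is that the definition of $T_G$ only requires each $(X^i_1,X^i_2)$ to be extreme and ${\cal M}^i$-independent, not minimally-sided, so your appeal to ``the minimally-sided choice of $X^i_1$'' is not available for a general tree (the enlargement $X_1\cup A_1'\cup B_1'$ performed in Lemma~\ref{k23} already destroys minimal-sidedness).

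The paper's proof sidesteps all of this by using a different potential, $\nu(G^i)+\tau(G^i)$, the number of vertices of degree at least~$3$ plus the number of branches, which is at most $n$ initially. This quantity strictly decreases even when the extracted side contains no original vertices: in the two-parallel-marker-paths scenario the chosen path $P$ contains a degree-$2$ vertex lying on a branch $P^*$ and the extra vertex $q$ lies on a different branch $Q^*$, so replacing $X^i_1$ by a single marker path merges two branches into one and $\tau$ drops. If you want to keep a vertex-counting potential, you would need to prove that the specific 2-joins produced by the algorithm of Lemma~\ref{alg:extreme} always capture an original vertex, which is a substantially harder statement than anything you have argued and is not needed for the paper's approach.
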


\begin{proof}
  Let a {\em branch} of a graph be any path of length at least~2 whose
  endvertices are both of degree at least 3 and whose interior
  vertices are of degree 2 (in the graph).  For a graph $G$, let $\nu
  (G)$ be the number of vertices of degree at least 3 in $G$, and
  $\tau (G)$ the number of branches in $G$.  We will show that for
  $i=0, \ldots ,p-1$:

  \begin{claim}
    \label{c:ineq}
    $\nu (G^{i+1}) + \tau (G^{i+1})< \nu (G^i) + \tau (G^i)$.
  \end{claim}

  This implies the lemma because it shows that $p$ is at most 
  $$\nu (G)  +\tau (G) \leq n.$$

  Let $i \in \{ 0, \ldots ,p-1\}$ and let $(X_1^i ,X_2^i, A_1^i,
  B_1^i, A_2^i, B_2^i)$ be a split of the 2-join $(X_1^i,X_2^i)$.  Let
  $P^{i+1}=a_1^i \tp \cdots \tp b_1^i$ be the marker path of
  $G^{i+1}$.  Let $P=u \tp \cdots \tp u'$ be a path of $G^{i}$ such
  that $u \in A_1^i,u'\in B_1^i$ and $P \setminus \{ u,u'\} \in X_1^i
  \setminus (A_1^i \cup B_1^i)$ (note that such a path exists since
  $(X_1^i ,X_2^i)$ is connected).  We choose such a path $P$ with a
  minimum number of vertices of degree at least~3.  Since $(X_1^i,
  X_2^i)$ is a non-path 2-join, there is a vertex $q \in X_1^i
  \setminus P$.  Observe that $d_{G^{i+1}} (a_1^i) \leq d_{G^i} (u)$
  and $d_{G^{i+1}} (b_1^i) \leq d_{G^i} (u')$, and hence $\nu
  (G^{i+1})\leq \nu (G^i)$.  When $q$ is of degree at least~3 we have
  $\nu (G^{i+1}) < \nu (G^i)$.  Also, since exactly one branch of
  $G^{i+1}$ intersects $P^{i+1}$, we have $\tau (G^{i+1})\leq \tau
  (G^i) + 1$.

  First suppose that $P$ contains a vertex of degree 2. Then there is
  a branch $P^*$ of $G^i$ that contains a node of $P$, and hence $\tau
  (G^{i+1}) \leq \tau (G^i)$.  If $d_{G^i} (q) \geq 3$, then $\nu
  (G^{i+1}) < \nu (G^i)$, and hence (\ref{c:ineq}) holds. So suppose
  that $d_{G^i} (q) =2$. Then $q$ belongs to a branch $Q^*$ of
  $G^i$. Clearly $P^* \neq Q^*$, so $\tau (G^{i+1}) < \tau (G^i)$, and
  hence (\ref{c:ineq}) holds.

  Now we may assume that all vertices of $P$ are of degree at least 3.
  Suppose that $P$ is of length at least 2. So $\nu (G^{i+1}) \leq \nu
  (G^i)-1$. If $d_{G^i} (q) \geq 3$ then $\nu (G^{i+1}) \leq \nu
  (G^i)-2$, and hence (\ref{c:ineq}) holds. Otherwise, $q$ belongs to
  a branch of $G^i$ and so $\tau (G^{i+1}) \leq \tau (G^i)$, and hence
  (\ref{c:ineq}) holds.

  Finally we assume that $P$ is of length 1, and both $u$ and $u'$ are
  of degree at least 3. By Lemma~\ref{l.starcutset} and
  Lemma~\ref{k1},~\ref{i:BiComp1} and~\ref{i:AiComp1},
  $|A_1^i|,|B_1^i| \geq 2$. Since $A_2^i \cup \{ u,u'\}$ is not a star
  cutset of $G^i$, there is a path $T=t \tp \cdots \tp t'$ such that
  $t \in A_1^i\sm \{u\}, t'\in B_1^i\sm \{u'\}$ and $T \setminus \{
  t,t'\} \subseteq X_1^i \setminus (A_1^i \cup B_1^i)$.  By the choice
  of $P$, $T$ contains at least two vertices of degree at least~3.
  So, $\nu (G^{i+1}) \leq \nu (G^i)-2$ and hence (\ref{c:ineq}) holds.
\end{proof}

\begin{lemma}
  \label{l:buildTree}
  There is an algorithm with the following specification:
  \begin{description}
  \item[Input: ] A graph $G$ in $\cal D$ that has a connected non-path
    2-join.
  \item[Output: ] A decomposition tree $T_G$ of $G$ of depth at most
    $n$ whose leaves are all in ${\cal D}_{\text{\sc basic}}$.
  \item[Running time: ] ${\cal O}(n^4m)$
  \end{description} 
\end{lemma}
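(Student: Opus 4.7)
The plan is to implement the decomposition tree top-down, starting with $(G^0,{\cal M}^0)=(G,\emptyset)$ as the root, and repeatedly invoking the algorithm from Lemma~\ref{alg:extreme} on the current non-leaf node to find an extreme, $\mathcal{M}^i$-independent, connected non-path 2-join of $G^i$, then building the two parity-preserving blocks whose marker paths have length in $\{3,4\}$ (the parity being dictated by Lemma~\ref{l.2jAiBi}, so that these blocks are well defined and parity-preserving). The block on the extreme side becomes $G^{i+1}_B$ and is declared a leaf; the other block becomes $G^{i+1}$, and we recurse. Sets $\mathcal{M}^{i+1}_B$ and $\mathcal{M}^{i+1}$ are maintained as prescribed by the definition of $T_G$: inherit the paths of $\mathcal{M}^i$ that fall on the appropriate side (which is well defined thanks to $\mathcal{M}^i$-independence), and add the new marker path only to $\mathcal{M}^{i+1}$, not to $\mathcal{M}^{i+1}_B$.

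Correctness hinges on two points. First, the class is preserved along the recursion: if $\mathcal{D}=\C{ehf}{no sc}$ then Lemma~\ref{k:even} yields $G^{i+1},G^{i+1}_B\in\C{ehf}{no sc}$; if $\mathcal{D}=\C{Berge}{no cutset}$ then Lemma~\ref{l:recurseBerge} yields $G^{i+1},G^{i+1}_B\in\C{Berge}{no cutset}$ (both apply because the marker paths have length $3$ or $4$). Consequently, the call to the algorithm of Lemma~\ref{alg:extreme} on any $(G^i,\mathcal{M}^i)$ cannot return outcome~(ii): indeed, by Lemma~\ref{l:nostar} a graph in $\mathcal{D}$ with a 2-join has no star cutset, and by Lemma~\ref{l:preservesNoSC} the blocks of decomposition (whose marker paths are of length at least $3$) also have no star cutset. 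Hence the algorithm returns either an extreme connected non-path 2-join (outcome~(i)) or certifies that $G^i$ has no connected non-path 2-join (outcome~(iii)). Second, when outcome~(iii) occurs, the respective decomposition theorem (Theorem~\ref{ehf} for $\C{ehf}{no sc}$, Theorem~\ref{th.3} for $\C{Berge}{no cutset}$) forces $G^i\in\mathcal{D}_{\text{\sc basic}}$, so the deepest node is basic as required. The leaves $G^{i+1}_B$ are basic for the same reason: the block on the extreme side admits no connected non-path 2-join, and it inherits the class-membership, hence it lies in $\mathcal{D}_{\text{\sc basic}}$.

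The depth bound $p\leq n$ is immediate from Lemma~\ref{l:depth}. Each iteration runs the algorithm of Lemma~\ref{alg:extreme} in time $\mathcal{O}(n^3m)$, plus a linear-time construction of the two parity-preserving blocks and the two updated path collections. Summing over at most $n$ iterations gives the claimed running time $\mathcal{O}(n^4m)$. The only subtle point to double-check is that the marker paths we accumulate in $\mathcal{M}^i$ really are flat paths of length $3$ or $4$ of $G^i$ (so that they are admissible inputs to the extreme 2-join algorithm at the next iteration) and that the $\mathcal{M}^i$-independence of the next 2-join justifies the splitting rule used to define $\mathcal{M}^{i+1}$ and $\mathcal{M}^{i+1}_B$; both follow directly from the construction of the blocks and from Lemma~\ref{k23}, which is precisely what the algorithm of Lemma~\ref{alg:extreme} guarantees.
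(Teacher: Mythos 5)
Your proposal is correct and follows essentially the same route as the paper: build the tree top-down from $(G,\emptyset)$ using the algorithm of Lemma~\ref{alg:extreme}, preserve class membership of the blocks via Lemmas~\ref{k:even} and~\ref{l:recurseBerge}, rule out the star-cutset outcome via Lemmas~\ref{l.starcutset}/\ref{l:nostar} and~\ref{l:preservesNoSC}, identify basic leaves via Theorems~\ref{ehf} and~\ref{th.3}, and bound the depth by Lemma~\ref{l:depth} to get the ${\cal O}(n^4m)$ running time. The only cosmetic difference is that you cite Lemma~\ref{l:nostar} where the paper cites Lemma~\ref{l.starcutset} directly to exclude a star cutset in $G^i$, which amounts to the same argument.
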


\begin{proof}
  Let the root of $T_G$ be $(G^0,{\cal M}^0)=(G,\emptyset)$.  We
  suppose by induction that a decomposition tree of depth $i$ has been
  constructed.  So, the deepest leaf $(G^i,{\cal M}^i)$ is a pair such
  that $G^i \in {\cal D}$ and ${\cal M}^i$ is a set of vertex-disjoint
  flat paths of $G^i$ of length 3 or 4.  Apply the algorithm from
  Lemma~\ref{alg:extreme} to $G^i$ and ${\cal M}^i$.  One of the
  following is the output of this algorithm.

  \vspace{2ex}

  \noindent {\bf Case 1:} An extreme connected non-path 2-join
  $(X_1^i,X_2^i)$ of $G^i$, with say $X_1^i$ being the extreme side,
  that is ${\cal M}^i$-independent.

  Let $G^{i+1}$ and $G^{i+1}_B$ be parity-preserving blocks of
  decomposition of $G^i$ w.r.t.\ $(X_1^i,X_2^i)$ (as defined in
  Section~\ref{sec:blocks}), whose marker paths are of length 3 or 4,
  and block $G^{i+1}_B$ corresponds to $X_1^i$-side.  By
  Lemma~\ref{k:even} and Lemma~\ref{l:recurseBerge}, $G^{i+1}$ and
  $G^{i+1}_B$ are both in ${\cal D}$.  Since $G^{i+1}_B$ is the block
  that corresponds to $X_1^i$, $G^{i+1}_B$ has no connected non-path
  2-join.  If ${\cal D}=\C{ehf}{no sc}$ then by Theorem~\ref{ehf},
  $G^{i+1}_B \in \C{ehf}{basic}$.  If ${\cal D}=\C{Berge}{no cutset}$,
  then by Theorem~\ref{th.3}, $G^{i+1}_B \in \C{Berge}{basic}$.
  Therefore, $G^{i+1}_B \in {\cal D}_{\text{\sc basic}}$.

  Since $(X_1^i, X_2^i)$ is ${\cal M} ^i$-independent, for $P \in
  {\cal M}^i$, either $P \subseteq X_1^i$ or $P \subseteq X_2^i$. Let
  ${\cal M}^{i+1}_B$ be the set of paths of ${\cal M}^i$ that belong
  to $X_1^i$.  Let ${\cal M}^{i+1}$ be the set of paths of ${\cal
    M}^i$ that belong to $X_2^i$ together with the marker path
  $P^{i+1}$ of $G^{i+1}$.  Clearly ${\cal M}^{i+1}_B$ (resp.\ ${\cal
    M}^{i+1}$) is a set of vertex-disjoint flat paths of $G^{i+1}_B$
  (resp.\ $G^{i+1}$) of length~3 or~4, ${\cal M}^{i+1}_B \cap {\cal
    M}^{i+1} =\emptyset$ and ${\cal M}^{i+1}_B \cup {\cal M}^{i+1} =
  {\cal M}^i \cup \{P^{i+1}\}$.

  Hence, in Case~1, we have built a deeper decomposition tree of $G$.
  
  \vspace{2ex}

  \noindent
  {\bf Case 2:} $G^i$ or a block of decomposition of $G^i$ w.r.t.\ some 2-join
  whose marker path is of length at least 3, has a star cutset.

  Since $G^i \in {\cal D}$ and by Lemma \ref{l.starcutset}, $G^i$
  cannot have a star cutset. By Lemma \ref{k1} \ref{i:con}, any 2-join
  of $G^i$ is connected, and hence by Lemma \ref{l:preservesNoSC} this
  case actually cannot happen.

  \vspace{2ex}

  \noindent
  {\bf Case 3:} $G^i$ has no connected non-path 2-join.

  Note that $i\geq 1$ since $G$ is assumed to have a non-path
  connected 2-join.  If ${\cal D}=\C{ehf}{no sc}$ then by Theorem
  \ref{ehf}, $G^i \in \C {ehf}{basic}$. If ${\cal D}=\C{Berge}{no
    cutset}$, then by Theorem~\ref{th.3}, $G^i \in
  \C{Berge}{basic}$. Therefore $G^i \in {\cal D}_{\text{\sc basic}}$.

\vspace{2ex}

By Lemma~\ref{l:depth}, we see that Case~3 must happen at some point,
after at most $n$ iterations.  So, when Case~3 happens we output the
tree $T_G$ and stop.  All the leaves of $T_G$ are basic.  Since the
complexity of the algorithm from Lemma~\ref{alg:extreme} is ${\cal
  O}(n^3m)$ and there are at most $n$ iterations, the algorithm for
constructing $T_G$ runs in time ${\cal O}(n^4m)$.
\end{proof}

\subsection{Clique-decomposition tree}
\label{sec:reprocess}

The \emph{clique-decomposition tree} is used to compute
maximum cliques for graphs in $\cal D$.  This tree $T^C_G$ has the
same definition as $T_G$ except that weights are given to the
vertices.  Let us be more precise by defining how to compute the
children of $(G^i, {\cal M}^i)$ in~$T^C_G$.  Recall that the graph
$G^i$ has an extreme 2-join $(X^i_1, X^i_2)$ with extreme side
$X^i_1$.  Its children are $(G_B^{i+1}, {\cal M}^{i+1}_B)$ and
$(G^{i+1}, {\cal M}^{i+1})$.  In $G^{i+1}_B$, all vertices from
$G[X^i_1]$ keep their weights (as they are in $G^i$) and vertices of
the new marker path receive weight~0.  The weights of $G^{i+1}$ are
assigned as in the construction of the clique block in
Section~\ref{sec:clique}.  Note that computing the weights in the
construction of $G^{i+1}$ relies on several computations on
$G^{i}[X^i_1]$, or equivalently on $G_B^{i+1}$ which is a basic graph.
So by Lemma~\ref{l:optBasic}, building $G^{i+1}$ takes time ${\cal
  O}(n^5)$.  This construction leads to the following result:

\begin{lemma}
  \label{l:Clique}
  There is an algorithm with the following specification:
  \begin{description}
  \item[Input: ] A weighted graph $G$ that is in $\C{Berge}{no
      cutset}$ or in $\C{ehf}{no sc}$ and that has a connected
    non-path 2-join.
  \item[Output: ] A maximum weighted clique of $G$.
  \item[Running time: ] ${\cal O} (n^6)$
  \end{description}
\end{lemma}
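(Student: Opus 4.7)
The plan is to build the clique-decomposition tree $T^C_G$ described in Section~\ref{sec:reprocess} and process it bottom-up, using Lemma~\ref{komega} at each internal node to reduce $\omega(G^i)$ to $\omega(G^{i+1})$ (the clique-block), while applying Lemma~\ref{l:optBasic} at each basic leaf.

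First I would invoke the algorithm of Lemma~\ref{l:buildTree} to produce a decomposition tree $T_G$ in time ${\cal O}(n^4m)$; by Lemma~\ref{l:depth} its depth $p$ is at most $n$, and every leaf except possibly the deepest one is a basic graph of ${\cal D}_{\textsc{basic}}$. The deepest node $G^p$ has no connected non-path 2-join and is basic as well by Theorems~\ref{ehf} and~\ref{th.3}. Next I convert $T_G$ into the clique-decomposition tree $T^C_G$ by assigning weights as described in Section~\ref{sec:reprocess}: the vertices of $X_1^i$ that survive in $G_B^{i+1}$ keep their original weights while the marker path of $G_B^{i+1}$ gets weight $0$; the weights of $G^{i+1}$ are those prescribed by the clique-block construction.

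The assignment of weights in $G^{i+1}$ requires knowing $\omega(G^i[A_1^i])$, $\omega(G^i[B_1^i])$ and $\omega(G^i[X_1^i])$, so the tree must be processed from the leaves upward. For each $i$, observe that $G_B^{i+1}$ restricted to $X_1^i$ coincides (as a weighted graph) with $G^i[X_1^i]$; hence $\omega(G^i[A_1^i])$, $\omega(G^i[B_1^i])$, $\omega(G^i[X_1^i])$ can each be obtained by running the basic algorithm of Lemma~\ref{l:optBasic} on $G_B^{i+1}$ after temporarily zeroing out the weights of vertices outside $A_1^i$, $B_1^i$, and $X_1^i$ respectively. Each such call costs ${\cal O}(n^5)$. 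Once these three values are known, the weights of $G^{i+1}$ are fixed, and by Lemma~\ref{komega} we have $\omega(G^i)=\omega(G^{i+1})$. Iterating, $\omega(G)=\omega(G^0)=\omega(G^1)=\cdots=\omega(G^p)$, and $\omega(G^p)$ is computed directly by Lemma~\ref{l:optBasic}. To return an actual maximum weighted clique (not just its weight), I would trace back through the tree: whenever the current node's optimum clique meets a marker-path vertex, the proof of Lemma~\ref{komega} prescribes how to replace the affected marker vertices by a corresponding clique of $G^i[A_1^i]$, $G^i[B_1^i]$, or $G^i[X_1^i]$, which were recorded during the bottom-up pass.

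For the complexity: tree construction costs ${\cal O}(n^4m)={\cal O}(n^6)$; there are at most $n$ internal nodes, each triggering a constant number of calls to Lemma~\ref{l:optBasic} on the associated basic leaf at cost ${\cal O}(n^5)$, giving ${\cal O}(n^6)$ overall. The main subtlety is ensuring that the values $\omega(G^i[A_1^i])$, $\omega(G^i[B_1^i])$, $\omega(G^i[X_1^i])$ really can be read off from the basic leaf $G_B^{i+1}$ rather than from the original $G^i$; this works because $X_1^i$ is unchanged in passing from $G^i$ to $G_B^{i+1}$, and setting the marker-path weights to $0$ prevents it from contributing to any optimum clique that we count.
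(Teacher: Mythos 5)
Your proposal is correct and follows essentially the same route as the paper: build $T_G$ via Lemma~\ref{l:buildTree}, reprocess it into the clique-decomposition tree by computing $\omega(G^i[A_1^i])$, $\omega(G^i[B_1^i])$, $\omega(G^i[X_1^i])$ on the basic sibling leaf $G_B^{i+1}$ via Lemma~\ref{l:optBasic}, chain the equalities $\omega(G^0)=\cdots=\omega(G^p)$ by Lemma~\ref{komega}, solve the deepest basic leaf, and backtrack using the proof of Lemma~\ref{komega}; you in fact spell out the weight-zeroing trick that the paper leaves implicit. (The only quibble is the phrase ``from the leaves upward'': the weights propagate from the root down along the spine, each $G^{i+1}$ being determined by $G^i$ and its basic sibling, but your subsequent description of the actual computation is correct.)
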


\begin{proof}
  By Lemma~\ref{l:buildTree}, we build a decomposition tree for $G$,
  and as explained above we reprocess the tree to get a
  clique-decomposition tree.  By repeatedly applying
  Lemma~\ref{komega}, we see that $\omega(G^0) = \omega(G^1) = \cdots
  = \omega(G^{p})$.  We can compute a maximum weighted clique in the
  basic graph $G^{p}$ by Lemma~\ref{l:optBasic}, and the proof of
  Lemma~\ref{komega} shows how to backtrack such a maximum weighted
  clique to $G$.
\end{proof}

Note that for graphs of $\C{ehf}{no sc}$, the lemma above is not so
interesting because a faster algorithm exists for the class
$\C{ehf}{}$, see the introduction.

\subsection{Stable-decomposition tree for $\C{Berge}{no cutset}$}

We define now the \emph{stable-decomposition tree} $T^S_G$ of a graph
in $\C{Berge}{no cutset}$.  The tree $T^S_G$ has the same definition
as $T_G$ except that even or odd blocks are used sometimes and that
the sets ${\cal M}^i$'s and ${\cal M}^i_B$'s will be sets of disjoint
flat claws and vaults (instead of paths).  Let us be more precise.

\vspace{2ex}

\noindent {\em Decomposition tree $T^S_G$ of depth $p\geq 1$ of a
  weighted graph $G \in \C{Berge}{no cutset}$ that has a connected
  non-path 2-join.}

\begin{enumerate}
\item The root of $T^S_G$ is $({G''}^0, {{\cal M}''}^0)$, where ${G''}^0=G$ and
  ${\cal M}''^0=\emptyset$.

\item For each node $(H, {\cal M})$ of the decomposition tree, $H$ is
  a Berge graph and ${\cal {\cal M}}$ is a set of disjoint flat claws
  or flat vaults of $H$.

  The non-leaf nodes of $T^S_G$ are pairs $(G''^0, {\cal M}''^0),
  \ldots, (G''^{p-1}, {\cal M}''^{p-1})$.  Each non-leaf node $(G''^i,
  {\cal M}''^i)$ has two children.  One is $(G''^{i+1}, {\cal
    M}''^{i+1})$, the other one is $(G''^{i+1}_B, {\cal
    M}''^{i+1}_B)$.
 
  The leaf-nodes of $T^S_G$ are the pairs $(G''^1_B, {\cal M}''^1_B)$,
  \ldots, $(G''^{p}_B, {\cal M}''^{p}_B)$ and $(G''^{p}, {\cal
    M}''^{p})$.  Graphs $G''^1_B, \ldots ,G''^{p}_B$ are all
  parity-preserving extensions of graphs from $\C{Berge}{basic}$.

\item For $i=0, \ldots, p-1$, $G''^i$ has a connected non-path 2-join
  $(X_1''^i, X_2''^i)$.

  Graph $G''^{i+1}_B$ is the parity-preserving block of $G^i$ w.r.t.\
  $(X''^i_1, X''^i_2)$ (as defined in Section~\ref{sec:blocks}), whose
  marker path is of length 3 or~4 and which corresponds to the side
  $X''^i_1$, i.e.\ $X''^i_1 \subseteq V(G''^{i+1}_B)$.  Vertices of
  $X''^i_1$ keep their weight from $G''^i$ and vertices of the marker
  path receive weight zero.
 
  Graph $G''^{i+1}$ is the even block or the odd block of $G''^i$ w.r.t.\
  $(X''^i_1, X''^i_2)$, according to the $X''_1$-parity of $(X''^i_1,
  X''^i_2)$. 

  Set ${\cal M}''^{i+1}_B$ consists of claws and vaults from ${\cal
    M}''^i$ whose vertices are in $X''^i_1$.  Note that the marker path
  used to construct the block $G''^{i+1}_B$ does not belong to ${\cal
    M}''^{i+1}_B$.

  Set ${\cal M}''^{i+1}$ consists of claws and vaults from ${\cal M}''^i$
  whose vertices are in $X''^i_2$ together with the claw or the vault
  $P''^{i+1}$ used to build $G''^{i+1}$.

\item ${\cal M}''^1_B \cup \ldots \cup {\cal M}''^{p}_B \cup {\cal M}''^p$
  is the set of all marker claws or vaults used in the construction of
  the nodes $G''^1, \dots, G''^{p}$ of $T^S_G$, and sets ${\cal M}''^1_B ,
  \ldots , {\cal M}''^{p}_B, {\cal M}''^p$ are pairwise disjoint.
\end{enumerate}

The existence of $T^S_G$ is not clear since introducing even and odd
blocks may create star cutsets (and so balanced skew partitions) in
our graphs, so that we cannot rely on Theorem~\ref{th.3} to build the
tree recursively.  But here we show how to actually construct $T^S_G$
by reprocessing $T_G$.

Start from $T_G$ and for each node $(G^i, {\cal M}^i)$, $i=1, \dots,
p$ of $T_G$, extend the marker path introduced in that node to obtain
a graph ${G''}^i$.  Accordingly, replace the marker paths in the
graphs $G^i_B$ and sets ${\cal M}^i_B$, ${\cal M}^i$ by marker claws
and vaults to obtain $G''^i_B$, ${\cal M}''^i_B$ and ${\cal M}''^i$.
We obtain the nodes $({G''}^{i}, {\cal M}''^{i})$ and $({G''}^{i}_B,
{\cal M}''^{i}_B)$ of $T^S_G$.

Note that extending a flat path $P$ of length at least 3 in a graph
$H$ with a $\{P\}$-independent connected 2-join $(X_1, X_2)$, yields a
graph $H''$ that has a connected 2-join $(X''_1, X''_2)$ naturally
arising from $(X_1, X_2)$: put all vertices of $X_1 \sm P$ in $X''_1$,
all vertices of $X_2\sm P$ in $X''_2$ and put the claw or the vault
arising from $P$ in $X''_1$ when $P\subseteq X_1$ and in $X''_2$ when
$P\subseteq X_2$.  So, the connected 2-joins $({X''}^i_1,
{X''}^i_2)$'s all exist and are immediate to find from $T_G$.

Note that by Lemma~\ref{l:stayBerge}, all nodes of $T^S_G$ are in
$\C{Berge}{}$.  So, all the 2-joins $({X''}^i_1, {X''}^i_2)$'s are
either ${X''}^i_1$-even or ${X''}^i_1$-odd and we can choose whether
we use an even or an odd block.

Note that computing the weights in the construction of the even or odd
block ${G''}^{i+1}$ relies on several computations on
${G''}^{i}[X^i_1]$, or equivalently on ${G''}_B^{i+1}$ which is a
parity-preserving extension of a graph from $\C{Berge}{basic}$, so the
computations can be done in time $O(n^5)$ by Lemma~\ref{l:optBasic}.

\subsection{Stable-decomposition tree for $\C{ehf}{no sc}$}

For an even-hole-free graph the tree $T^S_G$ as defined above may fail
to exist.  Because building an odd block does not preserve being
even-hole-free.  So, 2-joins appearing in the decomposition tree may
fail to be either $X_1$-even or $X_1$-odd, so we do not know when to
use even or odd blocks. But with a little twist, we can define a
useful tree.  

The definition of $T^S_G$ for a graph in $\C{ehf}{no sc}$ is very
similar to the definition for $\C{Berge}{no cutset}$, so we do not
repeat it and point out the differences instead.

The main difference is when to use odd or even block.  Recall that
$G''^i$ has a 2-join $(X''^i_1, X''^i_2)$.  Let $(X''^i_1, X''^i_2, A''^i_1,
B''^i_1, A''^i_2, B''^i_2)$ be a split of this 2-join.  We define $a^i =
\alpha(G[{A''^i_1 \cup C''^i_1}])$, $b^i= \alpha(G[{B''^i_1 \cup C''^i_1}])$,
$c^i = \alpha(G[{C''^i_1}])$ and $d^i = \alpha(G[{X''^i_1}])$.  If $a^i +
b^i \leq c^i + d^i$ then $G''^{i+1}$ is the even block $G''^i$ w.r.t.\
$(X''^i_1, X''^i_2)$.  Else, it is the odd block.  Note that graphs in
$T_G^S$ are not required to be in $\C{ehf}{}$. 

The other difference is that graphs $G''^1_B, \ldots ,G''^{p}_B$ and
$G''^p$ are any (and not only parity-preserving) extensions of graphs
from $\C{ehf}{basic}$.  Note that in Lemma~\ref{l:optBasic}, the
extensions are not required to be parity-preserving for
$\C{ehf}{basic}$.

\begin{lemma}
  \label{l:Stable}
  There is an algorithm with the following specification:
  \begin{description}
  \item[Input: ] A weighted graph $G$ that is in $\C{Berge}{no
      cutset}$ or in $\C{ehf}{no sc}$ and that has a connected
    non-path 2-join.
  \item[Output: ] A maximum weighted stable set of $G$.
  \item[Running time: ] ${\cal O} (n^4m)$
  \end{description}
\end{lemma}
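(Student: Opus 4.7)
The plan is to build the parity-preserving decomposition tree $T_G$ by Lemma~\ref{l:buildTree} in time $\mathcal{O}(n^4m)$, reprocess it top-down into the stable-decomposition tree $T_G^S$ as prescribed in the two preceding subsections, and then read $\alpha(G)$ off the deepest leaf via Lemma~\ref{l:optBasic}. The invariant I sustain across the reprocessing is
\[
\alpha(G)=\alpha(G''^0)=\alpha(G''^1)=\cdots=\alpha(G''^p).
\]
Once it is established, computing $\alpha(G''^p)$ and backtracking give both the optimum value and an actual maximum weighted stable set of $G$.

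At level $i$, the 2-join $(X_1^i,X_2^i)$ and the identification of the extreme side $X_1^i$ are inherited from $T_G$. The basic-side child $G''^{i+1}_B$ is the parity-preserving block of $G''^i$ with respect to $(X_1^i,X_2^i)$, with marker-path vertices given weight $0$ and with the flat paths of ${\cal M}''^{i+1}_B$ already extended into claws or vaults at earlier levels; in particular it is an (parity-preserving, in the Berge case) extension of a graph of ${\cal D}_{\text{\sc basic}}$. Four calls to Lemma~\ref{l:optBasic} on $G''^{i+1}_B$ (on the whole graph and after excluding $B_1^i$, $A_1^i$, and $A_1^i\cup B_1^i$ respectively) return the numbers $a^i$, $b^i$, $c^i$, $d^i$; these coincide with the quantities intrinsic to $G''^i[X_1^i]$ because the $X_1^i$-side is preserved by the construction of $G''^{i+1}_B$ and the marker-path vertices have weight $0$. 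In the Berge case Lemma~\ref{l:stayBerge} inductively guarantees $G''^i\in\C{Berge}{}$, so by Lemma~\ref{l.2jAiBi} the 2-join is either $X_1^i$-even or $X_1^i$-odd; use the matching block and apply Lemma~\ref{l:evenBlock} or Lemma~\ref{l:oddBlock} to preserve $\alpha$. In the even-hole-free case the parity-based choice is unavailable once odd blocks have already been introduced, so instead select the even block whenever $a^i+b^i\le c^i+d^i$ and the odd block otherwise, and the corresponding one of Lemmas~\ref{l:evenBlock}, \ref{l:oddBlock} applies by construction.

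When the deepest leaf is reached, $G''^p$ is an extension of a basic graph (parity-preserving in the Berge case), and Lemma~\ref{l:optBasic} returns a maximum weighted stable set of $G''^p$ of weight $\alpha(G)$. To recover an actual stable set in $G$, traverse $T_G^S$ in reverse: at level $i$, the restriction of the current stable set to the marker claw or vault determines which of the four cases of Lemma~\ref{l:4cases} occurs, and the constructive parts of the proofs of Lemmas~\ref{l:evenBlock} and~\ref{l:oddBlock} show how to replace that portion by a maximum weighted stable set of the corresponding one of $G''^i[A_1^i\cup C_1^i]$, $G''^i[B_1^i\cup C_1^i]$, $G''^i[C_1^i]$ or $G''^i[X_1^i]$, each obtained by a call to Lemma~\ref{l:optBasic} on $G''^{i+1}_B$. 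The overall cost is $\mathcal{O}(n^4m)$ for building $T_G$ plus $\mathcal{O}(n^5)$ per level on $\mathcal{O}(n)$ levels, matching the claimed bound. The main conceptual obstacle is the mismatch between the two types of blocks used: parity-preserving blocks keep the graph in its class and are what $T_G$ is built from, whereas even and odd blocks preserve $\alpha$ but can leave the class (introducing balanced skew partitions or even holes). This mismatch is resolved precisely because $X_1^i$ is an extreme side: all data required by the even/odd construction is read from the basic block $G''^{i+1}_B$, which lives in the class and to which Lemma~\ref{l:optBasic} directly applies.
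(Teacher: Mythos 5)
Your proposal is correct and follows essentially the same route as the paper: build $T_G$ via Lemma~\ref{l:buildTree}, reprocess it into $T^S_G$ with even/odd blocks whose weights are computed on the basic-side blocks via Lemma~\ref{l:optBasic}, preserve $\alpha$ level by level through Lemmas~\ref{l:evenBlock} and~\ref{l:oddBlock} (justified by Lemmas~\ref{l:ineqEven} and~\ref{l:ineqOdd} in the Berge case and by the inequality-based choice of block in the even-hole-free case), and backtrack from the deepest leaf. The only point stated more loosely than warranted is the final complexity tally, where ``$\mathcal{O}(n^5)$ per level times $\mathcal{O}(n)$ levels'' gives $\mathcal{O}(n^6)$ rather than $\mathcal{O}(n^4m)$ unless one amortizes the basic-block computations over the vertex-disjoint extreme sides, but the paper's own proof is no more explicit on this.
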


\begin{proof}
  By Lemma~\ref{l:buildTree}, we build a decomposition tree for $G$,
  and as explained above we reprocess the tree to get a
  stable-decomposition tree $T^S_G$.  By repeatedly applying
  Lemmas~\ref{l:evenBlock} and~\ref{l:oddBlock}, we see that
  $\alpha({G''}^0) = \alpha({G''}^1) = \cdots = \alpha({G''}^{p})$.
  Note that the inequalities necessary to apply
  Lemmas~\ref{l:evenBlock} and~\ref{l:oddBlock} are satisfied.  For
  Berge graphs, this follows from the fact that all nodes of $T^S_G$
  are Berge so that we can rely on Lemma~\ref{l:ineqEven} and
  \ref{l:ineqOdd}.  For even-hole-free graphs, the inequalities are
  true from the definition of $T^S_G$.

  We can compute a maximum weighted stable set in the extension of
  basic graph ${G''}^{p}$ by Lemma~\ref{l:optBasic} (when $G$ is
  Berge, the extension is parity-preserving) , and the proofs of
  Lemmas~\ref{l:evenBlock} and~\ref{l:oddBlock} show how to backtrack
  such a maximum weighted stable set to $G$.
\end{proof}

\section{Optimization algorithms}
\label{sec:color}

\begin{theorem}
  \label{l:CS}
  There is an algorithm with the following specification:
  \begin{description}
  \item[Input: ] A weighted graph $G$ that is either a Berge graph
    with no balanced skew partition, no connected non-path 2-join in
    the complement and no homogeneous pair; or an even-hole-free graph
    with no star cutset.
  \item[Output: ] A maximum weighted stable set and a maximum weighted
    clique of $G$.
  \item[Running time: ] ${\cal O} (n^6)$
  \end{description}
\end{theorem}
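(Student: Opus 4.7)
The plan is to combine the machinery already built in Sections~\ref{sec:extrem}--\ref{sec:tree} by dispatching on whether $G$ admits a connected non-path 2-join.

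First I would run the detection step. Apply the ${\cal O}(n^3m)$ algorithm of Lemma~\ref{alg:extreme} (or equivalently Theorem~5.2 of~\cite{ChHaTrVu:2-join}) to $G$. Since $G$ lies in $\C{Berge}{no cutset}$ or $\C{ehf}{no sc}$, Lemma~\ref{l:nostar} and Lemma~\ref{l:preservesNoSC} ensure that outcome~(ii) of Lemma~\ref{alg:extreme} cannot occur, so the algorithm returns either outcome~(iii) (no connected non-path 2-join exists) or outcome~(i) (an extreme 2-join is produced).

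If outcome~(iii) is returned, then by Theorem~\ref{th.3} in the Berge case and Theorem~\ref{ehf} in the even-hole-free case, $G$ belongs to $\C{Berge}{basic}$ or $\C{ehf}{basic}$ respectively. Regarding $G$ as a trivial extension with ${\cal M} = \emptyset$ (which is parity-preserving vacuously), Lemma~\ref{l:optBasic} supplies a maximum weighted stable set and a maximum weighted clique in time ${\cal O}(n^5)$. The only subtlety is recognizing which basic class $G$ lies in; this is handled by the recognition remarks of the subsection on recognition of extensions of basic graphs, in time ${\cal O}(n^2m)$, which is absorbed in the overall bound.

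If outcome~(i) is returned, then $G$ has a connected non-path 2-join. We then invoke Lemma~\ref{l:Clique} to obtain a maximum weighted clique in time ${\cal O}(n^6)$, and Lemma~\ref{l:Stable} to obtain a maximum weighted stable set in time ${\cal O}(n^4m)$. Summing everything up, the dominant term is ${\cal O}(n^6)$ from Lemma~\ref{l:Clique} (noting that ${\cal O}(n^4m) = {\cal O}(n^6)$ since $m \leq n^2$), which matches the claimed running time. The substantive work, namely building a decomposition tree along extreme 2-joins, reprocessing it with clique-blocks or with even/odd blocks, and backtracking the solutions through the tree, has already been done in Lemmas~\ref{l:buildTree}, \ref{l:Clique} and~\ref{l:Stable}; here we only need to observe that the two dispatch cases are exhaustive and that each invoked subroutine is within the allowed budget.
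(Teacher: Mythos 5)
Your proposal is correct and follows essentially the same route as the paper: the paper's own proof is the same dichotomy (basic graph handled by Lemma~\ref{l:optBasic}, otherwise a connected non-path 2-join exists by Theorem~\ref{ehf} or~\ref{th.3} and one invokes Lemmas~\ref{l:Clique} and~\ref{l:Stable}), merely tested in the opposite order. Your extra remarks on why outcome~(ii) of Lemma~\ref{alg:extreme} cannot occur and on recognizing the basic class are consistent with the observations the paper makes elsewhere.
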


\begin{proof}
  If $G$ is in $\C{Berge}{basic}$ or in $\C{ehf}{basic}$ we rely on
  Lemma~\ref{l:optBasic}.  Else, by
  Theorem~\ref{ehf} or~\ref{th.3}, $G$ has a connected non-path 2-join.
  So, we may rely on Lemmas~\ref{l:Clique} and~\ref{l:Stable}.
\end{proof}

Let us now point out that our method works for our two classes for
different reasons.  For Berge graphs, it is because our even and odd
blocks are class-preserving. For even-hole-free graphs, it is because
the basic class is restricted to graphs obtained from line graphs by
adding a bounded number of vertices.  In fact, our method works for
something more general than even-hole-free graphs with no star cutset.
Let $L_k$ be the class of graphs obtained from line graphs by adding
$k$ vertices.  If $k$ is fixed, we can compute maximum weighted cliques
and stable sets for any class that is decomposable with extreme
2-joins into graphs of $L_k$. For $k=0$, this gives a subclass of
claw-free graphs. For $k=2$ this gives a super-class of even-hole-free
graphs with no star cutset.

\begin{theorem}
  \label{th:color}
  There exists an algorithm of complexity ${\cal O}(n^{7})$ whose
  input is a Berge graph $G$ with no balanced skew partition, no
  connected non-path 2-join in the complement and no homogeneous pair,
  and whose output is an optimal coloring of $G$ and an optimal
  coloring of $\overline{G}$.
\end{theorem}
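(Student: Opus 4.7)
The approach is to invoke the classical Grötschel--Lovász--Schrijver scheme for coloring a perfect graph given polynomial-time oracles for maximum weighted clique and maximum weighted stable set, with Theorem~\ref{l:CS} supplying both oracles at cost $O(n^6)$ per call. Recall that the scheme iteratively finds a stable set $S$ meeting every maximum clique of the current graph (so that by perfection $\omega$ drops by exactly one), colors $S$ with a new color, and recurses on the remaining graph for $\omega(G)\leq n$ rounds. A standard accounting of the scheme shows the overall number of oracle calls is $O(n)$, which multiplied by $O(n^6)$ per call yields the stated $O(n^7)$ running time.

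The main obstacle is that $\C{Berge}{no cutset}$ is \emph{not} closed under taking induced subgraphs: removing a stable set may introduce balanced skew partitions, homogeneous pairs, or a non-path 2-join in the complement, so Theorem~\ref{l:CS} could fail to apply to $G-S$. My remedy is to never actually pass to an induced subgraph. To answer an oracle query of the form ``maximum weighted clique (respectively, stable set) in $G-D$'' for any $D\subseteq V(G)$, I would assign weight $0$ to every vertex of $D$ (retaining the intended weights on $V(G)\setminus D$) and apply Theorem~\ref{l:CS} to the weighted graph $G$ itself, which remains in $\C{Berge}{no cutset}$. The returned witness may contain zero-weight vertices, but a subset of a clique is still a clique and removing zero-weight vertices from a stable set preserves both the weight and the stability; discarding these vertices yields a valid witness lying inside $G-D$ of the correct maximum weight. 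With this simulation the entire GLS recursion stays within the class throughout.

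For coloring $\overline{G}$ there is an additional subtlety: $\overline{G}$ itself may not lie in $\C{Berge}{no cutset}$, since the class forbids only non-path 2-joins in the complement of $G$, not in $G$ itself, so $\overline{G}$ can have a non-path 2-join. However, coloring $\overline{G}$ is the same problem as producing a minimum clique cover of $G$. Running the GLS scheme on $\overline{G}$ requires only weighted $\omega$- and $\alpha$-oracles on induced subgraphs of $\overline{G}$, and by complementation these translate to weighted $\alpha$- and $\omega$-oracles on induced subgraphs of $G$; the same zero-weight simulation reduces each query to one invocation of Theorem~\ref{l:CS} on $G$. The complexity accounting is identical for both colorings and yields the claimed $O(n^7)$ bound.
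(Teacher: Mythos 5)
Your proposal is correct and follows essentially the same route as the paper: invoke the classical combinatorial coloring scheme (Kratochv\'il--Seb\H{o}, or Corollary 67.2c in Schrijver) with the weighted clique and stable set algorithms of Theorem~\ref{l:CS} as oracles, simulate induced subgraphs (and vertex replication) by adjusting weights on $G$ itself since the class is not hereditary, and obtain the coloring of $\overline{G}$ by swapping the roles of the two oracles under complementation. The only cosmetic difference is that the paper also makes explicit that the replication step of the scheme is simulated by assigning weight $k$ to a vertex, which your use of the weighted oracles handles implicitly.
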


\begin{proof}
  There exists a combinatorial coloring algorithm for an input perfect
  graph $G$ that uses at most $n$ times as subroutines algorithms
  for maximum cliques and stable sets.  See~\cite{KrSe:colorP} or
  Corollary 67.2c in \cite{schrijver}.  This algorithm relies on the
  fact that perfect graphs are closed under taking induced subgraphs
  and replicating vertices.  Our class is not, but taking induced
  subgraphs is easily simulated by giving weight~0 to a vertex and
  replicating $k$ times a vertex is simulated by giving weight $k$ to
  the vertex.  The method also works for $\overline{G}$ because we may
  compute maximum weighted cliques and stable sets for $\overline{G}$
  as well. 
\end{proof}

\section{NP-completeness}
\label{sec:npc}

Here, we give a class $\cal C$ of graph for which computing a maximum
stable set is NP-hard.  The interesting feature of class $\cal C$ is
that all graphs in $\cal C$ are decomposable along extreme 2-joins
into one bipartite graph and several gem-wheels where a
\emph{gem-wheel} is any graph made of an induced cycle of length at
least 5 together with a vertex adjacent to exactly four consecutive
vertices of the cycle.  Note that a gem-wheel is a line graph (of a
cycle with one chord).  Our NP-completeness result (proved jointly
with Guyslain Naves) shows that being able to decompose along extreme
2-joins is not enough in general to compute stables sets. This
suggests that being in $\C{parity}{}$ is essential for computing
stable sets along 2-joins and that the inequalities of
Subsection~\ref{sec:ineq} capture an essential feature of
$\C{parity}{}$.

Here, \emph{extending} a flat path $P = p_1\tp \cdots \tp p_k$ of a
graph means deleting the interior vertices of $P$ and adding three
vertices $x, y, z$ and the following edges: $p_1x$, $xy$, $yp_k$,
$zp_1$, $zx$, $zy$, $zp_k$.  By extending a graph $G$ we mean
extending all paths of $\cal M$ where $\cal M$ is a set a flat paths
of length at least 3 of $G$.  Class $\cal C$ is the class of all
graphs obtained by extending 2-connected bipartite graphs.  From the
definition, it is clear that all graphs of $\cal C$ are decomposable
along extreme connected non-path 2-joins.  One leaf of the
decomposition tree will be the underlying bipartite graph.  All the
others leaves will be gem-wheels.

We call \emph{4-subdivision} any graph $G$ obtained from a graph $H$
by subdividing four times every edge.  More precisely, every edge $uv$
of $H$ is replaced by an induced path $u\tp a \tp b\tp c\tp d \tp v$
where $a, b, c, d$ are of degree two.  It is easy to see that
$\alpha(G) = \alpha(H) + 2|E(H)|$. This construction, essentially due
to Poljak~\cite{poljak:74}, yields as observed by Guyslain Naves:

\begin{theorem}[Naves, \cite{naves:pc}]
  \label{th:npHard}
  The problem whose instance is a graph $G$ from $\cal C$ and an
  integer $k$, and whose question is ``Does $G$ contain a stable set
  of size at least~$k$'' is NP-complete.
\end{theorem}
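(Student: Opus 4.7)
I will prove NP-hardness by a polynomial reduction from \textsc{Stable Set} on $2$-connected cubic graphs, which is NP-hard; membership in NP is immediate.

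Given such a graph $H$, let $B$ be its $3$-subdivision: replace every edge $uv \in E(H)$ with a path $P_{uv} = u \tp a_{uv} \tp b_{uv} \tp c_{uv} \tp v$ of length $4$. Since every cycle of $H$ of length $\ell$ becomes a cycle of length $4\ell$, the graph $B$ is bipartite; it is also $2$-connected. The paths $P_{uv}$ have pairwise disjoint interiors (they share only vertices of $V(H)$), their interior vertices have degree $2$ in $B$, and because $H$ is cubic their endpoints share no neighbor outside the path, so each $P_{uv}$ is a flat path of $B$ of length $4 \geq 3$. Let $\mathcal{M} = \{P_{uv} : uv \in E(H)\}$ and let $G$ be the extension of $B$ with respect to $\mathcal{M}$; then $G \in \mathcal{C}$ by construction.

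It remains to prove $\alpha(G) = \alpha(H) + |E(H)|$, from which $(H,k) \mapsto (G, k + |E(H)|)$ is the desired polynomial reduction. For each edge $e = uv$ of $H$, let $Q_e = \{u, x_e, y_e, z_e, v\}$ be the extension gadget replacing $P_{uv}$, with edges $ux_e$, $x_e y_e$, $y_e v$, $z_e u$, $z_e x_e$, $z_e y_e$, $z_e v$. Inspecting the four cases according to $S \cap \{u,v\}$ for any stable set $S$ of $G$, the largest possible value of $|S \cap \{x_e, y_e, z_e\}|$ is $0$ if $\{u,v\} \subseteq S$ (each of $x_e, y_e, z_e$ has a neighbor in $S$) and $1$ otherwise (take $y_e$ if only $u \in S$, $x_e$ if only $v \in S$, and any vertex of the triangle $x_e y_e z_e$ if neither is in $S$). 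Writing $S_H = S \cap V(H)$ and $E_S = \{uv \in E(H) : \{u,v\} \subseteq S_H\}$, summing over $e$ gives, for an optimally chosen $S$,
\[
|S| = |S_H| + |E(H)| - |E_S|.
\]
Finally, $\max_{S_H \subseteq V(H)}(|S_H| - |E_S|) = \alpha(H)$: the upper bound follows because removing one endpoint per edge of $E_S$ from $S_H$ yields a stable set of $H$ of size at least $|S_H| - |E_S|$, and equality holds when $S_H$ is a maximum stable set of $H$, for which $E_S = \emptyset$. Hence $\alpha(G) = \alpha(H) + |E(H)|$, completing the reduction.

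The main subtlety is that Poljak's identity $\alpha(G_4) = \alpha(H) + 2|E(H)|$ quoted just before the theorem uses the $4$-subdivision $G_4$, which is bipartite only when $H$ itself is, and so cannot directly serve as the underlying bipartite graph of a member of $\mathcal{C}$. Using the $3$-subdivision circumvents this, and the direct gadget analysis above replaces the need for Poljak's formula. Equivalently, one can observe that extending the flat paths of length $5$ in $G_4$ yields the same graph $G$ as extending the flat paths of length $4$ in $B$ (the gadgets, and everything outside them, are identical in the two constructions), and that each extension lowers $\alpha$ by exactly~$1$ in a uniform way across all four cases, thereby recovering $\alpha(G) = \alpha(G_4) - |E(H)| = \alpha(H) + |E(H)|$.
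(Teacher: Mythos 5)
Your reduction is correct, and its core idea --- subdivide $H$ so that the result is a $2$-connected bipartite base, extend the resulting flat paths, and relate $\alpha(G)$ to $\alpha(H)$ --- is the same as the paper's, but the execution differs in a way worth comparing. The paper subdivides each edge \emph{five} times and extends only the internal length-$4$ flat path $p_1\tp\cdots\tp p_5$ inside each subdivided edge; the apex is then adjacent to none of the original vertices, a maximum stable set may be assumed to avoid all apexes, and deleting them leaves exactly Poljak's $4$-subdivision, so $\alpha(G)=\alpha(H)+2|E(H)|$ follows from the quoted identity with no case analysis. You subdivide three times and extend the whole edge-path $u\tp a\tp b\tp c\tp v$, so the apex is adjacent to $u$ and $v$; this forces the direct gadget count you carry out (which is correct: the contribution of $\{x_e,y_e,z_e\}$ is $1$ unless both ends lie in $S$, and the auxiliary fact $\max(|S_H|-|E_S|)=\alpha(H)$ is handled properly) and yields the smaller offset $\alpha(G)=\alpha(H)+|E(H)|$. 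Your observation that the $4$-subdivision itself cannot serve as the bipartite base is precisely why the paper routes through the $5$-subdivision. You are also more explicit than the paper about starting from $2$-connected instances so that the base graph is $2$-connected (the NP-hardness of stable set on $2$-connected cubic graphs is standard but deserves a citation; cubicity itself plays no role in your argument).

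One point to watch: your flat paths $P_{uv}$ pairwise share their endpoints (the original vertices of $H$), whereas in Section~\ref{sec:extension} the set $\cal M$ of extended flat paths is always required to be \emph{vertex-disjoint}, and the paper's own construction in Section~\ref{sec:npc} achieves this by extending only the interiors of the subdivided edges. The restated definition of $\cal C$ in Section~\ref{sec:npc} omits the disjointness condition, and your gadgets still admit the extreme connected non-path $2$-joins that make $\cal C$ interesting (take $X_1=\{u,x_e,y_e,z_e,v\}$ with $A_1=\{u\}$, $B_1=\{v\}$), so nothing actually breaks; but if disjointness is intended, the fix is to subdivide five times and extend only the internal subpaths, after which your gadget count adapts directly (contributions $2$ versus $1$ in place of $1$ versus $0$) and recovers $\alpha(G)=\alpha(H)+2|E(H)|$.
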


\begin{proof}
  Let $H$ be any graph.  First we subdivide 5 times every edge of~$H$.
  So each edge $ab$ is replaced by $P_7 = a \tp p_1 \tp \cdots \tp p_5
  \tp b$.  The graph $H'$ obtained is bipartite.  Now we build an
  extension $G$ of $H'$ by replacing all the $P_5$'s $p_1 \tp \cdots
  \tp p_5$ arising from the subdivisions in the previous step by
  $P_4$'s.  And for each $P_4$ we add a new vertex complete to it and
  we call \emph{apex vertices} all these new vertices.  The graph $G$
  that we obtain is in $\cal C$.  It is easy to see that there exists
  a maximum stable set of $G$ that contain no apex vertex because an
  apex vertex of a maximum stable set can be replaced by one vertex
  of its neighborhood.  So, we call $G'$ the graph obtained from $G$
  by deleting all the apex vertices and see that $\alpha(G') =
  \alpha(G)$.  Also, $G'$ is the 4-subdivision arising from $H$.  So
  from the remark above, maximum stable sets in $H$ and $G$ have sizes
  that differ by $2|E(H)|$.
\end{proof}

Our NP-completeness result is related to Lemma~\ref{lineblock}.  With
respect to computing maximum stable sets, these two results say in a
sense that gem-blocks carry enough information to encode one side of a
2-join.

\section*{Acknowledgement}

This work started in November 2008 when the first author was visiting
Universidade Federal do Rio de Janeiro, Brasil, during discussions
with Simone Dantas, Sulamita Klein and Celina de Figueiredo.
Originally, the project was to color Berge graphs without balanced
skew partitions.  Celina suggested to add the homogeneous pair-free
assumption and this idea was very important for the success of this
work.

Thanks to Guyslain Naves for showing the NP-hardness result.   This
manuscript also benefited from a very careful reading of an anonymous
referee.

\end{document}